\newtheorem{thm}{Theorem}[section]
\newtheorem{cor}[thm]{Corollary}
\newtheorem{lem}[thm]{Lemma}
\newtheorem{prop}[thm]{Proposition}
\theoremstyle{definition}
\newtheorem{defn}[thm]{Definition}
\newtheorem{ass}[thm]{Assumption}
\theoremstyle{remark}
\newtheorem{rem}[thm]{Remark}
\newtheorem{exa}[thm]{Example}
\numberwithin{equation}{section}
\newcommand{\set}[1]{\left\{#1\right\}}
\newcommand{\Real}{\mathbb R}
\newcommand{\Natural}{\mathbb N}
\newcommand{\eps}{\varepsilon}
\newcommand{\such}{\, | \, }
\newcommand{\prob}{\mathbb{P}}
\newcommand{\Exp}{\mathcal E}
\newcommand{\qprob}{\mathbb{Q}}
\newcommand{\expec}{\mathbb{E}}
\newcommand{\F}{\mathcal{F}}
\newcommand{\B}{\mathcal{B}}
\newcommand{\cadlag}{c\`adl\`ag}
\newcommand{\pare}[1]{\left(#1\right)}
\newcommand{\bra}[1]{\left[#1\right]}
\newcommand{\dbra}[1]{[\kern-0.15em[ #1 ]\kern-0.15em]}
\newcommand{\dbraco}[1]{[\kern-0.15em[ #1 [\kern-0.15em[}
\newcommand{\dbraoc}[1]{]\kern-0.15em] #1 ]\kern-0.15em]}
\newcommand{\indic}{1}
\newcommand{\X}{\mathcal{X}}
\newcommand{\cL}{\mathcal{L}}
\newcommand{\fR}{\mathfrak{R}}
\newcommand{\nada}[1]{}
\newcommand{\tbF}{\widetilde{\mathbf{F}}}
\newcommand{\tF}{\widetilde{\F}}
\newcommand{\tS}{\widetilde{S}}
\newcommand{\tX}{\widetilde{X}}
\newcommand{\tY}{\widetilde{Y}}
\newcommand{\tsmY}{\widetilde{y}}
\title{Abstract, Classic, and Explicit Turnpikes}
\author[]{Paolo Guasoni}
\thanks{Paolo Guasoni is partially supported by the ERC (278295), NSF (DMS-0807994, DMS-1109047), SFI (07/MI/008, 07/SK/M1189, 08/SRC/FMC1389), and FP7 (RG-248896).}
\address[Paolo Guasoni]{School of Mathematical Sciences,
Dublin City University,
Glasnevin, Dublin 9,
Ireland. Department of Mathematics and Statistics,
Boston University,
111 Cummington st,
Boston, MA 02215,
USA}
\email{guasoni@bu.edu}
\author[]{Constantinos Kardaras}
\thanks{Constantinos Kardaras is partially supported by the NSF DMS-0908461.}
\address[Constantinos Kardaras]{Department of Mathematics and Statistics,
Boston University,
111 Cummington st,
Boston, MA 02215,
USA}
\email{kardaras@bu.edu}
\author[]{Scott Robertson}
\address[Scott Robertson]{Department of Mathematical Sciences,
Wean Hall 6113,
Carnegie Mellon University,
Pittsburgh, PA 15213,
USA}
\email{scottrob@andrew.cmu.edu}
\author[]{Hao Xing}
\address[Hao Xing]{Department of Statistics,
London School of Economics and Political Science,
10 Houghton st,
London, WC2A 2AE,
UK}
\email{h.xing@lse.ac.uk}
\begin{document}

\begin{abstract}
Portfolio turnpikes state that, as the investment horizon increases, optimal portfolios for generic utilities converge to those of isoelastic utilities.
This paper proves three kinds of turnpikes.  In a general semimartingale setting, the \emph{abstract} turnpike states that optimal final payoffs and portfolios converge under their myopic probabilities.
%, assuming that marginal utility approaches its reference utility as wealth increases, and does not stray from it at low wealth levels. The market must allow free compounding of wealth processes, and lead to well-posed optimization problems.
In diffusion models with several assets and a single state variable,  the \emph{classic} turnpike demonstrates that optimal portfolios converge under the physical probability; meanwhile the \emph{explicit turnpike} identifies the limit of finite-horizon optimal portfolios as a long-run myopic portfolio defined in terms of the solution of an ergodic HJB equation.
\end{abstract}

%\begin{keyword}[class=AMS]
%\kwd[Primary ]{91G10}
%\kwd{91G80}
%\kwd[; Secondary ]{91G80}
%\end{keyword}

\keywords{Portfolio Choice, Incomplete Markets, Long-Run, Utility Functions, Turnpikes}

\maketitle

\section{Introduction}
In the theory of portfolio choice, ruled by particular and complicated results, turnpike theorems are happy exceptions -- general and simple.
Loosely defined, these theorems state that, when the investment horizon is distant, the optimal portfolio of \emph{any} investor approaches that of an investor with isoelastic utility, suggesting that \emph{for long-term investments, only isoelastic utilities matter}.

This paper proves turnpike theorems in a general framework, which include discrete and continuous time, and nest diffusion models with several assets,  stochastic drifts, volatilities, and interest rates. The paper departs from the existing literature, in which either asset returns are independent over time, or markets are complete. It is precisely when both these assumptions fail that portfolio choice becomes most challenging -- and turnpike theorems are most useful.

Our results have three broad implications. First, turnpike theorems are a powerful tool in portfolio choice, because they apply not only when optimal portfolios are myopic, but also when the \emph{intertemporal hedging} component is present. Finding this component is the central problem of portfolio choice, and the only tractable but non trivial analysis is based on isoelastic utilities, combined with long horizon asymptotics. Turnpike theorems make this analysis relevant for a large class of utility functions, and for large but finite horizons.

Second, we clarify the roles of preferences and market structure for turnpike results. Under regularity conditions on utility functions, we show that an \emph{abstract turnpike} theorem holds regardless of market structure, as long as utility maximization is well posed, and longer horizons lead to higher payoffs.
This abstract turnpike yields the convergence of optimal portfolios to their isoelastic limit under the \emph{myopic} probability $\prob^T$, which changes with the horizon. Market structure becomes crucial to pass from from the abstract to the \emph{classic turnpike} theorem, in which convergence holds under the physical probability $\prob$.

Third, in addition to the classical version, we prove a new kind of result, the \emph{explicit turnpike}, in which the limit portfolio is identified as the \emph{long-run optimal} portfolio, that is the solution to an ergodic Hamilton Jacobi Bellman equation. This result offers the first theoretical basis to the long-standing practice of interpreting solutions of ergodic HJB equations as long-run limits of utility maximization problems\footnote{This interpretation underpins the literature on \emph{risk-sensitive control}, introduced by \citet*{MR1358100}, and applied to optimal portfolio choice by \citeauthor{MR1802598} \citeyearpar{MR1802598,MR1910647}, \citeauthor{MR1882297} \citeyearpar{MR1790132,MR1882297}, \citeauthor{MR1890061} \citeyearpar{MR1890061,MR1882294} among others.}.
 We show that this intuition is indeed correct for a large class of diffusion models, and that its scope includes a broader class of utility functions.

Portfolio turnpikes start with the work of \citet*{mossin1968optimal} on affine risk tolerance ($-U'(x)/U''(x)= a x+b$), which envisions many of the later developments. In his concluding remarks, he writes: \emph{``Do any of these results carry over to arbitrary utility functions? They seem reasonable enough, but the generalization does not appear easy to make. As one usually characterizes those problems one hasn't been able to solve oneself: this is a promising area for future research''.}

\citet*{leland1972turnpike} coins the expression \emph{portfolio turnpike}, extending Mossin's result to larger classes of utilities, followed by  \citet*{ross1974portfolio} and \citet*{hakansson1974convergence}. \citet*{MR736053} prove a necessary and sufficient condition for the turnpike property. As in the previous literature, they consider discrete time models with independent returns.
\citet*{cox1992continuous} prove the first turnpike theorem in continuous time, using contingent claim methods. \citet*{MR1629559} extends their results to include consumption, and \citet*{MR1805320} obtain similar results using viscosity solutions.
\citet*{dybvig1999portfolio} dispose of the assumption of independent returns, proving a turnpike theorem for complete markets in the Brownian filtration, while \citet*{detemple2010dynamic} obtain a portfolio decomposition formula for complete markets, which allows to compute turnpike portfolios in certain models.

In summary, the literature either exploits independent returns, which make dynamic programming attractive, or complete markets, which make martingale methods convenient. Since market completeness and independence of returns have a tenuous relation, neither of these concepts appears to be central to turnpike theorems. We confirm this intuition, by relaxing both assumptions in this paper.

The main results are in section \ref{sec: myopic}, which is divided into two parts. The first part shows the conditions leading to the abstract turnpike, whereby optimal final payoffs and portfolios converge under the myopic probabilities $\prob^T$. Regarding preferences (Assumption \ref{ass: utility}), we require a marginal utility that is asymptotically isoelastic as wealth increases \eqref{ass: conv}, and a well-posed utility maximization problem.
%and that does not stray from its reference at low wealth levels (\eqref{ass: R lb 0} and \eqref{ass: R ub 0}). Market structure remains irrelevant, beyond the basic conditions that utility maximization is well-posed (Assumption \ref{ass: wellpose}), and that wealth processes can be freely compounded (Assumption \ref{ass: wealth proc}).

The second part of section \ref{sec: myopic} states the classic and explicit turnpike theorems for a class of diffusion models with several assets, but with a single state variable driving expected returns, volatilities and interest rates. Under further well-posedness assumptions, we show a classic turnpike theorem, in which optimal portfolios of generic utility functions converge to their isoelastic counterparts. The same machinery leads to the explicit turnpike, in which optimal portfolios for a generic utility and a finite horizon converge to the long-run optimal portfolio, that is the solution of an ergodic HJB equation.
We conclude section 2 with an application to target-date retirement funds, which shows that a fund manager who tries to maximize the weighted welfare of participants -- like a social planner -- tends to act on behalf of the least risk averse investors.

Section 3 contains the proofs of the abstract turnpike, while the classic and the explicit turnpike for diffusions are proved in section 4. The first part of section 3 proves the convergence of the ratio of final payoffs, while the second part derives the convergence of wealth processes.
Section 4 studies the properties of the long-run measure and the value function, and continues with the convergence of densities and wealth processes, from which the classic and explicit turnpikes follow.

In conclusion, this paper shows that turnpike theorems are an useful tool to make portfolio choice tractable, even in the most intractable setting of incomplete markets combined with stochastic investment opportunities. Still, these results are likely to admit extensions to more general settings, like diffusions with multiple state variables, and processes with jumps. As gracefully put by Mossin, this is a promising area for future research.

\section{Main Results}\label{sec: myopic}

This section contains the statements of the main results and their implications. The first subsection states an abstract version of the turnpike theorem, which focuses on payoff spaces and wealth processes, without explicit reference to the structure of the underlying market. In this setting, asymptotic conditions on the utility functions and on wealth growth imply that, as the horizon increases, optimal wealths and optimal portfolios converge to their isoelastic counterparts.

The defining feature of the abstract turnpike is that convergence takes place under a family of \emph{myopic} probability measures that change with the horizon. By contrast, in the \emph{classic} turnpike the convergence holds under the physical probability measure.
Thus, passing from the abstract to the classic turnpike theorem requires the convergence of the myopic probabilities, which in turn commands additional assumptions.
The second subsection achieves this task for a class of diffusion models with several risky assets, and with a single state variable driving investment opportunities. This class nests several models in the literature, and allows for return predictability, stochastic volatility, and stochastic interest rates.

The \emph{explicit} turnpike -- stated at the end of the second subsection -- holds for the same class of diffusion models. While in the abstract and classical turnpikes the benchmark is the optimal portfolio for isoelastic utility, but with the same finite horizon, in the explicit turnpike the benchmark is the long-run optimal portfolio, that is the optimal portfolio for asymptotic expected utility.

\subsection{The Abstract Turnpike}

Consider two investors, one with Constant Relative Risk Aversion (henceforth CRRA) equal to $1-p$ (i.e. power utility $x^p/p$ for $0\ne p<1$ or logarithmic utility $\log x$ for $p=0$), the other with a generic utility function $U: \Real_+ \rightarrow \Real$. The marginal utility ratio $\fR(x)$ measures how close $U$ is to the reference utility:
\begin{equation}\label{eq: marginal ratio function}
 \fR(x):= \frac{U'(x)}{x^{p-1}}, \quad x>0.
\end{equation}

\begin{ass}\label{ass: utility}
The utility function $U: \Real_+ \rightarrow \Real$ is continuously differentiable, strictly increasing, strictly concave, and satisfies the Inada conditions $U'(0)=\infty$ and $U'(\infty)=0$. The marginal utility ratio satisfies:
\begin{equation}\label{ass: conv}
\lim_{x \uparrow \infty} \fR(x) =1. \tag{CONV}
\end{equation}
\end{ass}
Condition \eqref{ass: conv} means that investors have similar marginal utilities when wealth is high, and is the basic assumption on \emph{preferences} for turnpike theorems \citep{dybvig1999portfolio,MR1805320}.

Both investors trade in a frictionless market with one safe and $d$ risky assets. Consider a filtered probability space $(\Omega, (\F_t)_{t\in[0,T]}, \F, \prob)$, where $(\F_t)_{t\ge 0}$ is a right-continuous filtration. The safe asset, denoted by $(S^0_t)_{t\ge 0}$ and the risky assets $(S^i_t)_{t\ge 0}^{1\le i\le d}$ satisfy the following:
\begin{ass}\label{ass: safe asset}
$S^0$ has RCLL (right-continuous, left-limited) paths, and there exist two deterministic functions $\underline S^0, \overline S^0 : \Real_+ \mapsto \Real_+$, such that
$0<\underline S^0_t\le S^0_t \le \overline S^0_t$ for all $t>0$ and
\begin{equation}\label{ass: growth}
  \lim_{T\rightarrow \infty} \underline{S}^0_T = \infty. \tag{GROWTH}
\end{equation}
\end{ass}
This condition means that growth continues over time, and is the main \emph{market} assumption in the turnpike literature. It implies, that the riskless discount factor declines to zero in the long run.
Now, denote the discounted prices of risky assets by $\widetilde{S}^i = S^i / S^0$ for $i=1, \ldots, d$, and set $\widetilde{S} = (\widetilde{S}^1_t)^{1\le i\le d}_{t\ge 0}$. The following assumption is equivalent to the absence of arbitrage, in the sense of No Free Lunch with Vanishing Risk \citep{DS93,DS98}. In particular, up to a null set, $S = (S^i_t)_{t\ge 0}^{1\le i\le d}$ is a $\Real^d$-valued semimartingale with RCLL paths.
\begin{ass}\label{ass: NFLVR}
For all $T \in \Real_+$, there exists a probability $\qprob^T$ that is equivalent to $\prob$ on $\F_T$ and such that $\widetilde{S}$ is a (vector) sigma-martingale.
\end{ass}

Starting from unit capital, each investor trades with some admissible strategy $H$, that is a $S$-integrable and $\F$-predictable $\Real^d$-valued process, such that $\widetilde{X}_t^H:= 1 + \int_0^t H_u \,d \widetilde{S}_u \ge 0$ $\prob$-a.s. for all $t\ge 0$.
Denote a wealth process by $X^H = S^0 \widetilde{X}^H$, and their class by $\mathcal{X}^T:= \{X^H: H \text{ is } T\text{-admissible}\}$.

Both investors seek to maximize the expected utility of their terminal wealth at some time horizon $T$. Using the index $0$ for the CRRA investor, and $1$ for the generic investor, their optimization problems are:
\begin{equation}\label{def: value u}
u^{0,T} = \sup_{X \in \X^T} \expec^{\prob}\bra{X_T^p/p},
\qquad
u^{1, T}= \sup_{X \in \X^T} \expec^{\prob}\bra{U\pare{X_T}}.
\end{equation}
The next assumption requires that these problems are well-posed. It holds under the simple criteria in \citet*[Remark 8]{karatzas.zitkovic.03}.
\begin{ass}\label{ass: wellpose}
For all $T>0$ and $i=0,1$, $u^{i,T}<\infty$.
\end{ass}
\cite{karatzas.zitkovic.03} show that, under Assumptions \ref{ass: utility}-\ref{ass: wellpose}, the optimal wealth processes $X^{i,T}$ exist for $i=0, 1$ and any $T\geq 0$. In addition, $u^{i,T}>-\infty$, because both investors can invest all their wealth in $S^0$ alone, and $S^0_T$ is bounded away from zero by a constant.

The central objects in the abstract turnpike theorem are the ratio of optimal wealth processes and their stochastic logarithms
\begin{equation}\label{def: ratio log}
r^T_{u} := \frac{X^{1,T}_u}{X^{0,T}_u}, \qquad
\Pi^T _u:= \int_0^{u} \frac{dr^T_v}{r^T_{v-}}, \quad \text{ for } u \in [0,T],
\end{equation}
and are well-defined by Remark~\ref{rem: long run measure} below. Moreover, $r_0^T=1$ since both investors have the same initial capital. Define also the \emph{myopic probabilities} $(\prob^T)_{T\geq 0}$ by:
\begin{equation}\label{def: p^T}
 \frac{d\prob^T}{d\prob} = \frac{\pare{X^{0,T}_T}^p}{\expec^{\prob}\bra{\pare{X^{0,T}_T}^p}}.
\end{equation}
The above densities are well-defined and strictly positive (cf. Assumption \ref{ass: wellpose} and Remark~\ref{rem: long run measure}), and $\prob^T=\prob$ in the logarithmic case $p=0$.
These myopic probabilities are interpreted as follows: an investor with relative risk aversion $1-p$ under the probability $\prob$ selects the same optimal payoff as another investor under the probability $\prob^T$, but with logarithmic utility, that is with unit risk aversion\footnote{These probabilities already appear in the work of \cite{Kramkov-Sirbu-06a,Kramkov-Sirbu-06b,Kramkov-Sirbu-07} under the name of $\bf R$.}. %See Remark~\ref{rem: change measure} below for more details.

With the above definitions, the abstract version of the turnpike theorem reads as follows:
\begin{thm}[Abstract Turnpike]\label{thm: opt-port conv}
Let Assumptions \ref{ass: utility}--\ref{ass: wellpose} hold.
% Let \eqref{ass: R ub 0}, \eqref{ass: conv}, and \eqref{ass: growth} hold. Moreover \eqref{ass: R lb 0} is satisfied for $p<1$ with $p\neq 0$.
Then, for any $\epsilon>0$,
 \begin{enumerate}
  \item[a)] $\lim_{T\to \infty} \prob^T \pare{\sup_{u\in [0,T]} \left|r^T_u - 1\right| \geq \epsilon} =0$,
  \item[b)] $\lim_{T\to \infty} \prob^T \pare{\bra{\Pi^T, \Pi^T}_T \geq \epsilon} =0$, where $\bra{\cdot,\cdot}$ denotes the square bracket of semimartingales.
 \end{enumerate}
\end{thm}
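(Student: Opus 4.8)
The plan is to route everything through the single process $r^T$, regarded as a strictly positive $\prob^T$-supermartingale, and to control it via the convergence of two $\prob^T$-moments of its terminal value $r^T_T$. \textbf{Step 0 (structure).} Duality for power utility (cf.\ Remark~\ref{rem: long run measure}) provides a supermartingale deflator $Y^T$, normalised by $Y^T_0=1$, with $Y^T_T=a_T^{-1}(X^{0,T}_T)^{p-1}$, where $a_T:=\expec^\prob[(X^{0,T}_T)^p]=p\,u^{0,T}>0$, and such that $X^{0,T}Y^T$ is a true $\prob$-martingale equal to the density process $\ud\prob^T/\ud\prob|_{\F_u}=X^{0,T}_uY^T_u$. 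Bayes' rule together with the $\prob$-supermartingale property of $X^HY^T$ then shows that $X^H/X^{0,T}$ is a $\prob^T$-supermartingale for every admissible $X^H$ with $X^H_0=1$; in particular $r^T$ is a strictly positive $\prob^T$-supermartingale with $r^T_0=1$, so $\expec^{\prob^T}[r^T_T]\le 1$, $\log r^T$ is a $\prob^T$-supermartingale, and $\Pi^T=N^T-B^T$ with $N^T$ a $\prob^T$-local martingale and $B^T$ predictable nondecreasing, $N^T_0=B^T_0=0$. Applying this with $X^H=S^0$ (holding the safe asset) gives $\expec^{\prob^T}[\underline{S}^0_T/X^{0,T}_T]\le\expec^{\prob^T}[S^0_T/X^{0,T}_T]\le 1$, so by \eqref{ass: growth} $\expec^{\prob^T}[1/X^{0,T}_T]\to 0$; hence $X^{0,T}_T\to\infty$ in $\prob^T$-probability, and (once $r^T_T\to 1$ is known) also $X^{1,T}_T\to\infty$, whence $\fR(X^{i,T}_T)\to 1$ in $\prob^T$-probability for $i=0,1$ by \eqref{ass: conv}.

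\textbf{Step 1 (terminal moments; the crux).} I would prove $\expec^{\prob^T}[(r^T_T)^p]\to 1$ and, for $p\neq 0$, $\expec^{\prob^T}[(r^T_T)^{p-1}]\to 1$ (for $p=0$ the first statement reads $\expec^\prob[\log r^T_T]\to 0$). Writing $\psi:=U-(\cdot)^p/p$ (resp.\ $U-\log$), optimality gives $u^{1,T}\ge\expec^\prob[U(X^{0,T}_T)]$ and $u^{0,T}\ge\expec^\prob[(X^{1,T}_T)^p/p]$, which rearrange into
\[
0\ \le\ \tfrac1p\big(1-\expec^{\prob^T}[(r^T_T)^p]\big)\ \le\ \tfrac1{a_T}\big(\expec^\prob[\psi(X^{1,T}_T)]-\expec^\prob[\psi(X^{0,T}_T)]\big),
\]
so it suffices to show the right-hand side tends to $0$. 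Since $\psi'(x)=(\fR(x)-1)x^{p-1}$, \eqref{ass: conv} makes $\psi'$ negligible compared with $x^{p-1}$ at infinity, while Step 0 shows both $X^{0,T}_T$ and $X^{1,T}_T$ are large with $\prob^T$-probability tending to one; one bounds $|\psi(X^{1,T}_T)-\psi(X^{0,T}_T)|$ by $\sup_{t\ge X^{0,T}_T\wedge X^{1,T}_T}|\fR(t)-1|\cdot|(X^{1,T}_T)^p-(X^{0,T}_T)^p|/|p|$, splits off the $\prob^T$-small region where the wealths are not large, and uses \eqref{ass: growth} together with the well-posedness Assumption~\ref{ass: wellpose} to make the contribution of that region to $(1/a_T)\expec^\prob[\,\cdot\,(X^{0,T}_T)^p]$ negligible. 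This tail estimate, and its dual analogue, is the step I expect to be the main obstacle: \eqref{ass: conv} only controls $\fR$ near $+\infty$, so everything hinges on showing the region where wealth is not large contributes nothing at the scale $a_T$. The moment of $(r^T_T)^{p-1}$ comes from the budget inequality $\expec^\prob[X^{0,T}_T\tY^T_T]\le\expec^\prob[X^{1,T}_T\tY^T_T]$ for the dual optimiser $\tY^T$ of the generic problem; since $\tY^T_T\propto U'(X^{1,T}_T)$ this reads $\expec^{\prob^T}[\fR(X^{1,T}_T)((r^T_T)^p-(r^T_T)^{p-1})]\ge 0$, and combining it with $\fR(X^{i,T}_T)\to 1$, $\expec^{\prob^T}[(r^T_T)^p]\to 1$ and Fatou yields $\expec^{\prob^T}[(r^T_T)^{p-1}]\to 1$. \textbf{Step 2 (upgrade to $L^1(\prob^T)$ and the logarithm).} The tangent line of $x\mapsto x^p$ at $1$ and $\expec^{\prob^T}[r^T_T]\le 1$ turn $\expec^{\prob^T}[(r^T_T)^p]\to 1$ into $\expec^{\prob^T}[g(r^T_T)]\to 0$ for a nonnegative, continuous, coercive $g$ vanishing only at $1$ (take $g(x)=x-1-\log x$ when $p=0$); by Markov, $r^T_T\to 1$ in $\prob^T$-probability and $\expec^{\prob^T}[r^T_T]\to 1$, and Scheffé's lemma gives $r^T_T\to 1$, and with Step 1 also $(r^T_T)^p\to 1$ and $(r^T_T)^{p-1}\to 1$, in $L^1(\prob^T)$. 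Finally, on $\{r^T_T<1\}$ one has $-\log r^T_T\le\tfrac1{|1-p|}|(r^T_T)^{p-1}-1|$ if $0<p<1$ and $-\log r^T_T\le\tfrac1{|p|}|(r^T_T)^p-1|$ if $p<0$, while $(\log r^T_T)^+\le|r^T_T-1|$; hence $\expec^{\prob^T}[|\log r^T_T|]\to 0$.

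\textbf{Step 3 (part a).} Now $r^T$ is a strictly positive $\prob^T$-supermartingale, $r^T_0=1$, with $r^T_T\to 1$ in $L^1(\prob^T)$; writing $r^T=1+M^T-A^T$ ($M^T$ a $\prob^T$-local martingale, $A^T$ predictable nondecreasing), $\expec^{\prob^T}[A^T_T]=1-\expec^{\prob^T}[r^T_T]\to 0$. The identity $r^T_u-1=\expec^{\prob^T}[r^T_T-1\mid\F_u]+\expec^{\prob^T}[A^T_T-A^T_u\mid\F_u]$ gives $|r^T_u-1|\le\expec^{\prob^T}[|r^T_T-1|\mid\F_u]+\expec^{\prob^T}[A^T_T\mid\F_u]$, and Doob's $L^1$ maximal inequality yields $\prob^T\big(\sup_{u\le T}|r^T_u-1|\ge\epsilon\big)\le(2/\epsilon)\big(\expec^{\prob^T}[|r^T_T-1|]+\expec^{\prob^T}[A^T_T]\big)\to 0$. (Integrability of $A^T$ and the martingale property of $M^T$ are obtained by localising, using that $r^T$ is nonnegative; alternatively one estimates $\prob^T(\sup r^T\ge 1+\epsilon)$ and $\prob^T(\inf r^T\le 1-\epsilon)$ directly by optional stopping.)

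\textbf{Step 4 (part b).} Fix $\eta\in(0,1)$, set $\delta:=2\eta/(1-\eta)$ and $\sigma^T:=\inf\{u:|r^T_u-1|>\eta\}$; by Step 3, $\prob^T(\sigma^T\le T)\to 0$. On $\{\sigma^T>T\}$ every jump of $\Pi^T$ on $[0,T]$ satisfies $|\Delta\Pi^T|\le\delta$ and $r^T_{-}\ge 1-\eta$, so from $\log r^T=\Pi^T-\tfrac12[\Pi^{T,c},\Pi^{T,c}]+\sum\big(\log(1+\Delta\Pi^T)-\Delta\Pi^T\big)$ and $x-\log(1+x)\ge x^2/\big(2(1+\delta)\big)$ for $|x|\le\delta$,
\[
[\Pi^T,\Pi^T]_T\ \le\ 2(1+\delta)\,\big(\Pi^T_T-\log r^T_T\big),
\]
while $\Pi^T_T-\log r^T_T=\tfrac12[\Pi^{T,c},\Pi^{T,c}]_T+\sum\big(\Delta\Pi^T-\log(1+\Delta\Pi^T)\big)\ge 0$ in general. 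Localising $N^T$ and using $\expec^{\prob^T}[N^T_{T\wedge\tau_n}]=0$, $B^T\ge 0$, optional sampling for the $\prob^T$-supermartingale $\log r^T$ ($\expec^{\prob^T}[\log r^T_{T\wedge\tau_n}]\ge\expec^{\prob^T}[\log r^T_T]$) and monotone convergence, one gets $\expec^{\prob^T}[\Pi^T_T-\log r^T_T]\le-\expec^{\prob^T}[\log r^T_T]\to 0$ by Step 2. Hence, by Markov,
\[
\prob^T\big([\Pi^T,\Pi^T]_T\ge\epsilon\big)\ \le\ \prob^T(\sigma^T\le T)+\tfrac{2(1+\delta)}{\epsilon}\,\expec^{\prob^T}\big[\Pi^T_T-\log r^T_T\big]\ \longrightarrow\ 0 .
\]
The whole argument is clean once Step 1 is in place; that tail estimate (comparing $\psi$ — equivalently $\fR$ — at $X^{0,T}_T$ and $X^{1,T}_T$ on the scale $a_T$, using \eqref{ass: conv}, \eqref{ass: growth} and well-posedness) is the genuinely delicate point, the remaining difficulties being only standard localisation/integrability bookkeeping for the local martingales and the nonnegative supermartingale $r^T$.
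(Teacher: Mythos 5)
Your Steps 3 and 4 (the passage from $L^1(\prob^T)$ convergence of $r^T_T$ to uniform and quadratic-variation control) are a correct, self-contained version of what the paper outsources to Kardaras' Theorem~2.5 and Remark~2.6, so no issue there. The genuine gap is in Step~1, and it is not merely a ``delicate point'' that a more careful write-up would fill; the route you chose runs into a circularity and a tail problem that the paper's argument is specifically designed to avoid.

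First, the circularity. In Step~0 you correctly observe that $X^{1,T}_T\to\infty$ in $\prob^T$-probability only follows \emph{once} $r^T_T\to 1$ is known. But Step~1 then uses ``Step~0 shows both $X^{0,T}_T$ and $X^{1,T}_T$ are large with $\prob^T$-probability tending to one'' to make $\sup_{t}|\fR(t)-1|$ small outside a $\prob^T$-negligible region. You cannot appeal to $X^{1,T}_T\to\infty$ before you have $r^T_T\to 1$. The paper breaks this circle by first proving $\liminf_T y^{0,T}/y^{1,T}\ge 1$ (Lemma~\ref{lemma: ratio lagrangian}, a case-by-case argument over $p=0$, $p\in(0,1)$, $p<0$ using \eqref{eq: conv I}, duality, and \eqref{ass: growth}), and then deriving $X^{1,T}_T\to\infty$ directly from $2\ge\expec^{\prob^T}\bigl[U'(X^{1,T}_T)/(X^{0,T}_T)^{p-1}\bigr]$ in Lemma~\ref{lemma: asy RX1}. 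Without something playing the role of that Lagrangian-ratio lemma, your Step~1 does not get off the ground.

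Second, even granting $X^{1,T}_T\to\infty$, the value-gap inequality you propose does not close. Your bound reads
\[
\tfrac{1}{|p|}\,\expec^{\prob^T}\Bigl[\,\sup_{t\in[X^{0,T}_T\wedge X^{1,T}_T,\,X^{0,T}_T\vee X^{1,T}_T]}\bigl|\fR(t)-1\bigr|\;\bigl|(r^T_T)^p-1\bigr|\Bigr]\longrightarrow 0,
\]
and you split into the region where both wealths exceed $M$ and its complement. On the good region $\sup|\fR-1|\le\eps_M$, but one still needs $\expec^{\prob^T}\bigl[|(r^T_T)^p-1|\bigr]$ bounded; for $p\in(0,1)$ Jensen gives $\expec^{\prob^T}[(r^T_T)^p]\le 1$, but for $p<0$ no a priori bound on $\expec^{\prob^T}[(r^T_T)^p]$ is available --- in fact $\expec^{\prob^T}[(r^T_T)^p]\ge 1$ is all you know, and the whole point of Step~1 is to show it is also $\le 1+o(1)$, so using boundedness of $\expec^{\prob^T}[(r^T_T)^p]$ is circular. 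On the bad region ($X^{1,T}_T\le M$, say), $\sup|\fR-1|$ is only bounded by a constant while $(r^T_T)^p$ can be arbitrarily large (for $p<0$, $r^T_T$ can be arbitrarily small), so $\expec^{\prob^T}\bigl[(r^T_T)^p\indic_{\{\text{bad}\}}\bigr]$ is uncontrolled. No amount of \eqref{ass: growth} or well-posedness gives you a $\prob^T$-moment bound on $(r^T_T)^p$ here.

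The paper avoids both problems by not targeting the moments $\expec^{\prob^T}[(r^T_T)^p]$, $\expec^{\prob^T}[(r^T_T)^{p-1}]$ at all. It adds the two first-order conditions to obtain the single inequality
\[
\expec^{\prob^T}\bigl[\bigl(1-\fR_T\,r_T^{p-1}\bigr)(r_T-1)\bigr]\le 0,
\]
whose integrand is nonpositive \emph{exactly} when $r_T$ is between $1$ and $\fR_T^{1/(1-p)}$; since $\fR_T\to 1$ in $\prob^T$-probability, that set pinches $r_T$ to $1$, and Lemma~\ref{lemma: r_T-1 est} extracts $\expec^{\prob^T}\bigl[|1-\fR_T r_T^{p-1}|\,|r_T-1|\bigr]\to 0$ with no moment of $(r^T_T)^p$ required. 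The pointwise sign structure is what replaces the tail estimate you were hoping to make. You do use the FOC for the dual side ($(r^T_T)^{p-1}$-moment), but not for the primal, and it is precisely the primal FOC combined with the dual one --- summed, not used separately --- that makes the argument work. Until Step~1 is replaced by an argument of this type, the proposal has a genuine hole.
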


\begin{rem}\label{rem: finite time}
\noindent
\begin{enumerate}
\item[i)]
Since $\prob^T\equiv \prob$ for $p=0$, convergence holds under $\prob$ in the case of logarithmic utility. In particular, the convergence holds on the entire time horizon $[0,T]$. Contrast this to the turnpike results for $p\neq 0$, in which convergence holds on a  time window $[0, t]$ for some fixed $t>0$.

\item[ii)]
Consider a market with the discounted asset prices
\[
 \frac{d \widetilde{S}^j_u}{\widetilde{S}^j_u} = \mu_u^j du + \sum_{k=1}^n \sigma_u^{jk} dW^k_u, \quad j=1, \cdots, d,
\]
where $\mu_u = \in \Real^d$, $\sigma_u \in \Real^{d\times n}$ for $t\geq 0$ and $W=(W^1, \cdots, W^n)'$ is a $\Real^n$-valued Brownian motion. The discounted optimal wealth processes satisfy
\[
 d \widetilde{X}^{i,T}_u = \widetilde{X}^{i,T}_u (\pi^{i,T}_u)' (\mu_u du + \sigma_u dW_u), \quad i=0,1,
\]
where $(\pi^{i,T})^{1\leq j\leq d}_{u\geq 0}$ represents the proportions of wealth invested in each risky asset.
In this case,  $[\Pi^T, \Pi^T]$ measures the square distance, weighted by
$\Sigma = \sigma\sigma'$, between the portfolios $\pi^{1,T}$ and $\pi^{0,T}$:
 \[
  \bra{\Pi^T, \Pi^T}_\cdot = \int_0^{\cdot} \pare{\pi^{1,T}_u-\pi^{0,T}_u}'\Sigma_u\pare{\pi^{1,T}_u-\pi^{0,T}_u}du.
 \]

\item[iii)]
The theorem implies that both optimal wealth processes and portfolios are close in any time window $[0, t]$ for any fixed $t>0$, under the probability $\prob^T$. Indeed, for any $\epsilon,t>0$:
\[
\lim_{T\to \infty} \prob^T\pare{\sup_{u\in [0,t]} \left|r^T_u - 1\right| \geq \epsilon} =0
\qquad\text{and}\qquad
\lim_{T\to \infty} \prob^T \pare{\bra{\Pi^T, \Pi^T}_t \geq \epsilon} =0.
\]
\end{enumerate}
\end{rem}

Except for logarithmic utility, Theorem \ref{thm: opt-port conv} is not a classic turnpike theorem, in that convergence holds under the probabilities $\prob^T$, which change with the horizon $T$.
However, since the events $\{\sup_{u\in[0,t]} |r^T_u-1| \geq \epsilon\}$ and
$\{[\Pi^T, \Pi^T]_t \geq \epsilon\}$ are $\F_t$-measurable, and any such event
$A$ satisfies $\prob^T(A)=\expec^\prob\bra{\indic_{A}\ d\prob^T/d\prob\big|_{\F_t}}$,  the relation between $\prob^T(A)$ and $\prob(A)$ depends on the (projected) density:
\begin{equation}\label{def: cond density}
\left.\frac{d\prob^T}{d\prob}\right|_{\F_t} = \frac{\expec^{\prob}_t\bra{\pare{X^{0,T}_T}^p}}{\expec^{\prob}\bra{\pare{X^{0,T}_T}^p}}.
\end{equation}
Understanding the convergence of these densities is the crucial step to bridge the gap from the abstract to the classic version of the turnpike theorem.

In fact, the densities in \eqref{def: cond density} become trivial under two additional assumptions: that the optimal CRRA strategy is myopic, and that its wealth process has independent returns. Under these assumptions, which hold in all the turnpike literature, with the exception of \cite[Theorem 1]{dybvig1999portfolio}, the density ${d\prob^T}/{d\prob}\big|_{\F_t}$ is independent of $T$, and the classic turnpike theorem follows:

\begin{cor}[IID Myopic Turnpike]\label{thm: power mayopic}
If, in addition to Assumptions \ref{ass: utility} --- \ref{ass: wellpose},
\begin{enumerate}
\item
$X^{0,T}_t=X^{0,S}_t\equiv X_t$ $\prob$-a.s. for all $t\le S,T$ (myopic optimality);

\item
$X_s/X_t$ and $\F_t$ are independent under $\prob$ for all $t\le s$ (independent returns).
\end{enumerate}
then, for any $\epsilon, t>0$:
 \begin{enumerate}
  \item[a)] $\lim_{T\to \infty} \prob \pare{\sup_{u\in [0,t]} \left|r^T_u - 1\right| \geq \epsilon} =0$,
  \item[b)] $\lim_{T\to \infty} \prob \pare{\bra{\Pi^T, \Pi^T}_t \geq \epsilon} =0$.
 \end{enumerate}
\end{cor}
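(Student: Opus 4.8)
The plan is to deduce the corollary from the abstract turnpike (Theorem~\ref{thm: opt-port conv}), in the form of Remark~\ref{rem: finite time}(iii), by showing that the two extra hypotheses force the projected density $d\prob^T/d\prob|_{\F_t}$ of \eqref{def: cond density} to be \emph{independent of the horizon} $T$ (for $T\ge t$). Once this is in place, convergence in $\prob^T$-probability of the relevant $\F_t$-measurable events transfers to convergence in $\prob$-probability, which is exactly a) and b).

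First I would fix $t>0$ and $T\ge t$ and simplify \eqref{def: cond density}. Myopic optimality furnishes a single process $(X_u)_{u\ge 0}$ with $X^{0,T}_\cdot=X_\cdot$ on $[0,T]$; in particular $X^{0,T}_T=X_T$, so \eqref{def: cond density} becomes $\expec^\prob_t[X_T^p]/\expec^\prob[X_T^p]$. Next I would invoke the independent-returns hypothesis with $s=T$: $X_T/X_t$ is independent of $\F_t$, hence of the $\F_t$-measurable variable $X_t^p$. Factoring $X_T^p=X_t^p\,(X_T/X_t)^p$, pulling $X_t^p$ out of the conditional expectation in the numerator and using independence in numerator and denominator, the common factor $\expec^\prob[(X_T/X_t)^p]$ cancels, leaving
\[
\left.\frac{d\prob^T}{d\prob}\right|_{\F_t}=\frac{X_t^p}{\expec^\prob[X_t^p]}=:Z_t,
\]
a strictly positive, $\prob$-integrable, $\F_t$-measurable random variable that does not depend on $T$ (and equals $1$ when $p=0$, consistent with Remark~\ref{rem: finite time}(i)). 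A minor point to check along the way is that $\expec^\prob[X_t^p]$ and $\expec^\prob[(X_T/X_t)^p]$ are finite and nonzero: finiteness of $\expec^\prob[X_T^p]$ follows from well-posedness (Assumption~\ref{ass: wellpose}), and then the factorization forces both partial expectations into $(0,\infty)$.

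With $Z_t$ at hand, let $A_T$ denote either $\F_t$-measurable event $\{\sup_{u\in[0,t]}|r^T_u-1|\ge\epsilon\}$ or $\{[\Pi^T,\Pi^T]_t\ge\epsilon\}$. Theorem~\ref{thm: opt-port conv} together with Remark~\ref{rem: finite time}(iii) gives $\prob^T(A_T)=\expec^\prob[\indic_{A_T}Z_t]\to 0$. The remaining task is to conclude $\prob(A_T)\to 0$, and this is the only step requiring a little care: equivalence of $\prob^T$ and $\prob$ on $\F_t$ alone would not suffice, since $A_T$ varies with $T$; what saves the day is that $Z_t$ is a \emph{fixed} integrable weight, so absolute continuity of the integral applies. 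Concretely, for every $\eta>0$ one has $\prob(A_T)\le \eta^{-1}\expec^\prob[\indic_{A_T}Z_t]+\prob(Z_t\le\eta)$; letting $T\to\infty$ and then $\eta\downarrow 0$ (using $\prob(Z_t=0)=0$) yields $\prob(A_T)\to 0$. Thus the bulk of the argument is the density computation, and the main — rather mild — obstacle is this final transfer of convergence from $\prob^T$ to $\prob$, which rests precisely on the $T$-independence of $Z_t$.
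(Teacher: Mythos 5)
Your proposal is correct and follows essentially the same route as the paper: show that the projected density $d\prob^T/d\prob|_{\F_t}$ is independent of the horizon $T$ (you additionally identify its closed form $X_t^p/\expec^\prob[X_t^p]$, while the paper only shows $d\prob^S/d\prob|_{\F_t}=d\prob^T/d\prob|_{\F_t}$ for $t\le T\le S$), then transfer convergence from $\prob^T$ to $\prob$ using that this fixed, strictly positive, integrable weight defines a probability equivalent to $\prob$ on $\F_t$. Your explicit $\eta$-cutoff argument for the transfer step is a valid (and slightly more detailed) version of the paper's appeal to equivalence of measures.
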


In practice, if asset prices have independent returns, the optimal strategy for a CRRA investor entails a myopic portfolio with independent returns, and both conditions above hold. This is the case, for example, if asset prices follow exponential L\'{e}vy processes, as in \cite{Kallsen}.
Note however, that a myopic CRRA portfolio is not sufficient to ensure that $\prob^T$ is independent of $T$ (cf. Example~\ref{exa: zero cor} below).

Thus, the abstract turnpike readily yields a classic turnpike theorem under additional assumptions in Corollary \ref{thm: power mayopic}. However, even though these assumptions are common in the literature, they exclude  models in which portfolio choice is least tractable, and turnpike results are needed the most.
The next section proves classical and explicit turnpikes for diffusion models in which returns may not be independent, and the market may be incomplete.
%To the best of our knowledge, the literature does not contain previous results of this kind.

\subsection{A Turnpike for Diffusions}\label{subsec: turnpike for diffusion}
This subsection states the classic turnpike theorem for a class of diffusion
models, in which a single state variable drives investment opportunities.
The state variable takes values in some interval $E=(\alpha,
\beta)\subseteq\Real$, with $-\infty \leq \alpha < \beta \leq \infty$, and
has the dynamics
\begin{equation}\label{eq: state sde}
 dY_t = b(Y_t) \, dt + a(Y_t) \, dW_t.
\end{equation}
The market includes a safe rate $r(Y_t)$ and $d$ risky assets, with prices $S^i_t$ satisfying
\[
 \frac{dS_t^i}{S_t^i} = r(Y_t)\, dt + dR_t^i, \quad 1\leq i \leq d,
\]
where the cumulative excess return process $R = (R^1, \cdots, R^d)'$ is defined as
\begin{equation}\label{eq: excess return}
 dR_t^i = \mu_i(Y_t) \, dt + \sum_{j=1}^d\sigma_{ij}(Y_t)\, dZ^j_t, \quad 1\leq i \leq d.
\end{equation}
Here $W$ and $Z = (Z^1, \cdots, Z^d)'$ are Brownian motions with correlations $\rho = (\rho^1, \cdots, \rho^d)'$, i.e. $d\langle
Z^i, W \rangle_t =  \rho^i(Y_t) \,dt$ for $1\leq i \leq d$. (The prime sign
is for matrix transposition.)

Denote by $\Sigma = \sigma \sigma'$, $A=a^2$, and $\Upsilon = \sigma \rho
a$. The first assumption on the model's coefficients concerns regularity and
non-degeneracy. Recall that for $\gamma \in (0,1]$ and an integer $k$, a function $f:E\mapsto\Real$ is \textsl{locally $C^{k,\gamma}$ on $E$} if for all bounded, open, connected $D\subset E$ such that $\bar{D}\subset E$ it follows that $f\in C^{k,\gamma}(\bar{D})$ (see \citet*[Chapter 5.1]{MR1625845} for a definition of the H\"{o}lder space $C^{k,\gamma}$). For integers $n, m$,
$C^{k,\gamma}(E, \Real^{n\times m})$ is the set of all $n\times m$ matrix-valued $f$ for which each component $f_{ij}$
is locally $C^{k,\gamma}$ on $E$. Write $\Real = \Real^{1\times 1}$ and
$\Real^n = \Real^{n\times 1}$. With this notation, assume:
\begin{ass}\label{ass: regular coeffs}
 $r\in C^{\gamma}(E, \Real)$, $b\in C^{1,\gamma}(E, \Real)$, $\mu \in C^{1,
   \gamma}(E, \Real^d)$, $A\in C^{2, \gamma}(E, \Real)$, $\Sigma\in C^{2,
   \gamma}(E, \Real^{d\times d})$, and $\Upsilon \in C^{2,\gamma}(E,
 \Real^d)$ for some $\gamma \in (0,1]$. For all $y\in E$, $\Sigma$ is positive
 definite and $A$ is positive.
\end{ass}

These regularity conditions imply the local existence and uniqueness of a
solution to the joint dynamics of the state variable and asset prices. The next assumption ensures the existence of a unique global solution, by requiring that {Feller's test for explosions} is negative \citep[Theorem 5.1.5]{Pinsky}.
\begin{ass}\label{ass: WP P}
There is some $y_0\in E$ such that
%\label{ass: WP hat-P}
\begin{equation*}\label{ass: no_expl}
\int_{\alpha}^{y_0} \frac{1}{ A(y)m(y)}\left(\int_y^{y_0}m(z)dz\right) dy =
\infty =
\int_{y_0}^{\beta}\frac{1}{A(y)m(y)}\left(\int_{y_0}^{y}m(z)dz\right) dy,
\end{equation*}
where the speed measure is defined as $m(y):= \frac{1}{A(y)} \exp\pare{\int_{y_0}^y \frac{2b(z)}{A(z)} dz}$.
\end{ass}
Assumption \ref{ass: WP P} implies the model for $(R,Y)$ is well posed in that
it admits a solution. This statement is made precise within the setting of the
martingale problem, now introduced along with some notation.
%To ease presentation, a general definition is given.
Let $\Omega$ be the space of continuous maps $\omega: \Real_+ \to \Real^{n}$ and
$(\B_t)_{t\geq 0}$ be the filtration generated by the coordinate process $\Xi$
defined by $\Xi_t(\omega) = \omega_t$ for $\omega\in\Omega$. Let $\F =
\sigma\left(\Xi_t, t\geq 0\right)$ and $\F_t =\B_{t+}$.
For an open, connected set $D\subset\Real^n$ and $\gamma \in (0,1]$, let $\widetilde{A}\in C^{2,\gamma}(D,\Real^{n\times n})$ be point-wise positive definite and let $\widetilde{b}\in C^{1,\gamma}(D,\Real^{n})$. Define the second order elliptic operator
$\widetilde{L}$ by
\begin{equation*}\label{eq: gen op def}
\widetilde{L} =
\frac{1}{2}\sum_{i,j=1}^{n}\widetilde{A}_{ij}\frac{\partial^2}{\partial
  x_i\partial x_j}
+ \sum_{i=1}^{n}\widetilde{b}_i\frac{\partial}{\partial x_i}.
\end{equation*}
\begin{defn}\label{def: martingale problem}
A family of probability measures $(\prob^{x})_{x\in D}$ on $(\Omega,\F)$ is a solution to the martingale problem for $\widetilde{L}$ on
$D$ if, for each $x\in D$: $i): \prob^x(\Xi_0=x) =1$, $ii): \prob^x(\Xi_t \in D, \forall t\geq 0) =1$, and $iii): \left(f(\Xi_t) -f(\Xi_0) - \int_0^t
\widetilde{L}f(\Xi_u)\, du; (\F_t)_{t\geq 0}\right)$ is a $\prob^x$ martingale for all $f\in C^2_0(D)$.
\end{defn}
Let $\xi = (z,y)\in\Real^d\times E$. Consider the generator
\begin{equation}\label{eq: L def}
 L=\frac12 \sum_{i,j=1}^{d+1} \widetilde{A}_{ij}(\xi) \frac{\partial^2}{\partial
   \xi_i \partial \xi_j} + \sum_{i=1}^{d+1} \widetilde{b}_i(\xi)
 \frac{\partial}{\partial \xi_i}, \quad \widetilde{A}
 = \pare{\begin{array}{ll}1_d & 0 \\ 0 & A\end{array}}
 \text{ and } \widetilde{b} = \pare{\begin{array}{l}0\\ b\end{array}}.
\end{equation}
This is the infinitesimal generator of $(B,Y)$ from \eqref{eq: state sde} and
\eqref{eq: excess return} where $B$ is a $d$-dimensional Brownian Motion
starting at $z$, independent of $Y$ which starts at $y$.  Assumptions
\ref{ass: regular coeffs} and \ref{ass: WP P} imply the following:

\begin{prop}\label{lem: mart prob} Let Assumptions \ref{ass: regular coeffs} and \ref{ass: WP P} hold. Then there exists a unique solution $(\prob^{\xi})_{\xi\in\Real^{d}\times E}$ to the martingale problem on $\Real^{d}\times E$ for $L$ in \eqref{eq: L def}.
\end{prop}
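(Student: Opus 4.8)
The plan is to exploit the product structure hidden in \eqref{eq: L def}. Since $\widetilde A=\mathrm{diag}(1_d,A)$ and $\widetilde b=(0,b)'$ carry no cross terms between the $z$- and $y$-variables, $L$ is the sum of the generator $\tfrac12\Delta_z$ of a $d$-dimensional Brownian motion and the generator $L_Y\dfn\tfrac12 A(y)\partial^2_{yy}+b(y)\partial_y$ of the autonomous one-dimensional diffusion \eqref{eq: state sde}. I will use this to reduce the claim to the statement that the one-dimensional martingale problem for $L_Y$ on $E$ has a unique solution $(\prob^y_Y)_{y\in E}$. Indeed, granting this, one checks that for $\xi=(z,y)$ the law $\prob^\xi$ of $(z+B,Y)$ — with $B$ a standard $d$-dimensional Brownian motion independent of $Y$ and $\mathrm{Law}(Y)=\prob^y_Y$ — satisfies (i)--(iii) of Definition \ref{def: martingale problem}: (i) and (ii) are immediate (the latter from non-explosion of $Y$, below), and (iii) follows by applying It\^o's formula to $f(z+B_t,Y_t)$ for $f\in C^2_0(\Real^d\times E)$, the mixed second-order terms vanishing by independence so that the drift is exactly $Lf$, while the stochastic-integral remainder is a true martingale because $\nabla f$ is bounded with compact support in $\Real^d\times E$ and $\widetilde A$ is bounded there. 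Conversely, under any solution of the $L$-martingale problem, restricting (iii) to test functions of $z$ only shows $(\Xi^1,\dots,\Xi^d)$ is a Brownian motion from $z$, restricting to test functions of $y$ only shows $\Xi^{d+1}$ solves the $L_Y$-martingale problem and hence has law $\prob^y_Y$, and the vanishing off-diagonal entries of $\widetilde A$ force the quadratic covariations $\langle\Xi^i,\Xi^{d+1}\rangle$, $1\le i\le d$, to be zero, which upgrades to independence of the two blocks; so the joint law is necessarily the product $\prob^\xi$.

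\emph{Local well-posedness of the $L_Y$-martingale problem.} Fix a bounded, open, connected $D$ with $\bar D\subset E$. By Assumption \ref{ass: regular coeffs}, on $\bar D$ the function $A$ is continuous, strictly positive, and (being $C^1$ there) Lipschitz, and $b$ is likewise Lipschitz. Extend $A|_{\bar D}$ and $b|_{\bar D}$ to functions $\bar A,\bar b$ on $\Real$ that are globally Lipschitz with $\inf_{\Real}\bar A>0$, so that $\bar a\dfn\sqrt{\bar A}$ is globally Lipschitz as well. The SDE $dY=\bar b(Y)\,dt+\bar a(Y)\,dW$ then has a pathwise-unique strong solution, so the martingale problem for $\tfrac12\bar A\partial^2_{yy}+\bar b\partial_y$ on $\Real$ is well posed; by the localization theorem for martingale problems (Stroock--Varadhan), this yields a \emph{unique} solution of the $L_Y$-martingale problem up to the first exit time $\tau_D$ of $D$. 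Letting $D$ run through an exhaustion $D_n\uparrow E$ produces a unique solution $(\prob^y_Y)_{y\in E}$ of the $L_Y$-martingale problem up to the explosion time $\tau_E\dfn\lim_n\tau_{D_n}\in(0,\infty]$.

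\emph{Non-explosion and conclusion.} Assumption \ref{ass: WP P} is precisely the statement that Feller's test for explosions is negative \citep[Theorem 5.1.5]{Pinsky}: divergence of both integrals of $\bigl(\int m\bigr)/(Am)$ at $\alpha$ and at $\beta$ gives $\prob^y_Y(\tau_E=\infty)=1$ for every $y\in E$, i.e. $\prob^y_Y(Y_t\in E,\ \forall t\ge 0)=1$. Hence $(\prob^y_Y)_{y\in E}$ is a genuine global solution of the $L_Y$-martingale problem on $E$ in the sense of Definition \ref{def: martingale problem}, and the unique one; combined with the reduction in the first paragraph, this furnishes the unique solution $(\prob^\xi)_{\xi\in\Real^d\times E}$ of the martingale problem for $L$ on $\Real^d\times E$. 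The only steps requiring genuine care are the patching of the local solutions into a global one and the use of Feller's test to rule out explosion, together with the routine verification that the vanishing quadratic covariations upgrade to independence of $B$ and $Y$; all of this is standard martingale-problem machinery, so no essential difficulty arises.
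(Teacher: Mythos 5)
Your proposal is correct and follows the same route as the paper: exploit the block-diagonal structure of $\widetilde{A}$ and $\widetilde b$ to factor the generator as the sum of the $d$-dimensional Laplacian and the one-dimensional operator $L_Y = \tfrac12 A\partial^2_{yy}+b\partial_y$, reduce to well-posedness of the $L_Y$-martingale problem on $E$, and invoke Feller's test (via \citet*[Theorem 5.1.5]{Pinsky}, which Assumption \ref{ass: WP P} is tailored to) to rule out explosion. The paper's own proof is terser: it cites Pinsky directly for existence \emph{and} uniqueness of $(\prob^y_Y)_{y\in E}$, and then simply declares the product $W^{z,d}\times\prob^y_Y$ as the solution. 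You instead rebuild the one-dimensional result from scratch via Lipschitz extensions and localization, and you explicitly argue for uniqueness of the joint law, which the paper leaves implicit. The one loose point is the claim that vanishing covariations $\langle\Xi^i,\Xi^{d+1}\rangle=0$ ``upgrades to independence'': vanishing quadratic covariation alone does not give independence in general. The clean way to close this is via the SDE representation — any solution of the $L$-martingale problem yields a weak solution of the associated SDE; since the locally Lipschitz coefficients (Assumption \ref{ass: regular coeffs}) together with non-explosion give pathwise uniqueness for $Y$, the process $Y$ is a strong solution adapted to its driving one-dimensional Brownian motion $W$, which is orthogonal to, hence independent of, the $d$-dimensional Brownian motion $B$; therefore $B$ and $Y$ are independent and the joint law is the product. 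This is indeed standard machinery, as you note, so the overall argument is sound.
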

\begin{rem}\label{rem: mart prob} There is a one to one correspondence between
  solutions to the martingale problem and weak solutions for $(B,Y)$, see
  \cite[Chapter V]{MR1780932}.  Since $A(y) > 0$ for $y\in E$, \emph{defining} $W$ via $W_t = \int_0^ta(Y_s)^{-1}\left(dY_s - b(Y_s)ds\right)$, $Z = \rho W + \bar{\rho}B$ where $\bar{\rho}$ is a square root of $1-\rho\rho'$, and $R$ via \eqref{eq: excess return}, it follows that $\pare{(R,Y), (W,B), (\Omega,\F,(\F_t)_{t\geq 0},\prob^{\xi})}$ is a weak solution of \eqref{eq: state sde} and \eqref{eq: excess return}.
\end{rem}

Assumption \ref{ass: WP P} is merely technical, in that it requires that the
original market is well defined. By contrast, the next assumption places some
restrictions on market dynamics.

\begin{ass}\label{ass: correl}
$\rho' \rho$ is constant (i.e. it does not depend on $y$), and $\sup_{y\in E} c(y) <\infty$, where
\begin{equation}\label{eq: potential def}
c(y) :=\frac{1}{\delta}(p r - \frac{q}{2} \mu'\Sigma^{-1} \mu)(y),\qquad y\in E,
\end{equation}
with $q:= p/(p-1)$ and $\delta:= (1-q \rho' \rho)^{-1}$.
\end{ass}

Assumption \ref{ass: correl} is straightforward to check, and holds when
$p\leq 0$ for virtually all models in the literature, with the exception of
correlation risk (cf. \citet*{buraschi2010correlation}).

%The value function $u^{0,T}$ from \eqref{def: value u} is now considered. Let Assumptions \ref{ass: regular coeffs} and \ref{ass: WP P} hold.
Set $(\F^{R,Y}_t)_{t\geq 0}$ as the right continuous envelope of the filtration generated by $(R,Y)$. For any admissible strategy $H$ with respect to this filtration, the corresponding risky weight $\pi = H S/X$ is an adapted, $R$-integrable process $(\pi_t)^{1\le i\le d}_{t\ge 0}$, and satisfies the relation
\begin{equation}\label{eq: wealth dynamics}
\frac{d X^\pi_t}{X^\pi_t} = r(Y_t) dt+\pi'_t\, dR_t.
\end{equation}
In this Markovian setting, the value function for the horizon $T\in \Real_+$ is given by:
\begin{equation}\label{def: u}
 u^{0,T} =  u^T(t,x,y) = \sup_{\pi \text{ admissible}} \expec^{\prob}\bra{\pare{X_{T}^{\pi}}^p/p \such X_t=x, Y_t=y}, \quad \text{ for } t\in [0,T].
\end{equation}
These utility maximization problems are studied at all horizons under the following assumption:
\begin{ass}\label{ass: ptphat}
There exist $(\hat{v}, \lambda_c)$ such that $\hat v\in C^2(E)$, $\hat v>0$, and solves the equation:
\label{ass: WP hat-v}
\begin{equation}\label{eq: eign eqn}
\cL \,v + c\, v = \lambda \, v, \quad y \in E,
\end{equation}
where
\begin{equation}\label{eq: cL B def}
\cL:= \frac12 A \,\partial^2_{yy} + B \,\partial_y;\qquad B:=
b-q\Upsilon' \Sigma^{-1} \mu.
\end{equation}
Also, for the $y_0\in E$ in Assumption \ref{ass: WP P}\footnote{Any $y_0\in
  E$ suffices. This $y_0$ is chosen to align $m$ with $\hat{m}$.}
%\label{ass: WP hat-P}
\begin{align}\label{ass: recurrent}
\int_{\alpha}^{y_0} \frac{1}{\hat{v}^2A\hat{m}(y)} dy =& \infty
&
\int_{y_0}^{\beta}\frac{1}{\hat{v}^2 A \hat{m} (y)} dy =& \infty,\\
%\label{ass: WP hat-P-posrec}
\label{ass: posrec l1}
\int_{\alpha}^{\beta} \hat{v}^2\, \hat{m} (y) \, dy  =& 1
&
\int_{\alpha}^{\beta} \hat{v}\, \hat{m}(y) \,dy <& \infty,
\end{align}
where
\begin{equation}\label{eq: mnu def}
\hat{m}(y):= \frac{1}{A(y)} \exp\pare{\int_{y_0}^y \frac{2B(z)}{A(z)} dz}.
\end{equation}
\end{ass}
\begin{rem}
If $\hat{v} > 0$ satisfies \eqref{eq: eign eqn}, \eqref{ass: recurrent}, the inequality in \eqref{ass: posrec l1}, then $\int_{\alpha}^{\beta}\hat{v}^2\hat{m}(y)dy = 1$ is equivalent to $\int_{\alpha}^{\beta}\hat{v}^2\hat{m}(y)dy < \infty$, up to a renormalization of $\hat{v}$. We assume that the integral equals one only for convenience of notation.
\end{rem}

Assumption \ref{ass: ptphat} is interpreted as follows. Equation \eqref{eq: eign eqn} is the ergodic HJB equation, which controls the long-run limit of
the utility maximization problem (cf. \citet*{Guasoni-Robertson} Theorem~7 and
Section~2.2.1). Its solution $\hat v$ is related to the finite-horizon value
functions $u^T$ by $u^T(x,y,0) \sim (x^p/p) (e^{\lambda T} \hat v(y))^\delta$.
%(see \citet*{Guasoni-Robertson} for details).
Thus, assuming that \eqref{eq: eign eqn} has a solution guarantees that the
long-run optimization problem is well posed. The presence of $\delta$
reflects the power transformation of \cite{Zariphopoulou}, which allows to
write the ergodic HJB equation in the linear form \eqref{eq: eign eqn}.

To understand the meaning of \eqref{ass: recurrent} and \eqref{ass: posrec l1}, define the operator:
\begin{equation}\label{eq: phat L def}
 \hat{\cL}=\frac12 \sum_{i,j=1}^{d+1} \widetilde{A}_{ij}(\xi) \frac{\partial^2}{\partial
   \xi_i \partial \xi_j} + \sum_{i=1}^{d+1} \hat{b}_i(\xi)
 \frac{\partial}{\partial \xi_i}, \quad \hat{b}
 = \pare{\begin{array}{l}-q\bar{\rho}'\sigma'\Sigma^{-1}\left(\mu + \delta\Upsilon\frac{\hat{v}_y}{\hat{v}}\right)\\ B+A\frac{\hat{v}_y}{\hat{v}}\end{array}},
\end{equation}
where $\widetilde{A}$ is the same as in \eqref{eq: L def}.
Condition \eqref{ass: recurrent} in Assumption~\ref{ass: ptphat} implies that
the martingale problem for $\hat{\cL}$ on $\Real^d \times E$ has a unique
solution $(\hat{\prob}^\xi)_{\xi\in \Real^d \times E}$ and that
$\hat{\prob}^\xi$ is equivalent to $\prob^\xi$ (see Lemma~\ref{lemma: WP hat-P} below). The family $(\hat{\prob}^{\xi})_{\xi\in\Real^{d}\times E}$ is
called the \emph{long-run probability}. Girsanov's theorem in turn
implies that the following stochastic differential equation has a unique weak solution starting at $\xi$ under $\hat{\prob}^{\xi}$:
\begin{equation}\label{eq: phat sde}
\begin{split}
 & dR_t = \frac{1}{1-p} \pare{\mu+ \delta \Upsilon \frac{\hat{v}_y}{\hat{v}}}(Y_t) \, dt + \sigma(Y_t)\, d \hat{Z}_t,\\
 & dY_t = \pare{B+ A\frac{\hat{v}_y}{\hat{v}}}(Y_t) \, dt + a(Y_t)\, d\hat{W}_t,
\end{split}
\end{equation}
Here, $(\hat{B},\hat{W})$ is a $d+1$ dimensional Brownian
Motions under $\hat{\prob}^{\xi}$, and $\hat{Z} = \rho \hat{B} + \overline{\rho} \hat{W}$.  Conditions \eqref{ass: recurrent} and \eqref{ass: posrec l1} imply that $Y$ is ergodic under $(\hat{\prob}^{\xi})_{\xi\in\Real^{d}\times E}$ (Lemma \ref{lemma: hat-P posrec} below, and Section \ref{SS: phat Y prop} for a precise definition of ergodicity). This property drives the long-run asymptotics of the projected densities in \eqref{def: cond density}.

A simple criterion to check Assumption \ref{ass: ptphat} is the following:
\begin{prop}\label{prop: turnpike holds}
Let Assumptions \ref{ass: regular coeffs} and \ref{ass: correl} hold. If $c$ in \eqref{eq: potential def} and $\hat{m}$ in \eqref{eq: mnu def} satisfy:
\begin{align}
\label{eq: m prob requirement}
& \int_{\alpha}^{\beta} \hat{m}(y) dy < \infty,\\
\label{eq: c decay requirement}
& \lim_{y\downarrow\alpha} c(y) = \lim_{y\uparrow\beta} c(y) =
  -\infty.
\end{align}
Then, Assumption \ref{ass: ptphat} holds.
\end{prop}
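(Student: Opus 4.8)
The plan is to read the system \eqref{eq: eign eqn}--\eqref{ass: posrec l1} as the assertion that the Schr\"odinger-type operator $\cL + c - \lambda_c$ is \emph{positive-critical}, with $\hat v$ its ground state. Writing $s'(y):=1/\big(A(y)\hat m(y)\big)$ for the scale density of $\cL$ (so that $\cL+c$ is formally self-adjoint in $L^2(E,\hat m(y)\,dy)$, with $\hat m$ the speed density), the $h$-transform by $\hat v$ of $\cL+c-\lambda_c$ is exactly the operator $\hat\cL=\tfrac12 A\,\partial^2_{yy}+(B+A\hat{v}_y/\hat v)\,\partial_y$ of \eqref{eq: phat L def}; its scale density is $s'/\hat v^2=1/(\hat v^2 A\hat m)$ and its speed density is $\hat v^2\hat m$. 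Thus \eqref{ass: recurrent} says the $\hat\cL$-diffusion is recurrent (criticality), the first identity in \eqref{ass: posrec l1} says its speed measure is a probability (positive-criticality), and \eqref{eq: eign eqn} is the defining eigenvalue equation. So the proof reduces to showing that, under Assumptions \ref{ass: regular coeffs} and \ref{ass: correl} together with \eqref{eq: m prob requirement}--\eqref{eq: c decay requirement}, $\cL+c-\lambda_c$ is positive-critical, and then deriving the remaining finiteness $\int_E\hat v\hat m\,dy<\infty$.

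For the first part I would pass to the operator-theoretic picture. By Assumption \ref{ass: correl} the potential $c$ is bounded above, so the quadratic form $Q(f)=\tfrac12\int_E (f')^2/s'\,dy+\int_E(-c)f^2\hat m\,dy$ associated with $-(\cL+c)$ on $C^\infty_0(E)$ is bounded below; let $H$ be its Friedrichs extension in $L^2(E,\hat m\,dy)$ and put $\lambda_c:=-\inf\mathrm{spec}(H)\in\Real$. The confining condition \eqref{eq: c decay requirement} enters through Persson's theorem: any $f\in C^\infty_0(E)$ supported near the boundary of $E$ satisfies $Q(f)\ge\big(\inf(-c)\text{ on its support}\big)\|f\|^2_{L^2(\hat m)}$, and the prefactor tends to $+\infty$, so $\inf\mathrm{ess}\,\mathrm{spec}(H)=+\infty$, $H$ has compact resolvent, and $\lambda_c$ is a genuine eigenvalue. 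Since the Feynman--Kac semigroup $e^{-tH}$ is positivity-improving on the connected interval $E$, the bottom eigenvalue is simple with a strictly positive eigenfunction $\hat v\in L^2(E,\hat m\,dy)$; Assumption \ref{ass: regular coeffs} makes the coefficients of $\cL$ and $c$ locally H\"older, so elliptic regularity gives $\hat v\in C^{2}(E)$ solving $\cL\hat v+c\hat v=\lambda_c\hat v$ in the classical sense, and $\lambda_c$ then coincides with the generalized principal eigenvalue $\lambda_c(\cL+c,E)$. Being an $L^2(\hat m)$ ground state, $\hat v$ makes the operator critical: the form inequality $\langle(\cL+c-\lambda_c)f,f\rangle_{\hat m}\le 0$ renders it disconjugate, so it has principal solutions at both endpoints, and a standard Sturm--Liouville argument (if $\hat v$ were not a multiple of the principal solution at $\beta$, that solution would be square-integrable near $\beta$, hence, being positive and not in $L^2(\hat m)$ globally, would yield a second $L^2(\hat m)$ eigenfunction, contradicting simplicity) forces $\hat v$ to be, up to scalars, the principal solution at each endpoint — which is precisely $\int_\alpha^{y_0}s'/\hat v^2\,dy=\infty=\int_{y_0}^\beta s'/\hat v^2\,dy$, i.e.\ \eqref{ass: recurrent} after $s'=1/(A\hat m)$. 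Renormalizing so that $\int_E\hat v^2\hat m\,dy=1$ gives positive-criticality and the first identity in \eqref{ass: posrec l1}, and finally $\int_E\hat v\hat m\,dy\le\big(\int_E\hat v^2\hat m\,dy\big)^{1/2}\big(\int_E\hat m\,dy\big)^{1/2}<\infty$ by Cauchy--Schwarz and \eqref{eq: m prob requirement}, which is the remaining inequality in \eqref{ass: posrec l1}.

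I expect the criticality step to be the crux: turning \eqref{eq: c decay requirement} into compactness of the resolvent (Persson) and then into recurrence of the ground-state $h$-transform. The other ingredients — finiteness of $\lambda_c$, positivity and $C^2$-regularity of $\hat v$, and the two integral bounds — are routine once $\cL+c-\lambda_c$ is known to be positive-critical, and the criticality facts used can be quoted from the spectral/criticality theory of one-dimensional Schr\"odinger operators (e.g.\ Pinsky's monograph).
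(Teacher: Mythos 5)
Your proof is a self-contained re-derivation of what the paper delegates to Theorem~18 of \citet*{Guasoni-Robertson}: the paper's entire argument is to invoke that theorem to produce $\hat v$ satisfying \eqref{eq: eign eqn}, \eqref{ass: recurrent}, and the normalization $\int_E\hat v^2\hat m\,dy<\infty$, and then to obtain $\int_E\hat v\hat m\,dy<\infty$ from \eqref{eq: m prob requirement} via H\"older's inequality — exactly your last Cauchy--Schwarz step. So in spirit your route is the same; what you add is the spectral-theoretic machinery (Friedrichs extension, Persson, positivity-improving semigroup, ground-state $h$-transform) that the cited theorem packages. This is legitimate and gives the reader more insight into why \eqref{eq: c decay requirement} is the operative hypothesis, at the cost of length.

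One step is stated loosely enough to count as a gap. In the parenthetical Sturm--Liouville argument you say that if $\hat v$ were non-principal at $\beta$, the principal solution $\phi_\beta$ would be ``square-integrable near $\beta$, hence, being positive and not in $L^2(\hat m)$ globally, would yield a second $L^2(\hat m)$ eigenfunction, contradicting simplicity.'' As written this does not parse: a function that is \emph{not} in $L^2(\hat m)$ globally is not a candidate eigenfunction, so no contradiction with simplicity is produced; and knowing $\phi_\beta\in L^2$ near $\beta$ says nothing about its behavior near $\alpha$. The standard way to close this is via the Weyl limit-point dichotomy: since $-c\to+\infty$ at both endpoints, the operator is in the limit-point case there, so at each endpoint there is (up to scalars) a \emph{unique} solution of $(\cL+c-\lambda_c)u=0$ in $L^2(\hat m)$ near that endpoint. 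The principal solution always dominates in decay, so $\hat v\in L^2$ near $\beta$ forces $\phi_\beta\in L^2$ near $\beta$; if they were independent this would violate limit-point uniqueness, so $\hat v$ must be the principal solution at $\beta$, and symmetrically at $\alpha$. You should either replace your parenthetical with this argument (and justify limit-point at each end from \eqref{eq: c decay requirement}) or, as the paper does, simply quote the criticality conclusion from the relevant theorem. The rest of your proposal — finiteness of $\lambda_c$, positivity and $C^2$-regularity of $\hat v$, the interpretation of \eqref{ass: recurrent} and the first relation in \eqref{ass: posrec l1} as positive-criticality of the $h$-transform, and the closing Cauchy--Schwarz — is correct and aligned with the paper.
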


\begin{rem}
If the interest rate $r$ is bounded from below, and $p< 0$, \eqref{eq: c decay requirement} states that the squared norm of the vector of risk premia $\sigma^{-1}\mu$ goes to $\infty$ at the boundary of the state space $E$.
\end{rem}
Assumption \ref{ass: ptphat} guarantees that (see Proposition~\ref{lemma: HJB verification} below) at all finite horizons $T$, the value functions $u^T$ in \eqref{def: u} can be represented as $u^T(t, x, y) = (x^p/p) (v^T(t,y))^\delta$ for $(t,x,y)\in [0,T]\times \Real_+ \times E$, where $v^T$ is a strictly positive classical solution to the linear parabolic PDE:
\begin{equation}\label{eq: HJB}
 \begin{split}
   & \partial_t v + \cL v + c\, v = 0, \quad (t,y) \in (0,T) \times E,\\
   & v(T,y) = 1, \hspace{1.7cm} y \in E.
 \end{split}
\end{equation}
Moreover, the optimal portfolio for the horizon $T$ is (all functions are evaluated at $(t,Y_t)$):
\begin{equation}\label{eq: finite opt}
   \pi^T = \frac{1}{1-p} \Sigma^{-1}\pare{\mu + \delta \Upsilon \frac{v^T_y}{v^T}}.
\end{equation}
Thus, the wealth process corresponding to this portfolio leads to the optimal terminal wealth $X^{\pi^T}_T$, which in turn defines the probability\footnote{Since $R_0=0$ by assumption, $\prob^\xi$ with $\xi = (0,y)$ is denoted as $\prob^y$. The same convention applies to $\hat{\prob}^\xi$.} $\prob^{T,y}$ by \eqref{def: p^T}.
Understanding the convergence of $d\prob^{T,y}/d\prob^y|_{\F_t}$ is key to go beyond the abstract version of the turnpike. To this end, observe from \eqref{eq: phat sde} that the portfolio:
\begin{equation}\label{eq: long-run opt}
   \hat{\pi} = \frac{1}{1-p} \Sigma^{-1}\pare{\mu + \delta \Upsilon \frac{\hat{v}_y}{\hat{v}}}
\end{equation}
evaluated at $(t, Y_t)$, is optimal for logarithmic utility under the probability $\hat\prob^y$. This fact suggests that the conditional densities of $\hat\prob^y$ are natural candidates for the limits of the conditional densities
$d\prob^{T,y}/d\prob^y|_{\F_t}$. Combined with the ergodicity of $Y$ under $(\hat{\prob}^y)_{y\in E}$ the next
result follows:

\begin{lem}\label{thm: conv cond den}
Let Assumptions~\ref{ass: regular coeffs}, \ref{ass: WP P}, \ref{ass: correl},
and \ref{ass: ptphat} hold. Then, for all $y\in E$ and $t, \eps > 0$:
\begin{equation}\label{eq: conv cond den}
\lim_{T\rightarrow \infty}
\hat{\prob}^y\pare{\left|\frac{d\prob^{T,y}}{d\hat{\prob}^y}\bigg|_{\F_t} -1 \right| \geq \epsilon} = 0.
\end{equation}
\end{lem}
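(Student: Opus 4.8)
The plan is to derive an explicit product formula for the density $\frac{d\prob^{T,y}}{d\hat\prob^y}\big|_{\F_t}$ and then show each factor tends to $1$ in $\hat\prob^y$-probability. By the verification result of Proposition~\ref{lemma: HJB verification}, the process $u\mapsto (X^{\pi^T}_u)^p\, v^T(u,Y_u)^\delta$ is a true $\prob^y$-martingale on $[0,T]$ with value $v^T(0,y)^\delta=\expec^{\prob^y}[(X^{\pi^T}_T)^p]$ at $u=0$, so \eqref{def: cond density} (with $X^{0,T}=X^{\pi^T}$) gives $\frac{d\prob^{T,y}}{d\prob^y}\big|_{\F_t}=(X^{\pi^T}_t)^p\, v^T(t,Y_t)^\delta / v^T(0,y)^\delta$. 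Next, since $\hat v$ solves \eqref{eq: eign eqn}, the function $(t,y)\mapsto e^{\lambda(T-t)}\hat v(y)$ solves \eqref{eq: HJB} with terminal datum $\hat v$, and its associated portfolio via \eqref{eq: finite opt} is exactly $\hat\pi$ of \eqref{eq: long-run opt}; a direct It\^o computation — equivalently, the change of measure defining $\hat\prob^y$ through \eqref{eq: phat sde} (Lemma~\ref{lemma: WP hat-P}) — then shows $\frac{d\hat\prob^y}{d\prob^y}\big|_{\F_t}=(X^{\hat\pi}_t)^p\, e^{-\lambda\delta t}\, \hat v(Y_t)^\delta / \hat v(y)^\delta$. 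Writing $g^T(t,y):=e^{-\lambda(T-t)}v^T(t,y)/\hat v(y)$ and dividing the two densities, the exponential and $\hat v$ factors cancel and one is left with
\[
\left.\frac{d\prob^{T,y}}{d\hat\prob^y}\right|_{\F_t} = \left(\frac{X^{\pi^T}_t}{X^{\hat\pi}_t}\right)^{\!p}\left(\frac{g^T(t,Y_t)}{g^T(0,y)}\right)^{\!\delta},
\]
so it remains to prove that each of the two factors converges to $1$ in $\hat\prob^y$-probability.

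For the $g^T$-factor the key is the ground-state (Doob $h$-) transformation $g^T(t,y)=\hat P_{T-t}(1/\hat v)(y)$, where $(\hat P_\tau)_{\tau\ge0}$ is the transition semigroup of $Y$ under $\hat\prob^y$, i.e.\ of the diffusion with generator $\cL+A(\hat v_y/\hat v)\partial_y$ read off from \eqref{eq: phat sde}; this follows by checking that $(t,y)\mapsto e^{\lambda(T-t)}\hat v(y)\,\hat P_{T-t}(1/\hat v)(y)$ solves \eqref{eq: HJB} (a one-line computation using $\cL\hat v+c\hat v=\lambda\hat v$) and invoking uniqueness for that parabolic problem. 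By Lemma~\ref{lemma: hat-P posrec} and conditions \eqref{ass: recurrent}--\eqref{ass: posrec l1}, $Y$ is ergodic under $(\hat\prob^y)_{y\in E}$ with invariant law $\hat\mu(dy)=\hat v(y)^2\hat m(y)\,dy$, and $1/\hat v\in L^1(\hat\mu)$ because $\int_E\hat v\hat m\,dy<\infty$; hence $\hat P_\tau(1/\hat v)(z)\to k:=\int_E\hat v\hat m\,dy\in(0,\infty)$ as $\tau\to\infty$ for every $z\in E$. Thus $g^T(0,y)\to k$, and since the pointwise convergence $g^T(t,z)=\hat P_{T-t}(1/\hat v)(z)\to k$ holds at every $z$, evaluating at the ($\hat\prob^y$-a.s.\ $E$-valued) random point $Y_t$ gives $g^T(t,Y_t)\to k$ $\hat\prob^y$-a.s.; consequently the $g^T$-factor tends to $1$ in $\hat\prob^y$-probability.

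For the wealth factor, \eqref{eq: finite opt}, \eqref{eq: long-run opt} and $\partial_y\log v^T=\partial_y\log\hat v+\partial_y\log g^T$ give $\pi^T_u-\hat\pi_u=\tfrac{\delta}{1-p}\Sigma^{-1}\Upsilon(Y_u)\,\partial_y g^T(u,Y_u)/g^T(u,Y_u)$. Interior (Schauder) estimates for the parabolic equation solved by $g^T$, combined with the ergodic convergence above, show that on any compact $K\subset E$ the ratios $\partial_y g^T/g^T$ are bounded on $[0,t]\times K$ uniformly in large $T$ and tend to $0$ pointwise there; fixing $K$ with $\hat\prob^y(Y_u\in K,\ \forall u\le t)\ge1-\eta$ (possible by the tightness of $Y$ under $\hat\prob^y$) and applying dominated convergence on that event yields $\int_0^t|\pi^T_u-\hat\pi_u|^2\,du\to0$ there. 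Writing $\log(X^{\pi^T}_t/X^{\hat\pi}_t)=\int_0^t(\pi^T_u-\hat\pi_u)'dR_u-\tfrac12\int_0^t\big((\pi^T_u)'\Sigma(Y_u)\pi^T_u-\hat\pi_u'\Sigma(Y_u)\hat\pi_u\big)du$, and noting that under $\hat\prob^y$ the local-martingale part of $\int(\pi^T-\hat\pi)'dR$ has quadratic variation $\int_0^t(\pi^T_u-\hat\pi_u)'\Sigma(Y_u)(\pi^T_u-\hat\pi_u)\,du$, both the drift integral and the stochastic integral go to $0$ in $\hat\prob^y$-probability (localize on the compact-stay event, let $T\to\infty$, then let $\eta\downarrow0$). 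Hence $(X^{\pi^T}_t/X^{\hat\pi}_t)^p\to1$, and multiplying by the $g^T$-factor proves \eqref{eq: conv cond den}.

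The two genuinely delicate inputs are: (i) the ground-state identity $g^T=\hat P_{T-t}(1/\hat v)$, which needs a uniqueness class for \eqref{eq: HJB} broad enough to contain $e^{\lambda(T-t)}\hat v\,\hat P_{T-t}(1/\hat v)$ — exactly where the probabilistic representation of $v^T$ and the positive-recurrence hypotheses \eqref{ass: recurrent}--\eqref{ass: posrec l1} enter; and (ii) upgrading the ergodic convergence $\hat P_\tau(1/\hat v)\to k$ to uniform-on-compacts control of $\hat P_\tau(1/\hat v)$ and of its $y$-derivative for the \emph{unbounded} integrand $1/\hat v$, which is what the stochastic-integral estimate in the third paragraph rests on. Both are essentially the content of the preceding lemmas on the long-run measure and the properties of $Y$ under $\hat\prob$ (cf.\ Lemma~\ref{lemma: hat-P posrec} and Section~\ref{SS: phat Y prop}); granting them, what remains is the algebraic reduction in the first paragraph and the routine localization turning locally uniform PDE convergence into convergence in $\hat\prob^y$-probability.
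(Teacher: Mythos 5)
Your algebraic reduction is correct and in fact equivalent to the paper's: your identity
\[
\left.\frac{d\prob^{T,y}}{d\hat\prob^y}\right|_{\F_t}=\left(\frac{X^{\pi^T}_t}{X^{\hat\pi}_t}\right)^{\!p}\left(\frac{g^T(t,Y_t)}{g^T(0,y)}\right)^{\!\delta}
\]
with $g^T=h^T$ is just a rearrangement of the paper's \eqref{eq: Z ratio t}--\eqref{eq: wealth ratio t}, and your treatment of the $g^T$-factor (ergodicity of $Y$ under $\hat\prob^y$, $1/\hat v\in L^1(\hat v^2\hat m)$, pointwise $h^T\to k$) is the paper's argument. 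The problem is the wealth factor: you assert that interior Schauder estimates combined with the ergodic convergence give, on any compact $K\subset E$, bounds on $h^T_y/h^T$ over $[0,t]\times K$ that are uniform in large $T$ together with pointwise convergence to $0$, and you write that this is ``essentially the content of the preceding lemmas.'' It is not: neither Lemma~\ref{lemma: hat-P posrec} nor Proposition~\ref{prop: v^T classical} establishes any gradient control on $h^T$ uniformly in $T$. Indeed the only a priori bound the paper produces for $h^T$ is \eqref{eq: h^T ev finite}, $h^T(t,y)\leq e^{K(T-t)}/\hat v(y)$, which blows up in $T$; to get sup-norm bounds on compacts uniform in $T$ one would need something extra (e.g.\ a parabolic Harnack inequality for $\partial_t h^T+\cL^{\hat v,0}h^T=0$, plugged into the single-point ergodic limit), and only then could interior Schauder be invoked to control the derivative. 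Passing from pointwise convergence $h^T\to k$ to locally uniform convergence of $h^T_y$ to $0$ also requires a compactness/uniqueness-of-limit argument. All of this is plausible but genuinely missing; as written, the key analytic step is a hand-wave.

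The paper avoids the gap entirely with a probabilistic shortcut. It observes (via \eqref{eq: D v^T temp}) that
\[
L^T_t:=\frac{h^T(t,Y_t)}{h^T(0,y)}=\Exp\!\left(\int a\,\frac{h^T_y}{h^T}\,d\hat W\right)_t,
\]
a positive $\hat\prob^y$-martingale with $L^T_0=1$; Scheff\'e turns the a.s.\ convergence $L^T_t\to1$ into $L^1$ convergence, and then Lemma~\ref{lemma: stoch-log conv} (the abstract stochastic-logarithm result used throughout the paper) yields
\[
\hat\prob^y\text{-}\lim_{T\to\infty}\int_0^t a^2\!\left(\frac{h^T_y}{h^T}\right)^{\!2}\!du=0,
\]
which is exactly the $L^2([0,t])$-type control of $\pi^T-\hat\pi$ you were trying to extract from Schauder estimates. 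Since $\|\Delta\|$ is constant (Assumption~\ref{ass: correl}), the $\hat B$-integral's quadratic variation also vanishes, and the exponential factor tends to $1$ in probability. In short: your route needs an unproved uniform gradient estimate on $h^T$; the paper replaces that entire analytic input by a purely probabilistic argument built on the martingale property of $L^T$ and Lemma~\ref{lemma: stoch-log conv}. To repair your proof you would either supply the Harnack--Schauder estimates in full, or adopt the paper's martingale shortcut for the wealth factor.
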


This result essentially allows to replace $\prob^{T,y}$ in Theorem \ref{thm: opt-port conv} with $\hat\prob^y$. Then, the classic turnpike follows from the equivalence of $\hat\prob^y$ and $\prob^y$ (cf. Lemma~\ref{lemma: WP hat-P}, part (ii)):

\begin{thm}[Classic Turnpike]\label{thm: power 1-factor}
Let Assumptions \ref{ass: utility} - \ref{ass: wellpose}, \ref{ass: regular coeffs}, \ref{ass: WP P}, \ref{ass: correl} and \ref{ass: ptphat} hold.
Then, for all $y\in E$,  $0\ne p<1$ and  $\epsilon,t>0$:
\begin{enumerate}[a)]
\item $\lim_{T\to \infty} \prob^y \,(\sup_{u\in [0,t]} \left|r^T_u - 1\right| \geq \epsilon) =0$,
\item $\lim_{T\to \infty} \prob^y \pare{\bra{\Pi^T, \Pi^T}_t \geq \epsilon} =0$.
\end{enumerate}
\end{thm}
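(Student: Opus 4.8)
The plan is to glue together three facts that will be available by this point: the abstract turnpike (Theorem~\ref{thm: opt-port conv}), the convergence of conditional densities (Lemma~\ref{thm: conv cond den}), and the equivalence $\hat{\prob}^y\sim\prob^y$ on each $\F_t$ (Lemma~\ref{lemma: WP hat-P}). Applied with reference probability $\prob^y$ --- legitimate because the standing hypotheses include Assumptions~\ref{ass: utility}--\ref{ass: wellpose} --- Theorem~\ref{thm: opt-port conv}, through part~iii) of Remark~\ref{rem: finite time}, already yields both claims, but with $\prob^y$ replaced by the myopic probability $\prob^{T,y}$ of~\eqref{def: p^T}. So the whole task reduces to transporting convergence in $\prob^{T,y}$-probability to convergence in $\prob^y$-probability; the delicate point is that the relevant events change with $T$, so the mere equivalence of $\prob^{T,y}$, $\hat{\prob}^y$ and $\prob^y$ on $\F_t$ does not by itself transfer the limits.

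Concretely, I would fix $y\in E$ and $\epsilon,t>0$, and for $T>t$ let $A^T\in\F_t$ be the event at issue --- $\{\sup_{u\in[0,t]}|r^T_u-1|\ge\epsilon\}$ for part~a) or $\{\bra{\Pi^T,\Pi^T}_t\ge\epsilon\}$ for part~b), both $\F_t$-measurable as recorded after~\eqref{def: cond density}. Theorem~\ref{thm: opt-port conv} with Remark~\ref{rem: finite time}~iii) gives $\prob^{T,y}(A^T)\to0$. Next I would set $D^T:=\left.\big(d\prob^{T,y}/d\hat{\prob}^y\big)\right|_{\F_t}$, which is strictly positive with $\expec^{\hat{\prob}^y}[D^T]=1$ since $\prob^{T,y}\sim\prob^y\sim\hat{\prob}^y$ on $\F_t$, and invoke Lemma~\ref{thm: conv cond den} to get $D^T\to1$ in $\hat{\prob}^y$-probability; hence $(1-D^T)^+\to0$ in $\hat{\prob}^y$-probability, and, being bounded by $1$, also in $L^1(\hat{\prob}^y)$. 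From the decomposition $\hat{\prob}^y(A^T)=\expec^{\hat{\prob}^y}[\indic_{A^T}D^T]+\expec^{\hat{\prob}^y}[\indic_{A^T}(1-D^T)]\le\prob^{T,y}(A^T)+\expec^{\hat{\prob}^y}[(1-D^T)^+]$ it then follows that $\hat{\prob}^y(A^T)\to0$.

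To pass from $\hat{\prob}^y$ to $\prob^y$, I would write $\prob^y(A^T)=\expec^{\hat{\prob}^y}[\indic_{A^T}G]$ with $G:=\left.\big(d\prob^y/d\hat{\prob}^y\big)\right|_{\F_t}$, a fixed strictly positive element of $L^1(\hat{\prob}^y)$ (Lemma~\ref{lemma: WP hat-P}), and for each $M>0$ bound $\prob^y(A^T)\le M\,\hat{\prob}^y(A^T)+\expec^{\hat{\prob}^y}[(G-M)^+]$. Letting $T\to\infty$ gives $\limsup_{T\to\infty}\prob^y(A^T)\le\expec^{\hat{\prob}^y}[(G-M)^+]$ for every $M$, and letting $M\to\infty$ (dominated convergence, since $0\le(G-M)^+\le G\in L^1(\hat{\prob}^y)$) yields $\prob^y(A^T)\to0$, which is exactly assertions~a) and~b).

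I do not expect any individual step to be hard, since the genuine analysis lives in Theorem~\ref{thm: opt-port conv} and in the ergodicity-driven Lemma~\ref{thm: conv cond den}. The one place where care is essential --- and hence the main obstacle --- is the change of measure along the $T$-dependent events $A^T$: the argument works only because the density of $\prob^{T,y}$ against $\hat{\prob}^y$ on $\F_t$ is uniformly integrable (automatic here, as $(1-D^T)^+\le1$), while the density of $\prob^y$ against $\hat{\prob}^y$ on $\F_t$ is a single integrable function that can be truncated. Making these two observations precise is precisely what upgrades the equivalence of measures to the sequential convergence that the theorem demands.
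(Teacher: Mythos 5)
Your proof is correct and follows essentially the same route as the paper: apply the abstract turnpike and Remark~\ref{rem: finite time}~iii) to get $\prob^{T,y}(A^T)\to0$, use Lemma~\ref{thm: conv cond den} to transfer this to $\hat{\prob}^y(A^T)\to0$, and then invoke $\hat{\prob}^y\sim\prob^y$ on $\F_t$ to conclude. The only (immaterial) differences are that the paper upgrades the in-probability convergence of $d\prob^{T,y}/d\hat{\prob}^y|_{\F_t}$ to $L^1$ via Scheff\'e's lemma where you exploit boundedness of $(1-D^T)^+$, and the paper compresses your explicit truncation of $G=d\prob^y/d\hat{\prob}^y|_{\F_t}$ into a citation of dominated convergence.
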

Abstract and classic turnpikes compare the finite-horizon optimal portfolio of
a generic utility to that of its CRRA benchmark at the same finite horizon. By
contrast, the explicit turnpike, discussed next, uses as a benchmark the long horizon limit of the optimal CRRA portfolio.

This result has two main implications: first, and most importantly, it shows
that the two approximations of replacing a generic utility with a power, and a finite horizon problem with its long-run limit, lead to small errors as the horizon becomes large. Second, this theorem has a nontrivial statement even for $U$ in the CRRA class: in this case, it states that the optimal finite-horizon portfolio converges to the long-run optimal portfolio, identified as a solution to the ergodic HJB equation \eqref{eq: eign eqn}.

To state the explicit turnpike, define, in analogy to \ref{def: ratio log}, the ratio of optimal wealth processes relative to the long-run benchmark, and their stochastic logarithms as:
\[
\hat{r}^T_{u} := \frac{X^{1,T}_u}{\hat{X}_u}, \qquad
\hat{\Pi}^T _u:= \int_0^{u} \frac{d\hat{r}^T_v}{\hat{r}^T_{v-}}, \quad \text{ for } u \in [0,T],
\]
where $\hat{X}$ is the wealth process of the long-run portfolio
$\hat{\pi}$.
\begin{thm}[Explicit Turnpike]\label{cor: power 1-factor myopic}
Under the assumptions of Theorem~\ref{thm: power 1-factor}, for any $y\in E$, $\epsilon,t>0$ and $0\ne p<1$:
\begin{enumerate}[a)]
\item $\lim_{T\to \infty} \prob^y \,(\sup_{u\in [0,t]} \left|\hat{r}^T_u - 1\right| \geq \epsilon) =0$,
\item $\lim_{T\to \infty} \prob^y \pare{\bra{\hat{\Pi}^T, \hat{\Pi}^T}_t \geq \epsilon} =0$.
\end{enumerate}
If $U$ is in CRRA class, \eqref{ass: growth} is not needed for the above convergence.
\end{thm}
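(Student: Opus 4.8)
The plan is to route the statement through the classic turnpike (Theorem~\ref{thm: power 1-factor}) by factoring the new ratio through the old one,
\[
\hat r^T_u = r^T_u\,\rho^T_u,\qquad r^T_u:=\frac{X^{1,T}_u}{X^{0,T}_u},\qquad \rho^T_u:=\frac{X^{0,T}_u}{\hat X_u}.
\]
All three are strictly positive continuous semimartingales started at $1$, so writing $r^T=\mathcal E(\Pi^T)$ and $\rho^T=\mathcal E(\Pi^{0,T})$, where $\Pi^{0,T}$ is the stochastic logarithm of $\rho^T$, and using $\mathcal E(A)\mathcal E(B)=\mathcal E(A+B+[A,B])$, one gets $\hat r^T=\mathcal E(\hat\Pi^T)$ with $\hat\Pi^T=\Pi^T+\Pi^{0,T}+[\Pi^T,\Pi^{0,T}]$. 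Theorem~\ref{thm: power 1-factor} already delivers $\sup_{u\in[0,t]}\abs{r^T_u-1}\to 0$ and $[\Pi^T,\Pi^T]_t\to 0$ in $\prob^y$-probability, so the proof reduces to the two analogous statements for $\rho^T$ and $\Pi^{0,T}$, followed by a recombination.

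For the $\rho^T$ piece I would use that, by \eqref{eq: wealth dynamics}, $X^{0,T}$ and $\hat X$ are Dol\'eans exponentials with the same drift $r(Y_\cdot)$ and risky parts $(\pi^T_\cdot)'\,dR$ and $\hat\pi_\cdot'\,dR$, so $\rho^T=\mathcal E(\Pi^{0,T})$ with
\[
\Pi^{0,T}_u=\int_0^u(\pi^T_v-\hat\pi_v)'\,dR_v-\int_0^u(\pi^T_v-\hat\pi_v)'\Sigma_v\hat\pi_v\,dv,
\]
hence $[\Pi^{0,T},\Pi^{0,T}]_t=\int_0^t(\pi^T_v-\hat\pi_v)'\Sigma_v(\pi^T_v-\hat\pi_v)\,dv$. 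By \eqref{eq: finite opt} and \eqref{eq: long-run opt}, $\pi^T_u-\hat\pi_u=\tfrac{\delta}{1-p}(\Sigma^{-1}\Upsilon)(Y_u)\big(\tfrac{v^T_y}{v^T}(u,Y_u)-\tfrac{\hat v_y}{\hat v}(Y_u)\big)$, so everything hinges on the convergence $v^T_y/v^T\to\hat v_y/\hat v$ locally uniformly on $E$ as the time to maturity tends to infinity --- equivalently, on the $C^1_{\mathrm{loc}}$ convergence of the solution of \eqref{eq: HJB}, up to a multiplicative constant, to the principal eigenfunction $\hat v$ of \eqref{eq: eign eqn}. This is the long-run behaviour of the value functions that also underpins Lemma~\ref{thm: conv cond den} (compare \citet*{Guasoni-Robertson}). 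Granting it, I would fix $t>0$ and note that the path of $Y$ on $[0,t]$ lies $\prob^y$-a.s.\ in a compact $K\subset E$ (Assumptions~\ref{ass: regular coeffs} and~\ref{ass: WP P}), that $\Sigma,\Upsilon$ are continuous there, and that $v^T$ depends only on $T-u$, which tends to $\infty$ uniformly in $u\in[0,t]$; a pathwise estimate then gives $[\Pi^{0,T},\Pi^{0,T}]_t\to 0$ $\prob^y$-a.s., and the same bound controls the total variation on $[0,t]$ of the finite-variation part of $\Pi^{0,T}$. A maximal inequality for the local-martingale part yields $\sup_{u\le t}\abs{\Pi^{0,T}_u}\to 0$, and since $\rho^T_u=\exp\big(\Pi^{0,T}_u-\tfrac12[\Pi^{0,T}]_u\big)$, also $\sup_{u\le t}\abs{\rho^T_u-1}\to 0$, both in $\prob^y$-probability.

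Recombining, $\sup_{u\le t}\abs{\hat r^T_u-1}\le\big(\sup_{u\le t}\abs{r^T_u}\big)\sup_{u\le t}\abs{\rho^T_u-1}+\sup_{u\le t}\abs{r^T_u-1}\to 0$ in $\prob^y$-probability, which is (a); and since $[\Pi^T,\Pi^{0,T}]$ is continuous of finite variation, $[\hat\Pi^T,\hat\Pi^T]_t=[\Pi^T+\Pi^{0,T},\Pi^T+\Pi^{0,T}]_t\le\big([\Pi^T,\Pi^T]_t^{1/2}+[\Pi^{0,T},\Pi^{0,T}]_t^{1/2}\big)^2$ by Kunita--Watanabe, which tends to $0$ in $\prob^y$-probability, which is (b). For the final sentence: if $U$ is CRRA then \eqref{ass: conv} forces $U$ to coincide with the benchmark utility $x^p/p$ up to an additive constant, whence $X^{1,T}\equiv X^{0,T}$, $r^T\equiv 1$ and $\Pi^T\equiv 0$; only the $\rho^T$-step is then needed, and it invokes Assumptions~\ref{ass: regular coeffs}, \ref{ass: WP P}, \ref{ass: correl}, \ref{ass: ptphat} and the value-function convergence, but never Assumption~\ref{ass: safe asset}, so \eqref{ass: growth} can be dispensed with.

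I expect the only genuinely delicate point to be the analytic input highlighted above: the locally uniform $C^1$ convergence of the finite-horizon value function to a multiple of the principal eigenfunction $\hat v$ of \eqref{eq: eign eqn}. This rests on the ergodicity of $Y$ under the long-run probability (the role of \eqref{ass: recurrent} and~\eqref{ass: posrec l1} in Assumption~\ref{ass: ptphat}) together with interior Schauder estimates for the parabolic equation \eqref{eq: HJB}. Everything else --- the Dol\'eans-exponential algebra, the maximal inequality, and the $\prob^y$-a.s.\ compactness of the range of $Y$ over a bounded time interval --- is routine.
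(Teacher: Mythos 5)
Your decomposition $\hat r^T = r^T\,\rho^T$ with $\rho^T = X^{0,T}/\hat X$ is a genuinely different route from the paper's. The paper works with $\hat r^T$ directly: it combines the explicit identity \eqref{eq: wealth ratio t} with the ergodic argument of Lemma~\ref{thm: conv cond den} to get $\hat\prob^y$-$\lim_T X^{0,T}_t/\hat X_t = 1$, uses Theorem~\ref{thm: power 1-factor}(a) plus the equivalence $\hat\prob^y\sim\prob^y$ for the $r^T$ piece, then observes that $\hat X$ is log-optimal under $\hat\prob^y$ so that by the num\'eraire property $\hat r^T$ is a $\hat\prob^y$-supermartingale; Fatou and Scheff\'e upgrade the convergence in probability of $\hat r^T_t$ to $L^1(\hat\prob^y)$, and Lemma~\ref{lemma: stoch-log conv} converts that $L^1$ statement about the \emph{terminal value} into the sup and square-bracket statements. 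No pointwise analytic estimates on $v^T_y/v^T$ are invoked.

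The gap in your argument is exactly the step you flag as delicate. You need $v^T_y/v^T \to \hat v_y/\hat v$, equivalently $h^T_y/h^T\to 0$, locally uniformly in $(t,y)$ as $T\to\infty$, in order to deduce $[\Pi^{0,T},\Pi^{0,T}]_t\to 0$ pathwise. The paper never establishes this. Proposition~\ref{prop: v^T classical} gives only the \emph{pointwise} limit $h^T(t,y)\to\int_E\hat v\,\hat m$ for each fixed $(t,y)$, coming from the $L^1$ ergodic theorem of Pinsky, and the convergence $h^T(t,Y_t)/h^T(0,y)\to 1$ almost surely. Upgrading to locally uniform $C^1$ convergence would require additional a priori bounds uniform in $T$ (e.g.\ interior Schauder with careful normalization, given that $h^T$ itself is only bounded by $e^{K(T-t)}/\hat v$ in \eqref{eq: h^T ev finite}) followed by an Arzel\`a--Ascoli and subsequence argument; none of this is carried out, and it is not an obvious consequence of Guasoni--Robertson. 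The paper's probabilistic route deliberately sidesteps this: because $L^T_t = h^T(t,Y_t)/h^T(0,y) = \Exp(\int a\,h^T_y/h^T\,d\hat W)_t$ is a $\hat\prob^y$-martingale with $L^T_t\to 1$ a.s., Scheff\'e gives $L^1$ convergence and Lemma~\ref{lemma: stoch-log conv} then yields $\int_0^t A\,(h^T_y/h^T)^2\,ds\to 0$ in $\hat\prob^y$-probability along the path of $Y$, which is all that is needed --- and is strictly weaker than your pathwise claim.

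Two smaller points. First, your $\rho^T$-step uses only the a.s.\ compactness of the range of $Y$ on $[0,t]$, but that compact set is random, so to conclude anything a.s.\ you really do need the locally uniform estimate on every compact $K\Subset E$, reinforcing that the analytic input carries the full weight. Second, your treatment of the last sentence (CRRA $U\Rightarrow X^{1,T}\equiv X^{0,T}$, so only the $\rho^T$-step remains, which does not touch Assumption~\ref{ass: safe asset}) is correct and is in fact the same observation that makes the paper's proof work without \eqref{ass: growth} in that case, since the paper's step $\hat\prob^y$-$\lim X^{1,T}_t/X^{0,T}_t=1$ becomes trivial and the remaining ingredients (\eqref{eq: wealth ratio t}, ergodicity, the num\'eraire property of $\hat\pi$) are independent of \eqref{ass: growth}.
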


\subsection{Applications}
Before proving the main results, we offer two examples of their significance.
We begin with an application to target-date mutual funds and the social planner problem.
\begin{exa}
Consider several investors, who differ in their initial capitals $(x_i)_{i=1}^n$ and risk aversions $(\gamma_i)_{i=1}^n$, but share the same long horizon $T$. Suppose that they do not invest independently, but rather pool their wealth into a common fund, delegate a manager to invest it, and then collect the proceeds on their respective capitals under the common investment strategy. This setting is typical of target-date retirement funds, in which savings from a diverse pool of participants are managed according to a single strategy, characterized by the common horizon $T$.

Suppose the manager invests as to maximize a weighted sum of the investors' expected utilities, thereby solving the problem
%\begin{equation}\label{ed}
\[
\max_{X \in \X^T} \sum_{i=1}^n w_i\expec^{\prob}
\bra{\frac{(x_i X_T)^{1-\gamma_i}}{1-\gamma_i}}
\]
%\end{equation}
for some $(w_i)_{i=1}^n >0$.
By homogeneity and linearity, this problem is equivalent to maximizing the expected value $\expec^{\prob}[U(X_T)]$ of the master utility function\footnote{If a logarithmic investor is present ($\gamma_i=1$ for some $i$), a constant is added to $U(x)$, and the stated equivalence remains valid.}:
\begin{equation}
U(x)=\sum_{i=1}^n \tilde w_i
\frac{x^{1-\gamma_i}}{1-\gamma_i}
\qquad\text{where }\tilde w_i =w_i {x_i^{1-\gamma_i}} .
\end{equation}
Thus, the fund manager is akin to a social planner, who ponders the welfare of various investors according to the weights $\tilde w_i$. The question is how these weights affect the choice of the common fund's strategy, if the horizon is distant, as for most retirement funds.

While this problem is intractable for a fixed horizon $T$, turnpike theorems offer a crisp solution in the long run limit. Indeed, the master utility function satisfies Assumption \ref{ass: utility} with $\gamma=1-p = \min_{1\le i\le n}\gamma_i$. Thus, for any market that satisfies the additional Assumptions
\ref{ass: safe asset}--\ref{ass: wellpose}, it is optimal for the fund manager to act on behalf of the least risk-averse investor.

The implication is that most or nearly all fund participants will find that the fund takes more risk than they would like, \emph{regardless} of the welfare weights $\tilde w_i$ (provided that they are strictly positive).
The result holds irrespective of market completeness or independence of returns, and indicates that a social planner objective is ineffective in choosing a portfolio that balances the needs or investors with different preferences.

Note that this result points in the same direction as the ones of \citet*{benninga2000hao} and \citet*{cvitanic:apf}, with the crucial difference that prices are endogenous in their models, while they are exogenous in our setting. Finally, the result should be seen in conjunction with the classical numeraire property of the log-optimal portfolio, whereby the wealth process of the logarithmic investor becomes arbitrarily larger than any other wealth process. In spite of this property, the fund manager does not choose the log-optimal strategy, but the one optimal for the least risk-averse investor.
\end{exa}

The next example is more technical. In the model that follows, returns of risky assets have constant volatility, but their drift is an independent Ornstein-Uhlenbeck process. Such independence entails that optimal CRRA portfolios are myopic, as in assumption $i)$ of Corollary~\ref{thm: power mayopic}. By contrast, the time-varying drift makes returns dependent over time, hence assumption $ii)$ of the same corollary does not hold.
As a result, the proof of Corollary~\ref{thm: power mayopic} fails, because it requires that $d\prob^{T,y}/d\prob^y\big|_{\F_t}$ is constant with respect to $T$, which is not the case here. Yet, both the classic and the explicit turnpikes hold in this model, in the form of Theorems \ref{thm: power 1-factor} and \ref{cor: power 1-factor myopic}. Of course, these results depend on the ergodicity of the diffusion $Y$.

\begin{exa}\label{exa: zero cor}
Consider the diffusion model:
\[dR_t = Y_t \,dt + d Z_t \quad  \text{ and } \quad dY_t = -Y_t \,dt + dW_t,\]
where the correlation between $Z$ and $W$ is $\rho=0$ and the safe rate
$r>0$. Clearly, Assumptions \ref{ass: regular coeffs} and \ref{ass: WP P}
hold. Furthermore, for $p < 0$ Assumption \ref{ass: correl}, as well as the hypotheses of
Proposition \ref{prop: turnpike holds} are met and hence Assumption \ref{ass: ptphat} holds as
well, yielding the results of Theorems \ref{thm: power 1-factor} and \ref{cor: power 1-factor myopic}. The optimal portfolio
for a CRRA investor is a myopic portfolio $\pi^T_t= Y_t/(1-p)$; see
\eqref{eq: finite opt}. However, the conditional density
$d\prob^{T,y}/d\prob^y\big|_{\F_t}$ depends on the horizon $T$. Indeed, it follows from Proposition \ref{prop: Ito-Girsanov result} below that
\[
 \frac{d\prob^{T,y}}{d\prob^y}|_{\F_t} = \Exp \pare{\int \frac{v^T_y(s, Y_s)}{v^T(s, Y_s)}\, dW_s - q \int_0^\cdot Y_s \, dZ_s}_t,
\]
where $v^T$ satisfies the HJB equation $\partial_t v + \frac12 \partial^2_{yy}
v- y \partial_y v + (rp-\frac{q}2 y^2) v=0$ with $v(T,y)=1$.
The above conditional density is independent of $T$ only if
$g^T(t,y):=v^T_y(t,y)/v^T(t,y)$ is independent of $T$ for any fixed
$(t,y)$. It can be shown
that $v^T$ is smooth, and not just twice continuously differentiable, in the
state variable $y$, and hence $g^T$ satisfies $\partial_t g + \frac12 \partial^2_{yy} g + (g-y)\partial_y g -g - qy=0$ with $g(T, y)=0$. If $g^T$ was independent of $T$, $0$ should be a solution to the previous equation. However, this is clearly not the case for $q\neq 0$.
\end{exa}

\section{Proof of the Abstract Turnpike}
This section contains the results leading to the abstract version of the turnpike theorem. The proof proceeds through two main steps:
\begin{enumerate}
\item[i)]
Establish that optimal payoffs for the generic utility converge to their CRRA counterparts;

\item[ii)]
Obtain from the convergence of optimal payoffs the convergence of wealth processes.
\end{enumerate}

\subsection{Convergence of optimal payoffs}\label{SS : conv opt payoff}
First, note that Assumption \ref{ass: NFLVR} implies the existence of a \emph{deflator}, that is a strictly positive process $Y$ such that $Y X$ is a (nonnegative) supermartingale for all $X\in \mathcal{X}^T$ and $T>0$. Condition \eqref{ass: growth} entails that $\lim_{T\rightarrow \infty}\expec[Y_T]=0$ for any such deflator $Y$. In this section, the capital letter $Y$ is used for deflators, while in the section on diffusion models it denotes the state variable.
Recall a result from \cite{karatzas.zitkovic.03}:
\begin{thm}[Karatzas-\v{Z}itkovi\'{c}]\label{thm: foc}
 Under Assumptions \ref{ass: utility} - \ref{ass: wellpose}, the optimal payoffs are
 \begin{equation}\label{eq: terminal wealth}
  X^{i,T}_T = I^i(y^{i,T} Y^{i,T}_T), \quad i=0,1, \,T>0,
 \end{equation}
 where $I^0$ is the inverse function of $x^{p-1}$, $I^1$ is the inverse function of $U'(x)$, the positive constant $y^{i,T}$ is the Lagrangian multiplier, and $Y^{i,T}$ is some supermartingale deflator. Moreover,
 \begin{equation}\label{eq: first order cond}
  y^{i,T} = \expec^\prob\bra{(U^i)'(X^{i,T}_T) X^{i,T}_T} \geq \expec^\prob\bra{(U^i)'(X^{i,T}_T) X_T}, \quad i=0, 1, \,T>0,
 \end{equation}
 for any $X\in \mathcal{X}^T$. Here $U^0(x) = x^p/p$ and $U^1(x) = U(x)$.
\end{thm}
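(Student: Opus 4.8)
The plan is to prove the theorem by convex duality, in the spirit of Kramkov--Schachermayer and in the exact form established by \cite{karatzas.zitkovic.03}. Fix $T>0$ and drop it from the notation. First I would recast the two problems in \eqref{def: value u} abstractly at the level of terminal payoffs. Let $\mathcal{C}$ be the set of $g\in L^0_+$ with $g\le X_T$ for some $X\in\X^T$, and let $\mathcal{Y}$ be the family of strictly positive \cadlag\ supermartingales $Y$ with $Y_0=1$ such that $YX$ is a supermartingale for every $X\in\X^T$; set $\mathcal{D}:=\set{h\in L^0_+ : h\le Y_T \text{ for some } Y\in\mathcal{Y}}$. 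Assumption \ref{ass: NFLVR} makes $\mathcal{Y}$ nonempty (the density process of an equivalent sigma-martingale measure, discounted by $S^0$, is a deflator, using that $S^0$ is bounded away from $0$ and $\infty$ on compacts), and No Free Lunch with Vanishing Risk forces $\mathcal{C}$ to be bounded in $L^0$. The structural fact I would use is the bipolar relation: $g\in\mathcal{C}$ if and only if $\expec^{\prob}[gh]\le 1$ for all $h\in\mathcal{D}$, and symmetrically; in particular $1\in\mathcal{C}$, and maximizing over $\X^T$ with unit initial capital is the same as maximizing over $\mathcal{C}$.

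Second, introduce the convex conjugates $V^i(y):=\sup_{x>0}\pare{U^i(x)-xy}$, $y>0$, for $i=0,1$, where $U^0(x)=x^p/p$ and $U^1=U$. The Inada conditions in Assumption \ref{ass: utility} guarantee that each $V^i$ is finite, strictly decreasing, strictly convex and continuously differentiable, with $-(V^i)'=I^i$ equal to the inverse of $(U^i)'$; explicitly $I^0(y)=y^{1/(p-1)}$ is the inverse of $x^{p-1}$ and $I^1=(U')^{-1}$. Set the dual value $v^i(y):=\inf_{h\in\mathcal{D}}\expec^{\prob}[V^i(yh)]$. I would then invoke the abstract duality theorem: under Assumption \ref{ass: wellpose}, i.e.\ $u^i(1)=u^{i,T}<\infty$, the functions $u^i$ and $v^i$ are conjugate, both problems are attained, the multiplier $y^{i,T}:=(u^i)'(1)\in(0,\infty)$ is well defined, the dual minimizer is $\hat h^i=Y^{i,T}_T$ for some $Y^{i,T}\in\mathcal{Y}$, and the pointwise first-order condition inside $V^i$ yields the primal optimizer $X^{i,T}_T=I^i(y^{i,T}Y^{i,T}_T)$, which is \eqref{eq: terminal wealth}. (Strict monotonicity of $U^i$ lets one replace the inequalities in the definitions of $\mathcal{C}$ and $\mathcal{D}$ by equalities at the optima.)

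Third, the two relations in \eqref{eq: first order cond} follow by bookkeeping once \eqref{eq: terminal wealth} holds. Applying $(U^i)'$ to \eqref{eq: terminal wealth} and using $(U^i)'\circ I^i=\mathrm{id}$ gives $(U^i)'(X^{i,T}_T)=y^{i,T}Y^{i,T}_T$. The duality also delivers the complementary-slackness identity $\expec^{\prob}[Y^{i,T}_T X^{i,T}_T]=1$ (comparing the pointwise equality $\expec^{\prob}[V^i(y^{i,T}Y^{i,T}_T)]=\expec^{\prob}[U^i(X^{i,T}_T)]-y^{i,T}\expec^{\prob}[Y^{i,T}_T X^{i,T}_T]$ with the Legendre relation $u^i(1)=v^i(y^{i,T})+y^{i,T}$), so $\expec^{\prob}[(U^i)'(X^{i,T}_T)X^{i,T}_T]=y^{i,T}\expec^{\prob}[Y^{i,T}_T X^{i,T}_T]=y^{i,T}$. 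For an arbitrary $X\in\X^T$ the process $Y^{i,T}X$ is a nonnegative supermartingale with $Y^{i,T}_0X_0=1$, hence $\expec^{\prob}[(U^i)'(X^{i,T}_T)X_T]=y^{i,T}\expec^{\prob}[Y^{i,T}_T X_T]\le y^{i,T}$, which is the displayed inequality.

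The step I expect to be the main obstacle is the abstract duality theorem of the second paragraph, specifically running it under the bare finiteness hypothesis $u^{i,T}<\infty$ (Assumption \ref{ass: wellpose}) rather than under an asymptotic elasticity bound on $U$. A priori the dual infimum need only be attained by a finitely additive element (or by a supermartingale whose terminal value strictly dominates the relevant density), and one must rule out loss of mass, so that the pointwise relation $X^{i,T}_T=I^i(y^{i,T}Y^{i,T}_T)$ and the identity $\expec^{\prob}[Y^{i,T}_TX^{i,T}_T]=1$ genuinely hold $\prob$-a.s.\ with a countably additive deflator; this is precisely the analysis carried out in \cite{karatzas.zitkovic.03} (in the more general setting with intermediate consumption and random endowment), which I would cite rather than reproduce. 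Everything else --- the bipolar relation, the properties of $V^i$, and the elementary manipulations of the third paragraph --- is routine.
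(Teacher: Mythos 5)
The paper does not prove this theorem at all — it simply writes ``Recall a result from \cite{karatzas.zitkovic.03}'' and states it as imported. Your proposal sets up exactly the Kramkov--Schachermayer/Karatzas--\v{Z}itkovi\'{c} duality machinery that underlies the cited result, derives \eqref{eq: first order cond} cleanly from \eqref{eq: terminal wealth} and complementary slackness, and honestly defers the one genuinely hard step (attainment of the dual problem by a countably additive supermartingale deflator under mere finiteness of $u^{i,T}$, without asymptotic elasticity) to the citation; that is precisely what the paper itself leans on, so your approach matches the paper's.
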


\begin{rem}\label{rem: long run measure}~
\begin{enumerate}
 \item[i)] It follows from \eqref{eq: terminal wealth} and the Inada condition that $X^{i,T}_T >0$ $\prob$-a.s. for $i=0, 1$ and $T\geq 0$. Since $X^{i,T}$ is a nonnegative $\qprob^T$-supermartingale and $\qprob^T$ is equivalent to $\prob$, it follows that $X^{i,T}_t >0$ $\prob$-a.s. for $0\leq t\leq T$.
 \item[ii)] Condition \eqref{ass: growth} entails that $\lim_{T\rightarrow \infty} \expec^\prob[Y^{i,T}_T] =0$ for $i=0, 1$ and $\lim_{T\rightarrow \infty} \prob^T (S_T^0 \geq N)=1$ for any $N>0$.
 \item[iii)] Recall the probability measure $\prob^T$ defined in \eqref{def: p^T}. The optimal wealth process $X^{0,T}$ has the num\'{e}raire property under $\prob^T$, i.e. $\expec^{\prob^T}[X_T / X^{0,T}_T] \leq 1$ for any $X\in \mathcal{X}^T$. This claim follows from $\expec^{\prob}\bra{(X^{0,T}_T)^p \pare{X_T/X^{0,T}_T -1}} \leq 0$, obtained from \eqref{eq: first order cond}, and switching the expectation from $\prob$ to $\prob^T$.
\end{enumerate}
\end{rem}

Both $X^{0,T}_T$ and $X^{1,T}_T$ will be shown to be unbounded as
$T\rightarrow \infty$. However, the main result of this subsection, Lemma
\ref{lem: r->1 power}, shows that their ratio at the horizon $T$, given by $r^T_T$
from \eqref{def: ratio log} satisfies $\lim_{T\rightarrow \infty}
\expec^{\prob^T} \bra{|r^T_T-1|} =0$. Lemma \ref{lem: r->1 power} will be the
culmination of a series of auxiliary results. Assumptions \ref{ass: utility} -
\ref{ass: wellpose} are enforced in the rest of this subsection.

\nada{To ease
the presentation, the following notational changes will be used
\emph{within the proofs} of all Lemma's within this subsection:
%\begin{equation}\label{eq: subsec notation}
\[
\begin{split}
X_T &:= X^{1,T}_T;\quad Y_T := Y^{1,T}_T;\quad y_T = y^{1,T}\\
\tX &:= X^{0,T}_T;\quad \tY_T := Y^{0,T}_T;\quad \tsmY := y^{0,T} =
\expec^{\prob}\bra{\tX^p}\\
r_T&:= r^T_T;\quad \fR_T := \fR(X_T)
\end{split}
\]
%\end{equation}
}

\begin{lem}\label{lemma: opt unbounded power}
It holds that
\begin{equation*}
  \lim_{T\to \infty} \prob^T \pare{X^{0,T}_T\geq N} =1, \quad \text{ for any } N>0.
 \end{equation*}
\end{lem}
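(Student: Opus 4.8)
The plan is to derive the lemma from the num\'eraire property of the CRRA-optimal wealth process $X^{0,T}$ under $\prob^T$ (Remark~\ref{rem: long run measure}(iii)), together with the growth condition \eqref{ass: growth}. The key observation is that the safe asset $S^0$ is itself an admissible wealth process: for the trivial strategy $H\equiv 0$ one has $\widetilde{X}^{0}\equiv 1$, hence $X^{0}=S^0\widetilde{X}^{0}=S^0\in\X^T$ for every $T>0$.

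I would carry this out as follows. By Remark~\ref{rem: long run measure}(i), $X^{0,T}_T>0$ $\prob$-a.s., and hence also $\prob^T$-a.s.\ since $\prob^T\sim\prob$, so $1/X^{0,T}_T$ is a well-defined nonnegative random variable. Applying the num\'eraire inequality $\expec^{\prob^T}\bra{X_T/X^{0,T}_T}\le 1$ to $X=S^0\in\X^T$ gives $\expec^{\prob^T}\bra{S^0_T/X^{0,T}_T}\le 1$. Since Assumption~\ref{ass: safe asset} provides a deterministic lower bound $S^0_T\ge\underline{S}^0_T>0$, this yields
\[
\expec^{\prob^T}\bra{\frac{1}{X^{0,T}_T}}\;\le\;\frac{1}{\underline{S}^0_T}.
\]
Fix $N>0$. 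By Markov's inequality,
\[
\prob^T\pare{X^{0,T}_T< N}\;=\;\prob^T\pare{\frac{1}{X^{0,T}_T}>\frac{1}{N}}\;\le\;N\,\expec^{\prob^T}\bra{\frac{1}{X^{0,T}_T}}\;\le\;\frac{N}{\underline{S}^0_T}.
\]
Condition \eqref{ass: growth} forces $\underline{S}^0_T\to\infty$, so the right-hand side tends to $0$ as $T\to\infty$, which is the claim.

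There is no genuine obstacle here: the statement is an almost immediate consequence of the num\'eraire property. The only points requiring (routine) verification are that $S^0\in\X^T$ — clear from the trivial strategy — and that the num\'eraire property of $X^{0,T}$ under $\prob^T$ is legitimately invoked; the latter is recorded in Remark~\ref{rem: long run measure}(iii) and follows in turn from the first-order condition \eqref{eq: first order cond} after switching the expectation from $\prob$ to $\prob^T$. Everything else is Markov's inequality.
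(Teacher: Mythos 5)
Your proof is correct and uses essentially the same idea as the paper: apply the num\'eraire property of $X^{0,T}$ under $\prob^T$ to the admissible wealth $S^0$, then exploit the growth of the safe asset. Your version is slightly more streamlined---you invoke the deterministic lower bound $\underline{S}^0_T$ directly and finish with Markov's inequality, whereas the paper introduces an auxiliary threshold $\tilde{N}$, first records that $\prob^T(S^0_T<\tilde{N})\to 0$ (Remark~\ref{rem: long run measure}(ii)), and then combines the two limits via a union bound before letting $\tilde{N}\to\infty$.
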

\begin{proof}
It suffices to prove $\limsup_{T\to \infty} \prob^T (X^{0,T}_T <N)=0$ for each fixed $N$. To this end, the num\'eraire property of $X^{0,T}$ under $\prob^T$ implies that:
 \[
  1\geq \expec^{\prob^T}\bra{\frac{S^0_T}{X^{0,T}_T}} \geq \expec^{\prob^T} \bra{\frac{S^0_T}{X^{0,T}_T} \, \indic_{\set{X^{0,T}_T<N, S^0_T\geq \tilde{N}}}} \geq \frac{\tilde{N}}{N}\, \prob^T\pare{X^{0,T}_T < N, S^0_T \geq \tilde{N}},
 \]
 for any positive constant $\tilde{N}$. As a result, $\prob^T(X^{0,T}_T<N, S^0_T \geq \tilde{N})\leq N / \tilde{N}$. Combining the last inequality with Remark \ref{rem: long run measure} item ii), it follows that
 \[
  \limsup_{T\to \infty} \prob^T (X^{0,T}_T<N) \leq \limsup_{T\to \infty} \prob^T (X^{0,T}_T<N, S^0_T \geq \tilde{N}) + \lim_{T\to \infty} \prob^T (S^0_T < \tilde{N}) \leq \frac{N}{\tilde{N}}.
 \]
Then, the statement follows since $\tilde{N}$ is chosen arbitrarily.
\end{proof}

Recall the Lagrangian multipliers $y^{i,T}$, $i=0, 1$,  from Theorem \ref{thm: foc}. The following result presents the asymptotic behavior of $y^{0,T} / y^{1,T}$ as $T\rightarrow \infty$.

\begin{lem}\label{lemma: ratio lagrangian}
 \[
  \liminf_{T\rightarrow \infty} \frac{y^{0,T}}{y^{1,T}} \geq 1.
 \]
\end{lem}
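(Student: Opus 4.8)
The plan is to read the Lagrange multipliers off the first‑order conditions of Theorem~\ref{thm: foc}. Writing $U'(x)=\fR(x)\,x^{p-1}$, which is just the definition \eqref{eq: marginal ratio function}, these conditions give the two identities
\[
y^{0,T} = \expec^\prob\bra{(X^{0,T}_T)^p}, \qquad y^{1,T} = \expec^\prob\bra{U'(X^{1,T}_T)\,X^{1,T}_T} = \expec^\prob\bra{\fR(X^{1,T}_T)\,(X^{1,T}_T)^p},
\]
together with the sub‑optimality of $X^{1,T}$ for the power utility, $\expec^\prob[(X^{1,T}_T)^p/p]\le u^{0,T}=y^{0,T}/p$. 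Two further facts available at this stage do the work: $U'(X^{1,T}_T)=y^{1,T}Y^{1,T}_T$, obtained by applying $U'$ to \eqref{eq: terminal wealth} since $I^1=(U')^{-1}$; and $\lim_{T\to\infty}\expec^\prob[Y^{1,T}_T]=0$, which is the consequence of \eqref{ass: growth} recorded after Theorem~\ref{thm: foc}. As $y^{0,T},y^{1,T}>0$, the claim is equivalent to $\limsup_{T\to\infty}y^{1,T}/y^{0,T}\le 1$, which I would establish next.

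For $0\le p<1$ (in particular the logarithmic case): fix $\eps\in(0,1)$ and, using \eqref{ass: conv}, choose $K_\eps$ with $\fR(x)\le 1+\eps$ for all $x\ge K_\eps$. Split the identity for $y^{1,T}$ according to $\{X^{1,T}_T\ge K_\eps\}$ and $\{X^{1,T}_T<K_\eps\}$. On the first event $\fR(X^{1,T}_T)(X^{1,T}_T)^p\le(1+\eps)(X^{1,T}_T)^p$, so that part is at most $(1+\eps)\expec^\prob[(X^{1,T}_T)^p]$, which is $\le(1+\eps)\,y^{0,T}$ by power sub‑optimality when $0<p<1$ and trivially (both sides equal $1$) when $p=0$. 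On the second event, using $U'(X^{1,T}_T)=y^{1,T}Y^{1,T}_T$ and then $X^{1,T}_T<K_\eps$,
\[
\expec^\prob\bra{\fR(X^{1,T}_T)(X^{1,T}_T)^p\,\indic_{\{X^{1,T}_T<K_\eps\}}} = y^{1,T}\,\expec^\prob\bra{Y^{1,T}_T X^{1,T}_T\,\indic_{\{X^{1,T}_T<K_\eps\}}} \le K_\eps\, y^{1,T}\,\expec^\prob[Y^{1,T}_T].
\]
Adding the two pieces gives $y^{1,T}\bigl(1-K_\eps\expec^\prob[Y^{1,T}_T]\bigr)\le(1+\eps)\,y^{0,T}$, and since $\expec^\prob[Y^{1,T}_T]\to 0$ this yields $\limsup_{T\to\infty}y^{1,T}/y^{0,T}\le 1+\eps$; letting $\eps\downarrow 0$ finishes this case.

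For $p<0$ the same split is ineffective, because $y^{0,T}=\expec^\prob[(X^{0,T}_T)^p]\le(\underline{S}^0_T)^p\to 0$ while power sub‑optimality forces $\expec^\prob[(X^{1,T}_T)^p]\ge y^{0,T}$ (possibly $+\infty$), so the ``$\ge K_\eps$'' part of $y^{1,T}$ is no longer dominated by a multiple of $y^{0,T}$. Here I would instead go through the dual problem: $y^{0,T}$ and $y^{1,T}$ are the optimal dual multipliers, and using $y^{0,T}$ as a sub‑optimal multiplier for the generic problem gives $u^{1,T}\le\expec^\prob[\widetilde{U}((X^{0,T}_T)^{p-1})]+y^{0,T}$, where $\widetilde{U}$ is the convex conjugate of $U$ and $(X^{0,T}_T)^{p-1}=y^{0,T}Y^{0,T}_T$. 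Since \eqref{ass: conv} forces $\widetilde{U}(w)\sim\tfrac{1-p}{p}w^{q}$ as $w\downarrow 0$ (with $q=p/(p-1)$), and since $X^{0,T}_T\to\infty$ in $\prob$‑probability for $p<0$ (because $\prob(X^{0,T}_T<K)\le y^{0,T}/K^p\to 0$), this bound can be sharpened to $u^{1,T}\le(1+o(1))\,y^{0,T}/p$; combined with the matching lower bound $u^{1,T}\ge\expec^\prob[U(X^{0,T}_T)]$ and the Fenchel identity for the generic problem, one recovers $\limsup_{T\to\infty}y^{1,T}/y^{0,T}\le 1$.

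The main obstacle, in both regimes, is controlling the region where the generic optimal wealth is small, where \eqref{ass: conv} says nothing and $\fR$ need not even be bounded. For $0\le p<1$ this is handled cleanly once one notices that on $\{X^{1,T}_T<K_\eps\}$ the integrand equals $U'(X^{1,T}_T)X^{1,T}_T=y^{1,T}Y^{1,T}_TX^{1,T}_T\le K_\eps y^{1,T}Y^{1,T}_T$ and that $\expec^\prob[Y^{1,T}_T]\to 0$. For $p<0$ the delicate point is instead the uniformity of the asymptotic‑isoelasticity comparison for $\widetilde{U}$ near the origin, which is why the strong convergence $X^{0,T}_T\to\infty$ in $\prob$‑probability (rather than merely under $\prob^T$, as in Lemma~\ref{lemma: opt unbounded power}) is the quantity one leans on.
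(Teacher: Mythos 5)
Your treatment of $0\le p<1$ is correct and, after unpacking, coincides with the paper's argument: both rest on the identity $y^{1,T}=\expec^{\prob}\bra{U'(X^{1,T}_T)X^{1,T}_T}$, a split at a wealth threshold, the bound $\fR(x)\le 1+\eps$ above the threshold with power sub-optimality $\expec^{\prob}\bra{(X^{1,T}_T)^p}\le y^{0,T}$, and the estimate $\expec^{\prob}\bra{Y^{1,T}_T X^{1,T}_T\,\indic_{\{X^{1,T}_T<K_\eps\}}}\le K_\eps\,\expec^{\prob}\bra{Y^{1,T}_T}\to 0$ below it. Merging $p=0$ and $p\in(0,1)$ this way is a nice streamlining of what the paper does in two separate sub-cases with the function $I=(U')^{-1}$.

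The $p<0$ sketch, however, departs from the paper and does not close as written. The paper stays with the dual deflator: starting from $1=\expec^{\prob}\bra{Y^{1,T}_T\, I(y^{1,T}Y^{1,T}_T)}$ and the bound $I(y)\le(1+\eps)y^{1/(p-1)}$ on $\{y^{1,T}Y^{1,T}_T<\delta\}$, it obtains $\liminf (y^{1,T})^{-1/(1-p)}\expec^{\prob}\bra{(Y^{1,T}_T)^q}\ge 1/(1+\eps)$, raises to the $(1-p)$-th power, and then uses the dual optimality of $Y^{0,T}$ (which \emph{maximizes} $\expec^{\prob}\bra{\mathcal{Y}_T^q}$ since $q\in(0,1)$) together with $\expec^{\prob}\bra{(Y^{0,T}_T)^q}^{1-p}=y^{0,T}$. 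Your route has two genuine gaps. First, the asymptotics $\widetilde{U}(w)\sim\frac{1-p}{p}w^q$ as $w\downarrow 0$ holds only modulo the constant $\widetilde{U}(0^+)=U(\infty)$, so you must normalize $U(\infty)=0$ (harmless, but must be said). Second, and more seriously, the two value-function bounds $u^{1,T}\le(1+o(1))y^{0,T}/p$ and $u^{1,T}\ge\expec^{\prob}\bra{U(X^{0,T}_T)}\ge(1+o(1))y^{0,T}/p$ fix only $u^{1,T}$ to leading order; feeding them into the Fenchel identity $u^{1,T}=\expec^{\prob}\bra{\widetilde{U}(y^{1,T}Y^{1,T}_T)}+y^{1,T}$ yields no new information about the multiplier, because $\expec^{\prob}\bra{\widetilde{U}(y^{1,T}Y^{1,T}_T)}=u^{1,T}-y^{1,T}$ \emph{identically} (it is the Fenchel equality at the conjugate pair $(X^{1,T}_T,y^{1,T}Y^{1,T}_T)$, not an inequality to exploit). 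To isolate $y^{1,T}$ one must return to the budget identity $1=\expec^{\prob}\bra{Y^{1,T}_T\,I(y^{1,T}Y^{1,T}_T)}$ and compare the moment $\expec^{\prob}\bra{(Y^{1,T}_T)^q}$ with $\expec^{\prob}\bra{(Y^{0,T}_T)^q}$, which is precisely the step your sketch is missing and the one the paper supplies.
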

\begin{proof}
 The statement will be proved separately for $p=0$, $p\in(0,1)$, and $p<0$. Throughout this proof, in order to ease notation, we set $\alpha_T = y^{1,T}$, $Y_T = Y^{1,T}_T$, $\widetilde{Y}_T = Y^{0,T}_T$, $X_T = X^{1,T}_T$, $\widetilde{X}_T = X^{0,T}_T$, and $I = (U')^{-1}$. All expectations are under $\prob$. Observe first that
 \begin{equation}\label{eq: conv I}
   \lim_{y \downarrow 0} I(y)y^{\frac{1}{1-p}}=1.
 \end{equation}
 Indeed, set $x= I(y)$, hence $x\uparrow \infty$ as $y\downarrow 0$. Then the convergence above follows from \eqref{ass: conv} via
 \[
  \frac{I(y)}{y^{\frac{1}{p-1}}} = \frac{I(U'(x))}{(U'(x))^{\frac{1}{p-1}}} = \frac{x}{(U'(x))^{\frac{1}{p-1}}} = \pare{\frac{x^{p-1}}{U'(x)}}^{\frac{1}{p-1}} \rightarrow 1, \quad \text{ as } y\downarrow 0.
 \]

 \vspace{2mm}
 \noindent\underline{Case $p=0$:} It follows from \eqref{eq: conv I} that for any $\epsilon>0$ there exists $\delta>0$ such that $1-\epsilon \leq y I(y)\leq 1+\epsilon$ for $y<\delta$. Then \eqref{eq: terminal wealth} and \eqref{eq: first order cond} imply
 \[
 \begin{split}
  1 & = \expec[Y_T I(\alpha_T Y_T)] = \expec[Y_T I(\alpha_T Y_T) \indic_{\{\alpha_T Y_T < \delta\}} + Y_T I(\alpha_T Y_T) \indic_{\{\alpha_T Y_T \geq \delta\}}]\\
  &\leq \frac{1+\epsilon}{\alpha_T} \prob(\alpha_T Y_T <\delta) + I(\delta) \expec[Y_T\indic_{\{\alpha_T Y_T \geq \delta\}}]\\
  &\leq \frac{1+\epsilon}{\alpha_T} + I(\delta) \expec[Y_T],
 \end{split}
 \]
where the first inequality follows because $I$ is decreasing. Now, the previous inequality combined with Remark \ref{rem: long run measure} item ii) implies that
 \[
  1\leq \liminf_{T\rightarrow \infty} \frac{1+\epsilon}{\alpha_T},
 \]
 from which the statement follows since for $p=0$, $y^{0,T}=1$ and $\epsilon$ is chosen arbitrarily.

 \vspace{2mm}
 \noindent\underline{Case $p\in (0,1)$:} It follows from \eqref{ass: conv} that for any $\epsilon>0$ there exists $M>0$ such that $1-\epsilon \leq U'(x) x^{1-p} \leq 1+\epsilon$ for $x\geq M$. Then \eqref{eq: first order cond} implies that
 \[
 \begin{split}
  1 &= \frac{1}{\alpha_T}\expec\bra{ U'(X_T) X_T} = \frac{1}{\alpha_T}\expec\bra{ U'(X_T) X_T^{1-p} X^p_T \,\indic_{\{X_T\geq M\}}} + \frac{1}{\alpha_T}\expec\bra{ U'(X_T) X_T \,\indic_{\{X_T\leq M\}}}\\
  & \leq  \frac{1+\epsilon}{\alpha_T}\expec\bra{X^p_T \,\indic_{\{X_T\geq M\}}} +\frac{1}{\alpha_T}\expec\bra{ U'(X_T) X_T \,\indic_{\{X_T\leq M\}}}.
 \end{split}
 \]
 Note that $(1/\alpha_T)\expec\bra{ U'(X_T) X_T \,\indic_{\{X_T\leq M\}}} = \expec[Y_T X_T \,\indic_{\{X_T\leq M\}}]\leq M \expec[Y_T] \rightarrow 0$, as $T\rightarrow \infty$. Therefore
 \[
  \frac{1}{1+\epsilon} \leq \liminf_{T\rightarrow \infty} \frac{1}{\alpha_T} \expec\bra{X^p_T \,\indic_{\{X_T\geq M\}}} \leq \liminf_{T\rightarrow \infty} \frac{1}{\alpha_T} \expec[X^p_T] \leq \liminf_{T\rightarrow \infty} \frac{1}{\alpha_T} \expec[\widetilde{X}_T^p],
 \]
 where the third inequality follows from the optimality of $\widetilde{X} = X^{0,T}$ for $\sup_{x\in \mathcal{X}^T}\expec[X^p_T/p]$. Note that $y^{0,T} = \expec[\widetilde{X}_T^p]$. The statement follows from the previous inequality since $\epsilon$ is chosen arbitrarily.

 \vspace{2mm}
 \noindent\underline{Case $p<0$:} For any $\epsilon>0$ there exists $\delta>0$
 such that $1-\epsilon \leq I(y) y^{\frac{1}{1-p}} \leq 1+\epsilon$ for
 $y<\delta$. Then \eqref{eq: terminal wealth} and \eqref{eq: first order cond}
 yield (recall $q = p/(p-1)$ is the conjugate exponent to $p$)
 \[
 \begin{split}
  1& = \expec\bra{Y_T I(\alpha_T Y_T)} = \expec\bra{Y_T I(\alpha_T Y_T) \,\indic_{\{\alpha_T Y_T<\delta\}}} + \expec\bra{Y_T I(\alpha_T Y_T) \,\indic_{\{\alpha_T Y_T \geq \delta\}}}\\
  &\leq \frac{1+\epsilon}{\alpha_T^{\frac{1}{1-p}}} \expec\bra{Y^q_T \,\indic_{\{\alpha_T Y_T<\delta\}}} +  \expec\bra{Y_T I(\alpha_T Y_T) \,\indic_{\{\alpha_T Y_T \geq \delta\}}}.
 \end{split}
 \]
 Since $\expec\bra{Y_T I(\alpha_T Y_T) \,\indic_{\{\alpha_T Y_T \geq \delta\}}} \leq I(\delta) \expec[Y_T] \rightarrow 0$, as $T\rightarrow \infty$, the inequality in the last line yields
 \[
  \frac{1}{1+\epsilon} \leq \liminf_{T\rightarrow \infty} \frac{1}{\alpha^{\frac{1}{1-p}}_T} \expec\bra{Y^q_T \,\indic_{\{\alpha_T Y_T<\delta\}}} \leq \liminf_{T\rightarrow\infty} \frac{1}{\alpha_T^{\frac{1}{1-p}}} \expec\bra{Y^q_T}.
 \]
 The ${1-p}^{th}$ power on both sides of the previous inequality gives
% \begin{equation}\label{eq: liminf est p<0}
\[
  \pare{\frac{1}{1+\epsilon}}^{1-p} \leq \liminf_{T\rightarrow \infty} \frac{1}{\alpha_T} \expec[Y^q_T]^{1-p} \leq \liminf_{T\rightarrow \infty} \frac{1}{\alpha_T} \expec[\widetilde{X}_T^p],
\]
% \end{equation}
 from which the statement follows.  Since $p<0$, the second inequality above follows from
 \[
 \frac{1}{p} \,\expec[\widetilde{X}_T^p] = \frac{1}{p}\,\expec[\widetilde{Y}_T^q]^{1-p} \leq \frac{1}{p} \, \expec[Y_T^q]^{1-p},
 \]
 where the equality holds due to the duality for power utility and the inequality follows from the optimality of $\widetilde{Y}$ for the dual problem  which minimizes $\expec[-\mathcal{Y}_T^q / q]$ among all supermartingale deflators $\mathcal{Y}$.
\end{proof}

The previous two lemmas combined describe the asymptotic behavior of $X^{1,T}_T$
and $\fR(X^{1,T}_T)$ where $\fR$ is given in \eqref{eq: marginal ratio function}.

\begin{lem}\label{lemma: asy RX1}
It holds that
 \[
  \lim_{T\rightarrow \infty} \prob^T(X_T^{1,T} \geq N) =1, \quad \text{ for any } N>0.
 \]
 Hence
 \[
  \lim_{T\rightarrow \infty} \prob^T(|\fR(X^{1,T}_T) -1| \geq \epsilon) =0, \quad \text{ for any } \epsilon>0.
 \]
\end{lem}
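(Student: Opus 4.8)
The second assertion follows at once from the first: given $\epsilon>0$, condition \eqref{ass: conv} yields $N>0$ with $\abs{\fR(x)-1}<\epsilon$ for all $x> N$, so $\set{\abs{\fR(X^{1,T}_T)-1}\geq\epsilon}\subseteq\set{X^{1,T}_T\leq N}$ and hence $\prob^T(\abs{\fR(X^{1,T}_T)-1}\geq\epsilon)\leq\prob^T(X^{1,T}_T\leq N)$. Thus it suffices to prove $\lim_{T\to\infty}\prob^T(X^{1,T}_T\leq N)=0$ for every $N>0$, which is also exactly the first claim.

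The plan for the latter rests on the first-order condition \eqref{eq: first order cond}. Applying it with $i=1$ and the admissible competitor $X=X^{0,T}\in\X^T$ gives $\expec^{\prob}\bra{U'(X^{1,T}_T)\,X^{0,T}_T}\leq y^{1,T}$. Since $U'$ is decreasing, on $\set{X^{1,T}_T\leq N}$ one has $U'(X^{1,T}_T)\geq U'(N)>0$, whence
\[
\expec^{\prob}\bra{X^{0,T}_T\,\indic_{\set{X^{1,T}_T\leq N}}}\leq \frac{y^{1,T}}{U'(N)}.
\]
Next I transfer this bound to $\prob^T$. Because $d\prob^T/d\prob=(X^{0,T}_T)^p/y^{0,T}$ with $y^{0,T}=\expec^{\prob}[(X^{0,T}_T)^p]$, and $X^{0,T}_T>0$ $\prob$-a.s. (Remark~\ref{rem: long run measure}(i)), we have $\expec^{\prob}[X^{0,T}_T\indic_A]=y^{0,T}\,\expec^{\prob^T}[(X^{0,T}_T)^{1-p}\indic_A]$ for any event $A$, so
\[
\expec^{\prob^T}\bra{(X^{0,T}_T)^{1-p}\,\indic_{\set{X^{1,T}_T\leq N}}}\leq \frac{y^{1,T}}{U'(N)\,y^{0,T}}.
\]

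Now fix $K>0$. On $\set{X^{0,T}_T\geq K}$ we have $(X^{0,T}_T)^{1-p}\geq K^{1-p}$ (using $p<1$), so a Chebyshev-type estimate gives $\prob^T(X^{1,T}_T\leq N,\ X^{0,T}_T\geq K)\leq K^{p-1}y^{1,T}/(U'(N)y^{0,T})$. Combining with $\prob^T(X^{1,T}_T\leq N)\leq \prob^T(X^{0,T}_T<K)+\prob^T(X^{1,T}_T\leq N,\ X^{0,T}_T\geq K)$ and taking $\limsup_{T\to\infty}$: the first term vanishes by Lemma~\ref{lemma: opt unbounded power}, while $\limsup_T y^{1,T}/y^{0,T}\leq 1$ by Lemma~\ref{lemma: ratio lagrangian}; hence $\limsup_{T\to\infty}\prob^T(X^{1,T}_T\leq N)\leq K^{p-1}/U'(N)$. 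Letting $K\uparrow\infty$ and using $p-1<0$ finishes the proof.

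The only delicate point is the transfer from a $\prob$-bound to a $\prob^T$-statement: the Radon--Nikodym change introduces the unbounded factor $(X^{0,T}_T)^{1-p}$, which rules out a direct conclusion. Localizing on $\set{X^{0,T}_T\geq K}$ and controlling its complement through the already-established unboundedness of $X^{0,T}_T$ under $\prob^T$ (Lemma~\ref{lemma: opt unbounded power}) is what makes the argument go through, with Lemma~\ref{lemma: ratio lagrangian} keeping the resulting constant uniformly bounded in $T$.
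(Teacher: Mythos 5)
Your proof is correct and follows essentially the same route as the paper's: both use the first-order condition \eqref{eq: first order cond} with competitor $X^{0,T}$, bound $U'(X^{1,T}_T)\geq U'(N)$ on $\{X^{1,T}_T\leq N\}$, localize on $\{X^{0,T}_T\geq K\}$ to control the unbounded factor $(X^{0,T}_T)^{1-p}$ after changing to $\prob^T$, and invoke Lemmas~\ref{lemma: opt unbounded power} and~\ref{lemma: ratio lagrangian}, and both deduce the second statement from the first via \eqref{ass: conv}. The only cosmetic difference is that you state the bound under $\prob$ before changing measure while the paper manipulates the $\prob^T$-expectation directly, and you use the sharper constant $1$ in place of $2$ from the $\limsup$; neither changes the argument.
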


\begin{proof}
 It follows from Lemma \ref{lemma: ratio lagrangian} and \eqref{eq: first order cond} that
% \begin{equation}\label{eq: est X^1->inf}
\[
  2\geq \frac{y^{1,T}}{y^{0,T}}  \geq \frac{\expec^\prob[X^{0,T}_T U'(X^{1,T}_T)]}{\expec^\prob[(X^{0,T}_T)^p]} = \expec^{\prob^T}\bra{\frac{U'(X^{1,T}_T)}{(X^{0,T}_T)^{p-1}}}, \quad \text{ for sufficiently large } T.
\]
% \end{equation}
 Combining the previous inequality with Lemma \ref{lemma: opt unbounded power}, the first statement follows.
 Indeed, for any given $M$ and $N$, on the set $\{X^{1,T}_T\leq N; X^{0,T}_T\geq M\}$, $(X^{0,T}_T)^{1-p} \geq M^{1-p}$ and $U'(X^{1,T}_T)\geq U'(N)$, therefore
\[
   2\geq \expec^{\prob^T} \bra{\frac{U'(X^{1,T}_T)}{(X^{0,T}_T)^{p-1}} \,\indic_{\{X^{1,T}_T\leq N; X^{0,T}_T\geq M\}}} \geq U'(N) M^{1-p} \,\prob^T(X^{1,T}_T\leq N; X^{0,T}_T\geq M).
\]
Hence,
\begin{equation*}
 \prob^T(X^{1,T}_T \leq N) \leq \prob^T(X^{1,T}_T\leq N; X^{0,T}_T\geq M) + \prob^T(X^{0,T}_T\leq M) \leq \frac{2}{U'(N)M^{1-p}} + \prob^T(X^{0,T}_T\leq M).
\end{equation*}
Letting first $T\rightarrow \infty$ and then $M\rightarrow \infty$ in the previous inequality, the first statement follows.

We move to the proof of the second statement. For any $\eps>0$, due to \eqref{ass: conv}, there exists a sufficiently large $N_{\eps}$ such that $|\fR(x)-1|<\eps$ for any $x>N_{\eps}$. As a result, $\prob^T(|\fR(X^{1,T}_T)-1|\geq \eps, X^{1,T}_T> N_{\eps}) = 0$. Combining the previous identity with $\lim_{T\to \infty} \prob^T(X^{1,T}_T\leq N_{\eps})=0$, the second statement follows.
\end{proof}

We continue with the following result, which is crucial for the proof of Lemma
\ref{lem: r->1 power} later on. Recall that $r^T$ is given in \eqref{def: ratio log}.

\begin{lem}\label{lemma: r_T-1 est}
It holds that
 \[
  \lim_{T\rightarrow \infty} \expec^{\prob^T}\bra{\left|1-\fR(X^{1,T}_T)(r^T_T)^{p-1}\right| \left|r^T_T-1\right|}=0.
 \]
\end{lem}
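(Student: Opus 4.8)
The plan is to show that the integrand $\left|1-\fR(X^{1,T}_T)(r^T_T)^{p-1}\right| \left|r^T_T-1\right|$ is bounded above by a quantity whose $\prob^T$-expectation vanishes, by exploiting the numéraire property of $X^{0,T}$ under $\prob^T$ together with the first-order conditions. Write $\fR_T := \fR(X^{1,T}_T)$ and $r_T := r^T_T$ for brevity. The key algebraic observation is that the expression $(1-\fR_T r_T^{p-1})(r_T-1)$, after multiplying out, can be rearranged so that each term is controlled by $r_T$, $r_T^p$, or $\fR_T r_T^{p-1}(r_T-1) = \fR_T r_T^p - \fR_T r_T^{p-1}$. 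The point is that $\expec^{\prob^T}[r_T] \le 1$ by the numéraire property of $X^{0,T}$ (Remark~\ref{rem: long run measure}(iii) applied to $X = X^{1,T}$), and similarly $\expec^{\prob^T}[\fR_T r_T^{p-1}] = \expec^{\prob}[(X^{0,T}_T)^p \fR_T (X^{1,T}_T/X^{0,T}_T)^{p-1}]/\expec^{\prob}[(X^{0,T}_T)^p] = \expec^{\prob}[X^{1,T}_T U'(X^{1,T}_T)]/y^{0,T} = y^{1,T}/y^{0,T}$, which is bounded (it tends to a limit $\le 1$ by Lemma~\ref{lemma: ratio lagrangian}, and one also has an upper bound such as $\le 2$ for large $T$ as used in the proof of Lemma~\ref{lemma: asy RX1}). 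So the various moments appearing are uniformly bounded in $T$.

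Next I would combine this uniform integrability-type control with the pointwise convergences already established: Lemma~\ref{lemma: asy RX1} gives $\fR_T \to 1$ in $\prob^T$-probability, and one should first deduce that $r_T \to 1$ in $\prob^T$-probability as well — this is essentially the content of Lemma~\ref{lem: r->1 power}, but since that lemma is stated to \emph{follow} from the present one, I instead expect the argument to go the other way: the present lemma is a stepping stone, so I would \emph{not} assume $r_T \to 1$ here. Rather, the strategy is: on the event $\{r_T \le K\}$ for a large constant $K$, the factor $\left|1-\fR_T r_T^{p-1}\right|$ is uniformly bounded (using $\fR_T$ bounded away from $0$ and $\infty$ on $\{X^{1,T}_T > N_\eps\}$, which has $\prob^T$-probability close to $1$), and on $\{r_T > K\}$ the product is dominated by something like $C(r_T + r_T^p + \fR_T r_T^p)$ restricted to a set of small $\prob^T$-measure. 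Splitting the expectation according to $\{r_T \le K\}$ versus $\{r_T > K\}$, and within the first set according to whether $X^{1,T}_T > N_\eps$, reduces everything to: (a) a term on a small-probability set controlled by uniformly bounded moments (hence small by uniform integrability, letting $K \to \infty$ after $T \to \infty$), and (b) a term where $\left|1 - \fR_T r_T^{p-1}\right| \cdot |r_T - 1|$ is genuinely small, which requires knowing $r_T$ is close to $1$.

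Since I cannot invoke Lemma~\ref{lem: r->1 power}, the honest route is to establish the needed closeness of $r_T$ to $1$ directly from the numéraire relations. From $\expec^{\prob^T}[r_T] \le 1$ and $\expec^{\prob^T}[\fR_T r_T^{p-1}] = y^{1,T}/y^{0,T} \to \ell \le 1$, together with $\fR_T \to 1$, one gets control on $\expec^{\prob^T}[r_T^{p-1}]$; the strict concavity/convexity of the maps $x \mapsto x$ and $x \mapsto x^{p-1}$ then forces $r_T$ to concentrate near a constant, and the normalization $r_0^T = 1$ plus the supermartingale structure pins that constant at $1$. Concretely, I would argue that $\expec^{\prob^T}[(1 - r_T)(1 - r_T^{p-1})] \to 0$; since $(1-x)(1-x^{p-1}) \ge 0$ for all $x > 0$ (both factors have the same sign, as $p - 1 < 0$) and vanishes only at $x = 1$, and since it behaves like a positive "distance to $1$", this yields $r_T \to 1$ in $\prob^T$-probability, after which the product $\left|1-\fR_T r_T^{p-1}\right||r_T - 1|$ tends to $0$ in probability; combined with the uniform integrability from the bounded moments, the $L^1(\prob^T)$ convergence to $0$ follows. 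The main obstacle, I expect, is precisely this last point: extracting genuine convergence of $r_T$ to $1$ (not merely boundedness) from the two numéraire inequalities without circularity, and then upgrading convergence in probability to convergence in $L^1(\prob^T)$ via a careful truncation that uses the uniform moment bounds on $r_T$, $r_T^p$, and $\fR_T r_T^{p-1}$.
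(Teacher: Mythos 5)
Your proposal assembles the right raw materials---the num\'eraire inequality $\expec^{\prob^T}[r_T]\le 1$, the identity $\expec^{\prob^T}[\fR_T r_T^{p-1}]=y^{1,T}/y^{0,T}$, the bound on $y^{1,T}/y^{0,T}$ from Lemma~\ref{lemma: ratio lagrangian}, and the fact that $\fR_T\to 1$ in $\prob^T$-probability from Lemma~\ref{lemma: asy RX1}---but it stops short of a proof, and you correctly sense where the hole is. The crucial step you are missing is not a moment computation but a \emph{sign} observation: summing the two inequalities coming from the first-order conditions \eqref{eq: first order cond}, namely $\expec^\prob[(X^{0,T}_T)^{p-1}(X^{1,T}_T-X^{0,T}_T)]\le 0$ and $\expec^\prob[U'(X^{1,T}_T)(X^{0,T}_T-X^{1,T}_T)]\le 0$, gives directly $\expec^{\prob^T}\bigl[(1-\fR_T r_T^{p-1})(r_T-1)\bigr]\le 0$. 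One then observes that $(1-\fR_T r_T^{p-1})(r_T-1)\le 0$ exactly on the set where $r_T$ lies between $1$ and $\fR_T^{1/(1-p)}$; since the \emph{signed} expectation is $\le 0$, the unsigned expectation in the Lemma is at most twice the expectation of the negative part restricted to that explicit event. This converts the whole problem into an estimate over a set that is pinned between $1$ and $\fR_T^{1/(1-p)}$, and is what breaks the circularity you worried about: no prior knowledge that $r_T\to1$ is needed.

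The second gap is in how you would then close the estimate. You write that from $\fR_T\to1$ one ``gets control on $\expec^{\prob^T}[r_T^{p-1}]$,'' but this is precisely the step that fails pointwise: $\fR_T$ is only close to $1$ on $\{X^{1,T}_T\ge N_\eps\}$, and on the complementary low-wealth tail both $\fR_T$ and $r_T^{p-1}$ can be large simultaneously, so replacing $\expec[\fR_T r_T^{p-1}]$ by $\expec[r_T^{p-1}]$ is illegitimate without a separate argument there. The paper's proof handles the set $\{1\le r_T\le \fR_T^{1/(1-p)}\}$ by first using $\fR_T r_T^{p-1}+r_T\ge2$ to bound the integrand by $\fR_T r_T^p-1$, and then introducing the dual probability $\prob^{1,T}$ with density $Y^{1,T}_T X^{1,T}_T$ under which $X^{1,T}$ has the num\'eraire property; Lemma~\ref{lemma: opt unbounded power} applied to $(X^{1,T},\prob^{1,T})$ then kills the tail contribution $\expec^{\prob^T}[\fR_T r_T^p\indic_{\{X^{1,T}_T\le M\}}]=(y^{1,T}/y^{0,T})\prob^{1,T}(X^{1,T}_T\le M)$. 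Your proposal never introduces $\prob^{1,T}$, which is why the low-wealth tail remains uncontrolled. With these two ideas inserted, your outline would become the paper's proof.
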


\begin{proof}
 To ease notation, set $\fR_T = \fR(X^{1,T}_T)$ and $r_T = r^T_T$. It follows from
 \eqref{eq: first order cond} that the two inequalities $\expec^\prob[(X^{0,T}_T)^{p-1}(X^{1,T}_T - X^{0,T}_T)]\leq 0$ and $\expec^\prob[U'(X^{1,T}_T)(X^{0,T}_T -
 X^{1,T}_T)]\leq 0$ hold. Summing these two inequalities, it follows that
\begin{equation*}
\begin{split}
0&\geq
\expec^{\prob}\bra{\pare{(X^{0,T}_T)^{p-1}-U'(X^{1,T}_T)}\pare{X^{1,T}_T-X^{0,T}_T}},\\
&=
\expec^{\prob}\bra{(X^{0,T}_T)^{p-1}\pare{1-\frac{U'(X^{1,T}_T)}{(X^{1,T}_T)^{p-1}}\frac{(X^{1,T}_T)^{p-1}}{(X^{0,T}_T)^{p-1}}}\pare{X^{1,T}_T-X^{0,T}_T}},\\
&=\expec^{\prob}\bra{(X^{0,T}_T)^{p}\pare{1-\fR_Tr_T^{p-1}}\pare{r_T - 1}}.
\end{split}
\end{equation*}
After changing to the measure $\prob^T$, the previous inequality reads
 \begin{equation*}
  \expec^{\prob^T} \bra{\pare{1- \fR_T \,r_T^{p-1}}(r_T-1)} \leq 0.
 \end{equation*}
Note that $\pare{1- \fR_T \,r_T^{p-1}}(r_T-1)\leq 0$ if and only if $\fR_T^{1/(1-p)} \leq r_T \leq 1$ or $1\leq r_T \leq \fR_T^{1/(1-p)}$, hence
 \begin{equation}\label{eq: rT-1 est}
 \expec^{\prob^T} \bra{\left|1- \fR_T \,r_T^{p-1}\right||r_T-1|} \leq 2\, \expec^{\prob^T}\bra{\pare{1- \fR_T r_T^{p-1}}(1-r_T) \,\indic_{\left\{R_T^{1/(1-p)} \leq r_T \leq 1 \text{ or } 1\leq r_T \leq R^{1/(1-p)}\right\}}}.
 \end{equation}
 Let us estimate the right-hand-side expectation on $\{\fR_T^{1/(1-p)} \leq r_T \leq 1\}$ and $\{1\leq r_T \leq \fR_T^{1/(1-p)}\}$ separately. On the first set, note that $(1- \fR_T r_T^{p-1})(1-r_T) \leq (1- \fR_T)(1-\fR^{1/(1-p)}_T)$. Then
\begin{align*}
 &\expec^{\prob^T}\bra{(1- \fR_T r_T^{p-1})(1-r_T) \, \indic_{\{\fR_T^{1/(1-p)} \leq r_T\leq 1\}}} \\
  & \leq \expec^{\prob^T}\bra{(1-\fR_T)(1- \fR_T^{1/(1-p)}) \,\indic_{\{\fR_T\leq 1\}}}\\
  &  \leq \prob^T(\fR_T \leq 1-\epsilon) + \expec^{\prob^T}\bra{(1- \fR_T)(1- \fR_T^{1/(1-p)})\,\indic_{\{1-\epsilon \leq \fR_T \leq 1\}}}\\
  & \leq \prob^T(\fR_T \leq 1-\epsilon) + \epsilon (1- (1-\epsilon)^{1/(1-p)}).
\end{align*}
 Sending $T\rightarrow \infty$ then $\epsilon\downarrow 0$ and using Lemma \ref{lemma: asy RX1}, it follows that
 \begin{equation}\label{eq: rT-1 est 1}
 \lim_{T\rightarrow \infty} \expec^{\prob^T}\bra{\pare{1- \fR_T r_T^{p-1}}(1-r_T) \,\indic_{\{R_T^{1/(1-p)} \leq r_T \leq 1\}}}=0.
 \end{equation}

 On the set $\{1\leq r_T \leq \fR_T^{1/(1-p)}\}$, observe that $\fR_T r_T^{p-1} + r_T \geq 2$. Then on the same set,
 \[
  \pare{1- \fR_T r_T^{p-1}}(1-r_T) = \fR_T r_T^p - \fR_T r_T^{p-1} - r_T +1 \leq \fR_T r_T^p-1.
 \]
 Therefore
 \[
 \begin{split}
  &\expec^{\prob^T}\bra{\pare{1- \fR_T r_T^{p-1}}(1-r_T)\, \indic_{\{1\leq r_T \leq \fR_T^{1/(1-p)}\}}}\\
  &\leq \expec^{\prob^T}\bra{\pare{1- \fR_T r_T^{p-1}}(1-r_T)\, \indic_{\{1\leq r_T \leq \fR_T^{1/(1-p)}, \fR_T\leq 1+\epsilon\}}} + \expec^{\prob^T}\bra{\pare{\fR_T r_T^p -1}\, \indic_{\{1\leq r_T \leq \fR_T^{1/(1-p)}, 1+\epsilon <\fR_T\}}}\\
  &  =: J_1 + J_2.
 \end{split}
 \]
 In the previous equation, $J_1 \leq \epsilon ((1+ \epsilon)^{1/(1-p)} -1)$. Let us focus on $J_2$ in what follows. Since
 \[
  J_2 \leq \expec^{\prob^T} \bra{\pare{\fR_T r_T^p-1} \indic_{\{1+ \epsilon < \fR_T\}}} = \expec^{\prob^T}\bra{\fR_T r_T^p \indic_{\{1+ \epsilon < \fR_T\}}} - \prob^T(1+ \epsilon < \fR_T),
 \]
 and $\lim_{T\rightarrow \infty} \prob^T(1+\epsilon < \fR_T) =0$ from Lemma \ref{lemma: asy RX1},
 it suffices to estimate the first term in the previous inequality. To this
 end, note from \eqref{ass: conv} that $\{1+\epsilon < \fR_T\} \subset \{X^{1,T}_T \leq M\}$, for some $M$ depending on $\epsilon$. Then
 \[
 \begin{split}
  \expec^{\prob^T}\bra{\fR_T r_T^p \, \indic_{\{1+ \epsilon < \fR_T\}}} &\leq \expec^{\prob^T} \bra{\fR_T r_T^p \, \indic_{\{X^{1,T}_T\leq M\}}}= \frac{\expec^\prob\bra{U'(X^{1,T}_T) X^{1,T}_T \,\indic_{\{X^{1,T}_T \leq M\}}}}{\expec^\prob[(X^{0,T}_T)^p]} \\
  &= \frac{y^{1,T}}{y^{0,T}}\expec^\prob\bra{Y^{1,T}_T X^{1,T}_T \,\indic_{\{X^{1,T}_T\leq M\}}} .
 \end{split}
 \]
 Introduce a probability measure $\prob^{1,T}$ via
 \[
  \frac{d\prob^{1,T}}{d\prob} = Y^{1,T}_T X^{1,T}_T.
 \]
A line of reasoning similar to that in iii) of Remark \ref{rem: long run measure} shows that $X^{1,T}$ has the num\'{e}raire property under
$\prob^{1,T}$. Thus, the argument in Lemma \ref{lemma: opt unbounded power}
applied to $X^{1,T}$ and $\prob^{1,T}$ implies that $\lim_{T\rightarrow \infty} \prob^{1,T}(X^{1,T} \geq M) =1$. The previous convergence, combined with Lemma \ref{lemma: ratio lagrangian}, then implies
 \[
 \frac{y^{1,T}}{y^{0,T}}\expec^\prob\bra{Y^{1,T}_T X^{1,T}_T \,\indic_{\{X^{1,T}_T\leq M\}}}  = \frac{y^{1,T}}{y^{0,T}}\prob^{1,T}(X^{1,T}_T \leq M) \rightarrow 0, \quad \text{ as } T\rightarrow \infty.
 \]
 Now, combining estimates on $J_1$ and $J_2$, and utilizing Lemma \ref{lemma: asy RX1}, we obtain from sending $T\rightarrow \infty$ then $\epsilon \downarrow 0$ that
 \[
  \lim_{T\rightarrow \infty} \expec^{\prob^T}\bra{\pare{1- \fR_T r_T^{p-1}}(1-r_T)\, \indic_{\{1\leq r_T \leq \fR_T^{1/(1-p)}\}}} =0.
 \]
 Combining the previous convergence with \eqref{eq: rT-1 est 1}, the statement now follows from \eqref{eq: rT-1 est}.
\end{proof}

The previous result implies that $r^T_T \rightarrow 1$ under $\prob^T$.

\begin{lem}\label{lem: r^T->1 prob}
It holds that
 \begin{equation*}
  \lim_{T\to \infty} \prob^T (|r^T_T-1|\geq \epsilon) =0, \quad \text{ for any } \epsilon >0.
 \end{equation*}
\end{lem}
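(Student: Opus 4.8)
The plan is to upgrade the $L^1(\prob^T)$-convergence of Lemma~\ref{lemma: r_T-1 est} to the convergence of $r^T_T$ to $1$ in $\prob^T$-probability, by showing that the integrand appearing there is bounded \emph{below} by a positive constant on the event where $r^T_T$ stays away from $1$ while $\fR(X^{1,T}_T)$ stays close to $1$ --- the latter event having $\prob^T$-probability tending to $1$ by Lemma~\ref{lemma: asy RX1}.

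First, by Markov's inequality, Lemma~\ref{lemma: r_T-1 est} gives that $\phi_T := \abs{1-\fR(X^{1,T}_T)\,(r^T_T)^{p-1}}\,\abs{r^T_T-1}$ converges to $0$ in $\prob^T$-probability. It suffices to fix $\epsilon\in(0,1)$ and show $\prob^T(\abs{r^T_T-1}\geq\epsilon)\to 0$. Since $p<1$, both $(1+\epsilon)^{1-p}-1$ and $1-(1-\epsilon)^{1-p}$ are strictly positive, so one can pick $\eta\in(0,1)$, depending only on $\epsilon$ and $p$, such that $\beta':=1-(1+\eta)(1+\epsilon)^{p-1}>0$ and $\beta'':=(1-\eta)(1-\epsilon)^{p-1}-1>0$; set $\beta_0:=\epsilon\min\set{\beta',\beta''}>0$.

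The key step is to check that $\phi_T\geq\beta_0$ holds on the event $E_T:=\set{\abs{r^T_T-1}\geq\epsilon}\cap\set{\abs{\fR(X^{1,T}_T)-1}\leq\eta}$. Writing $\fR_T=\fR(X^{1,T}_T)$ and $r_T=r^T_T$ (both strictly positive by Remark~\ref{rem: long run measure}), on $E_T$ we have either $r_T\geq 1+\epsilon$ or $0<r_T\leq 1-\epsilon$. In the first case, monotonicity of $r\mapsto r^{p-1}$ (recall $p-1<0$) gives $\fR_T r_T^{p-1}\leq(1+\eta)(1+\epsilon)^{p-1}=1-\beta'$, hence $\abs{1-\fR_T r_T^{p-1}}\geq\beta'$ and $\abs{r_T-1}\geq\epsilon$, so $\phi_T\geq\beta'\epsilon\geq\beta_0$. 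In the second case, the same monotonicity gives $\fR_T r_T^{p-1}\geq(1-\eta)(1-\epsilon)^{p-1}=1+\beta''$, hence $\abs{1-\fR_T r_T^{p-1}}\geq\beta''$ and $\abs{r_T-1}\geq\epsilon$, so $\phi_T\geq\beta''\epsilon\geq\beta_0$. The conclusion is then immediate:
\[
\prob^T\pare{\abs{r^T_T-1}\geq\epsilon}\;\leq\;\prob^T\pare{\phi_T\geq\beta_0}\;+\;\prob^T\pare{\abs{\fR(X^{1,T}_T)-1}>\eta},
\]
and both terms vanish as $T\to\infty$, the first because $\phi_T\to 0$ in $\prob^T$-probability, the second by Lemma~\ref{lemma: asy RX1}.

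I do not expect a genuine obstacle here; the only point that needs care is that the lower bound on $\phi_T$ must hold uniformly over \emph{all} admissible values of $r_T$, including $r_T$ arbitrarily large or arbitrarily close to $0$. This is precisely why the argument leans on the global monotonicity of $r\mapsto r^{p-1}$ rather than on a compactness argument over a bounded range of $r_T$.
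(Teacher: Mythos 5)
Your proposal is correct and takes essentially the same route as the paper's proof: the paper also bounds $|1-\fR_T r_T^{p-1}|$ from below on the event where $|r_T-1|\ge\epsilon$ and $\fR_T$ is confined to a window around $1$ (the paper uses the window $[(1-\epsilon)^{(1-p)/2},(1+\epsilon)^{(1-p)/2}]$ rather than a generic $[1-\eta,1+\eta]$), then invokes Lemma~\ref{lemma: r_T-1 est} together with Lemma~\ref{lemma: asy RX1}. The only difference is cosmetic: the paper applies Markov's inequality in the form $\expec^{\prob^T}[\phi_T]\ge\epsilon\eta\,\prob^T(D^T)$ directly, while you restate it as $\prob^T$-convergence of $\phi_T$ first — the underlying estimate and the two auxiliary lemmas invoked are identical.
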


\begin{proof}
As in the proof of Lemma \ref{lemma: r_T-1 est}, set $\fR_T = \fR(X^{1,T}_T)$
and $r_T = r^T_T$. Fix $\epsilon\in (0,1)$ and consider the set
 \[
  D^T = \set{|r_T-1|\geq \epsilon, (1-\epsilon)^{\frac{1-p}{2}} \leq \fR_T \leq (1+\epsilon)^{\frac{1-p}{2}}}.
 \]
 In the following, we estimate the lower bound of $\left|1- \fR_Tr_T^{p-1}\right|$ on $D^T$ for the cases $r_T\geq 1+\epsilon$ and $r_T \leq 1-\epsilon$ separately.

 For $r_T\geq 1+\epsilon$, on $D^T$ we have $\fR_Tr_T^{p-1} \leq (1+\epsilon)^{\frac{p-1}{2}}<1$, whence
 \[
  1- \fR_Tr_T^{p-1} \geq 1-(1+\epsilon)^{\frac{p-1}{2}} >0;
 \]
 For $r_T\leq 1-\epsilon$, on $D^T$ we have $\fR_Tr_T^{p-1} \geq (1-\epsilon)^{\frac{p-1}{2}}>1$ on $D^T$ whence
 \[
  1- \fR_Tr_T^{p-1} \leq 1-(1-\epsilon)^{\frac{p-1}{2}} <0.
 \]
 Denote $\eta = \min\set{1-(1+\epsilon)^{\frac{p-1}{2}}, -1+(1-\epsilon)^{\frac{p-1}{2}}}$. In either of the above cases, $\left|1- \fR_Tr_T^{p-1}\right| \geq \eta$, therefore
 \[
  \expec^{\prob^T} \bra{\left|1- \fR_Tr_T^{p-1}\right| |r_T-1|} \geq \epsilon \, \eta\, \prob^T(D^T).
 \]
Combining the previous inequality with Lemma \ref{lemma: r_T-1 est}, it follows that
 \begin{equation*}
  \lim_{T\to \infty} \prob^T(D^T)=0.
 \end{equation*}
Now, combining the previous convergence with the second statement in Lemma \ref{lemma: asy RX1}, we conclude the proof.
\end{proof}

Now we are ready to prove the main result of this subsection.

\begin{prop}\label{lem: r->1 power}
Under Assumptions \ref{ass: utility} - \ref{ass: wellpose},
 \[
  \lim_{T\rightarrow \infty} \expec^{\prob^T} \bra{|r^T_T-1|} =0.
 \]
\end{prop}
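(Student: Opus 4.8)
The plan is to promote the convergence in $\prob^T$-probability already established in Lemma~\ref{lem: r^T->1 prob} to convergence in $L^1(\prob^T)$. Convergence in probability together with a uniform control on the first moment is exactly the setting of a one-sided Scheff\'e argument, and the one missing ingredient --- a bound on $\expec^{\prob^T}[r^T_T]$ --- comes for free from the num\'eraire property of $X^{0,T}$ under $\prob^T$ recorded in Remark~\ref{rem: long run measure}(iii): taking $X = X^{1,T}\in\X^T$ there gives $\expec^{\prob^T}[r^T_T] = \expec^{\prob^T}[X^{1,T}_T/X^{0,T}_T] \le 1$ for every $T>0$.

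Given this, the remaining steps are short. Set $\xi_T := r^T_T \ge 0$. First I would control the negative part, using that $(1-\xi_T)^+ \le 1$ everywhere and $(1-\xi_T)^+ \le \epsilon$ on $\{\xi_T > 1-\epsilon\}$: for any $\epsilon>0$,
\[
\expec^{\prob^T}\bra{(1-\xi_T)^+} \le \epsilon + \prob^T\pare{\xi_T \le 1-\epsilon} \le \epsilon + \prob^T\pare{|\xi_T-1|\ge \epsilon}.
\]
Letting $T\to\infty$ and invoking Lemma~\ref{lem: r^T->1 prob}, then $\epsilon\downarrow 0$, yields $\lim_{T\to\infty}\expec^{\prob^T}[(1-\xi_T)^+]=0$. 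Next I would use the elementary identity $(\xi_T-1)^+ = (1-\xi_T)^+ - (1-\xi_T)$, so that $|1-\xi_T| = 2(1-\xi_T)^+ - 1 + \xi_T$, and hence
\[
\expec^{\prob^T}\bra{|1-\xi_T|} = 2\,\expec^{\prob^T}\bra{(1-\xi_T)^+} - 1 + \expec^{\prob^T}[\xi_T] \le 2\,\expec^{\prob^T}\bra{(1-\xi_T)^+},
\]
where the inequality uses the num\'eraire bound $\expec^{\prob^T}[\xi_T]\le 1$. The right-hand side tends to $0$, which is the assertion.

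I do not expect a genuine obstacle in this last step: the delicate work is already contained in Lemmas~\ref{lemma: r_T-1 est}--\ref{lem: r^T->1 prob}. The only point worth flagging is that the probability $\prob^T$ changes with the horizon, so one cannot speak of uniform integrability on a fixed space; however, every quantity above is an $\F_T$-measurable random variable integrated against $\prob^T$, so the one-sided Scheff\'e computation goes through verbatim for each fixed $T$, and passing to the limit only requires the two inputs already in hand --- the probabilistic convergence from Lemma~\ref{lem: r^T->1 prob} and the moment bound $\expec^{\prob^T}[r^T_T]\le 1$.
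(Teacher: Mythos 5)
Your argument is correct, and it is built from exactly the same two ingredients as the paper's proof --- the $\prob^T$-probability convergence of $r^T_T$ from Lemma~\ref{lem: r^T->1 prob}, and the num\'eraire bound $\expec^{\prob^T}[r^T_T]\le 1$ from Remark~\ref{rem: long run measure}(iii). Where you differ is only in the bookkeeping. The paper truncates at an arbitrary level $N>2$ and handles the regions $\{r^T_T\le N\}$ and $\{r^T_T>N\}$ separately: on the bounded piece, $|r^T_T-1|\le N-1$ is integrable so probability convergence suffices; on the tail, the identity
\[
\expec^{\prob^T}\bra{r^T_T\indic_{\{r^T_T>N\}}} = \expec^{\prob^T}[r^T_T] - \expec^{\prob^T}\bra{(r^T_T-1)\indic_{\{r^T_T\le N\}}} - \prob^T(r^T_T\le N)
\]
is squeezed to zero using the moment bound. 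Your one-sided Scheff\'e decomposition $|1-\xi_T| = 2(1-\xi_T)^+ -1+\xi_T$ accomplishes the same squeeze without ever introducing the truncation parameter $N$: the negative part $(1-\xi_T)^+$ is a bounded random variable, so probability convergence controls it directly, and then the identity plus the num\'eraire bound kills the rest. It is a genuinely leaner write-up of the same mechanism, and it aligns with the Fatou-plus-Scheff\'e style the authors already use later (in the proofs of Theorems~\ref{thm: power 1-factor} and \ref{cor: power 1-factor myopic}). Your closing remark about $\prob^T$ changing with $T$ is also well taken: since everything is computed within the fixed pair $(\F_T,\prob^T)$ for each $T$ before letting $T\to\infty$, no uniform-integrability argument across a single measure is invoked, so the changing measure poses no difficulty.
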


\begin{proof}
% When $p=0$, $\prob^T = \prob$ (cf. the discussion after \eqref{def: p^T}). Note that $r_T -1\geq -1$, Fatou's lemma implies that
% \[
%  0 = \expec^\prob[\liminf_{T\rightarrow \infty} r_T-1] \leq \liminf_{T\rightarrow \infty} \expec^\prob[r_T-1] \leq 0,
% \]
% where the first equality follows from Lemma \ref{lem: r^T->1 prob} and the second inequality holds by the num\'eraire property of $X^{0,T}$. As a result of the last sequence of inequalities, $\lim_{T\rightarrow \infty} \expec^\prob[r_T] =1$, which leads to the statement after appealing to Scheff\'e's lemma.

As in the previous Lemmas, set $r_T = r^T_T$. The proof consists of the
following two steps, whose combination confirms the claim. Note that for $p=0$,
$\prob^T$ below is exactly $\prob$ and hence convergence takes place under the
physical measure.

 \vspace{2mm}
 \noindent\underline{Step 1:} Establish that
 \begin{equation}\label{eq: pow L1 est 1}
  \lim_{T\rightarrow \infty} \expec^{\prob^T}\bra{\left|r_T-1\right|\, \indic_{\{r_T\leq N\}}} = 0, \quad \text{ for any } N>2.
 \end{equation}
 To this end, for any $\epsilon>0$, we have
 \[
  \begin{split}
   \expec^{\prob^T}\bra{\left|r_T-1\right| \indic_{\{r_T \leq N\}}} &= \expec^{\prob^T}\bra{|r_T-1| \indic_{\set{r_T\leq N, \,|r_T-1|\leq \epsilon}}} + \expec^{\prob^T}\bra{|r_T-1|  \indic_{\set{r_T\leq N, \, |r_T-1|> \epsilon}}} \\
   & \quad \leq \epsilon + (N-1) \, \prob^T\pare{|r_T-1| > \epsilon}.
  \end{split}
 \]
 As $T \rightarrow \infty$, \eqref{eq: pow L1 est 1} follows from Lemma \ref{lem: r^T->1 prob} and the arbitrary choice of $\epsilon$.

 \vspace{2mm}
 \noindent\underline{Step 2:} We also establish that
 \begin{equation}\label{eq: pow L1 est 2}
  \lim_{T\rightarrow \infty} \expec^{\prob^T} \bra{|r_T-1| \, \indic_{\set{r_T>N}}} =0, \quad \text{ for any } N>2.
 \end{equation}
 To this end,
 \[
  \expec^{\prob^T} \bra{|r_T-1| \, \indic_{\set{r_T>N}}} \leq \expec^{\prob^T}\bra{r_T \indic_{\{r_T > N\}}} = \expec^{\prob^T}[r_T] - \expec^{\prob^T}[(r_T-1) \indic_{\{r_T \leq N\}}] - \prob^T(r_T \leq N).
 \]
 Note that $\expec^{\prob^T}[r_T] \leq 1$ due to the num\'eraire property of $X^{0,T}$ under $\prob^T$ (cf. Remark \ref{rem: long run measure} item iii)), $\lim_{T\rightarrow \infty} \expec^{\prob^T}[(r_T-1) \indic_{\{r_T\leq N\}}] = 0$ from Step 1, and $\lim_{T\rightarrow \infty} \prob^T(r_T\leq N) =1$ from Lemma \ref{lem: r^T->1 prob}, therefore
 \[
  0 \leq \limsup_{T\rightarrow \infty} \expec^{\prob^T} \bra{|r_T-1| \, \indic_{\set{r_T>N}}} \leq 1-0-1 = 0,
 \]
 which confirms \eqref{eq: pow L1 est 2}.
\end{proof}

\subsection{Convergence of Wealth Processes}

The following lemma bridges this transition from the convergence of optimal payoffs to the convergence of their wealth processes.
\begin{lem}\label{lemma: stoch-log conv}
 Consider a sequence $(r^T)_{T\in \Real_+}$ of \cadlag \, processes and a sequence $(\prob^T)_{T\in \Real_+}$ of probability measures, such that:
 \begin{enumerate}
  \item[i)] For each $T\in \Real_+$, $r^T$ is defined on $[0,T]$ with $r^T_0
    =1$ and $r^T_t >0$ for all $t\leq T$, $\prob^T$-a.s..
  \item[ii)] Each $r^T$ is a $\prob^T$-supermartingale on $[0,T]$
  \item[iii)] $\lim_{T\to \infty} \expec^{\prob^T}\bra{|r^T_T -1|} =0$.
 \end{enumerate}
 Then:
 \begin{enumerate}
  \item[a)] $\lim_{T\to \infty} \prob^{T}\pare{\sup_{u\in [0,T]} \left|r^T_u - 1\right| \geq \epsilon} =0$, for any $\epsilon>0$.
  \item[b)] Define $L^T := \int_0^{\cdot}(1/ r^T_{t-}) \, dr^T_t$, i.e., $L^T$ is the stochastic logarithm of $r^T$, for each $T\in \Real_+$. Then $\lim_{T\to \infty} \prob^T \pare{\bra{L^T, L^T}_T \geq \epsilon} =0$, for any $\epsilon>0$, where $[\cdot, \cdot]_T$ is the square bracket on $[0,T]$.
 \end{enumerate}
\end{lem}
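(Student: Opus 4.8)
The plan is to handle the two assertions separately: part a) follows from two elementary maximal estimates using only the supermartingale property of $r^T$ and hypothesis iii), while part b) requires first localizing so that $r^T$ is confined to a fixed compact subinterval of $(0,\infty)$, after which It\^o's formula applied to a \emph{bounded} concave function of $r^T$ turns the downward drift forced on a supermartingale into a bound on $[L^T,L^T]$.

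\textbf{Part a).} I would bound upward and downward oscillations of $r^T$ separately. For the downward part, $1-r^T$ is a $\prob^T$-submartingale on $[0,T]$ vanishing at time $0$, so Doob's $L^1$-maximal inequality gives
\[
\prob^T\pare{\inf_{u\le T}r^T_u\le 1-\eps}=\prob^T\pare{\sup_{u\le T}(1-r^T_u)\ge\eps}\le \eps^{-1}\,\expec^{\prob^T}\bra{(1-r^T_T)^+}\le \eps^{-1}\,\expec^{\prob^T}\bra{|r^T_T-1|}.
\]
For the upward part, with $\tau=\inf\set{u:\ r^T_u\ge 1+\eps}$ one has $r^T_\tau\ge 1+\eps$ on $\set{\tau\le T}$ by right-continuity, and optional sampling of the supermartingale $r^T$ at the bounded time $\tau\wedge T$ yields $(1+\eps)\prob^T(\tau\le T)+\expec^{\prob^T}[r^T_T\indic_{\set{\tau>T}}]\le 1$; rearranging,
\[
\eps\,\prob^T(\tau\le T)\le \expec^{\prob^T}\bra{(1-r^T_T)\indic_{\set{\tau>T}}}\le \expec^{\prob^T}\bra{|r^T_T-1|},
\]
so $\prob^T(\sup_{u\le T}r^T_u\ge 1+\eps)\le\eps^{-1}\expec^{\prob^T}[|r^T_T-1|]$. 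Since $\sup_{u\le T}|r^T_u-1|=\max(\sup_{u\le T}(r^T_u-1),\sup_{u\le T}(1-r^T_u))$, adding the two bounds and sending $T\to\infty$ via iii) proves a).

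\textbf{Part b).} Fix $K=2$ and $\sigma=\sigma^T:=\inf\set{u:\ r^T_u\notin(1/2,2)}$ (with $\inf\emptyset=\infty$); by a), $\prob^T(\sigma\le T)\le\prob^T(\sup_{u\le T}|r^T_u-1|\ge 1/2)\to 0$. Let $\bar r:=(r^T)^\sigma$, a positive $\prob^T$-supermartingale with $\bar r_0=1$, and note that on $\set{\sigma>T}$ one has $\bar r_u=r^T_u\in(1/2,2)$ for all $u\le T$ and $[L^T,L^T]_T=[(L^T)^\sigma,(L^T)^\sigma]_T$ (stopping commutes with the stochastic logarithm, since $\bar r_->0$). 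Take $f(x)=\arctan x$, which is $C^2$, bounded, strictly increasing and strictly concave on $(0,\infty)$; It\^o's formula gives $f(\bar r_t)=f(1)+N_t-Q_t$, where
\[
N_t:=\int_0^t f'(\bar r_{s-})\,d\bar r_s,\qquad Q_t:=-\tfrac12\int_0^t f''(\bar r_{s-})\,d[\bar r]^c_s-\sum_{s\le t}\big(f(\bar r_s)-f(\bar r_{s-})-f'(\bar r_{s-})\Delta\bar r_s\big).
\]
Concavity makes every term of $Q$ nonnegative, so $Q$ is increasing with $Q_0=0$; and $N$, the integral of the bounded nonnegative predictable process $f'(\bar r_-)$ against a supermartingale, is a local supermartingale which is bounded below ($N_t=f(\bar r_t)-f(1)+Q_t\ge-f(1)$), hence a true $\prob^T$-supermartingale with $\expec^{\prob^T}[N_T]\le 0$. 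On $\set{\sigma>T}$, since $\bar r,\bar r_-\in[1/2,2]$, a second-order Taylor expansion and $-f''\ge c_0>0$ on $[1/2,2]$ give $Q_T\ge c\,[L^T,L^T]_T$ for a constant $c>0$ (using $d[\bar r]^c=\bar r_-^2\,d[(L^T)^\sigma]^c$ and $\Delta\bar r=\bar r_-\,\Delta(L^T)^\sigma$). Taking $\prob^T$-expectations in $Q_T=f(1)-f(\bar r_T)+N_T$,
\[
c\,\expec^{\prob^T}\bra{[L^T,L^T]_T\,\indic_{\set{\sigma>T}}}\le \expec^{\prob^T}[Q_T]\le f(1)-\expec^{\prob^T}[f(\bar r_T)].
\]
Because $\bar r_T\to 1$ in $\prob^T$-probability ($\prob^T(|\bar r_T-1|>\eta)\le\prob^T(\sigma\le T)+\prob^T(|r^T_T-1|>\eta)\to 0$) and $f$ is bounded and continuous, the right-hand side tends to $0$; Markov's inequality together with $\prob^T([L^T,L^T]_T\ge\eps)\le\prob^T(\sigma\le T)+\prob^T([L^T,L^T]_T\indic_{\set{\sigma>T}}\ge\eps)$ then yields b).

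\textbf{Main obstacle.} The delicate point is the jumps in part b). Applying It\^o to $\log r^T$ directly fails, because $x-\log(1+x)$ is only \emph{linear} (not quadratic) in $x$ for large positive $x$, so it cannot dominate $(\Delta L^T_s)^2$ when $r^T$ jumps up by a large amount --- and a positive supermartingale genuinely can jump up, or crash toward $0$. Localizing at the first exit of $r^T$ from $(1/2,2)$ removes both pathologies on an event of probability tending to $1$; but since the stopped value $\bar r_\sigma$ need not lie in $[1/2,2]$, one must use a \emph{bounded} test function (here $\arctan$, not $\log$) so that $\expec^{\prob^T}[f(\bar r_T)]$ and $\expec^{\prob^T}[N_T]$ remain controlled regardless of what $\bar r$ does after $\sigma$.
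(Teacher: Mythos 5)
The paper's own ``proof'' of this lemma is a citation to Theorem~2.5 and Remark~2.6 of \cite{Kardaras}, together with a one-sentence justification that the argument there survives letting the probability measure and time horizon vary with $T$. You instead give a complete self-contained proof, and it is correct.

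A few remarks on the comparison. Your part~a) is a clean pair of one-sided maximal estimates --- Doob's $L^1$ maximal inequality applied to the submartingale $1-r^T$ for the downward excursion, and optional stopping of the supermartingale $r^T$ at the first passage above $1+\eps$ for the upward one --- both landing on the single quantity $\expec^{\prob^T}[|r^T_T-1|]$ controlled by hypothesis~iii). (There is the usual cosmetic issue of ``$\geq 1+\eps$'' versus ``$>1+\eps$'' when comparing $\{\sup r^T\ge1+\eps\}$ to $\{\tau\le T\}$; replacing $\eps$ by $\eps/2$ disposes of it.) Your part~b) is where the real work is, and your ``main obstacle'' paragraph is exactly the right diagnosis: $\log$-based arguments cannot control upward jumps of a positive supermartingale because $x-\log(1+x)$ is only linear for large $x$. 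Your fix --- localize at the first exit of $r^T$ from $(1/2,2)$, which has vanishing probability by part~a), then run It\^o on the \emph{bounded} concave function $\arctan$ so that the compensator $Q$ dominates a constant multiple of $[L^T,L^T]$ on the good event, while $N=\int f'(\bar r_-)\,d\bar r$ stays a supermartingale bounded below (hence with nonpositive expectation) regardless of what happens after the exit time --- is sound. I checked the key steps: $(L^T)^\sigma$ is the stochastic logarithm of $\bar r=(r^T)^\sigma$; $d[\bar r]^c=\bar r_-^2\,d[(L^T)^\sigma]^c$ and $\Delta\bar r=\bar r_-\Delta(L^T)^\sigma$; $-f''\ge c_0>0$ on $[1/2,2]$ yields $Q_T\ge c\,[L^T,L^T]_T$ on $\{\sigma>T\}$; a local supermartingale bounded below is a true supermartingale (Fatou), so $\expec^{\prob^T}[N_T]\le0$; and $\expec^{\prob^T}[f(\bar r_T)]\to f(1)$ by boundedness plus convergence in $\prob^T$-probability even as the measures vary. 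The net effect is that your proof replaces an external citation with an argument readable inside the paper, at the cost of about a page; the paper's route is shorter but requires the reader to check separately that \cite{Kardaras}'s proof is indeed measure- and horizon-uniform.
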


\begin{proof}
This result follows from Theorem~2.5 and Remark~2.6 in \cite{Kardaras}. (Note that Theorem~2.5 in \cite{Kardaras} is stated under a fixed probability $\overline{\prob}$ and on a fixed time interval $[0,T]$, but its proof remains valid for a sequence of probability measures $(\prob^T)_{T\in \Real_+}$ and a family of time intervals $( [0,T] )_{T \in \Real_+}$.)
\end{proof}

Combining the Lemma \ref{lemma: stoch-log conv} with Proposition \ref{lem: r->1 power}, Theorem \ref{thm: opt-port conv} is proved as follows.

\begin{proof}[Proof of Theorem~\ref{thm: opt-port conv}]
 The statements follow from Lemma~\ref{lemma: stoch-log conv} directly, once we check that the assumptions of Lemma~\ref{lemma: stoch-log conv} are satisfied. First, $r_0^T=1$ since both investors have the same initial capital. Second, assuming $r^T_{\cdot}$ being a $\prob^T$-supermartingale for a moment, then $r_t^T>0$, $\prob^T$-a.s., for any $t\leq T$, because $r^T_T>0$, $\prob^T$-a.s. (see Remark \ref{rem: long run measure} i)). Third, $\lim_{T\to \infty} \expec^{\prob^T} \bra{|r^T_T-1|} =0$ is the result of Proposition \ref{lem: r->1 power}. Hence it remains to show that $r^T_{\cdot}$ is a $\prob^T$-supermartingale. To this end, it suffices to show that:
\begin{equation}\label{eq: exp f/f^0} \expec^{\prob^T} \bra{X_t / X^{0,T}_t | \F_s} \leq X_s / X^{0,T}_s, \quad \text{ for any } s<t\leq T \text{ and } X\in \mathcal{X}^T.
\end{equation}
Since $X^{0,T}_T>0$ $\prob^T$ a.s., Remark \ref{rem: long run measure} i) implies that both denominators in above inequality are nonzero.
To prove \eqref{eq: exp f/f^0}, fix any $A\in \F_s$, and construct the wealth process $\widetilde{X}\in \mathcal{X}^T$ via
\[
\widetilde{X}_u := \left\{\begin{array}{ll} X^{0,T}_u, & u \in [0,s) \\ {X^{0,T}_s}\frac{X_u}{X_s}  \indic_A + X^{0,T}_u \indic_{\Omega \setminus A}, & u \in [s,t)\\
{X^{0,T}_s}\frac{X_t}{X_s} \frac{X^{0,T}_u}{X^{0,T}_t} \indic_A + X^{0,T}_u \indic_{\Omega \setminus A}, & u\in [t, T] \end{array}\right..
 \]
Noting that
 \[
  \frac{\widetilde{X}_T}{X^{0,T}_T} = \frac{X_s^{0,T}}{X_s} \frac{X_t}{X^{0,T}_t} \indic_A + \indic_{\Omega \setminus A},
 \]
 the claim follows from $\expec^{\prob^T}\bra{\widetilde{X}_T / X^{0,T}_T}\leq 1$ (cf. Remark \ref{rem: long run measure} iii)) and the arbitrary choice of $A$.
\end{proof}

\begin{proof}[Proof of Corollary~\ref{thm: power mayopic}]
First, we prove that $(d\prob^T/d\prob\big|_{\F_t})_{T\geq 0}$ is a constant sequence. For any $t\leq T\leq S$,
 \[
 \begin{split}
  \left. \frac{d\prob^S}{d\prob}\right|_{\F_t} &= \frac{\expec_t^{\prob}\bra{\pare{X_S^{0,S}}^p}}{\expec^{\prob}\bra{\pare{X_S^{0,S}}^p}} = \frac{\expec_t^{\prob}\bra{\pare{X_T^{0,S}}^p \pare{X^{0,S}_S / X^{0,S}_T}^p}}{\expec^{\prob}\bra{\pare{X_T^{0,S}}^p \pare{X^{0,S}_S / X^{0,S}_T}^p}} = \frac{\expec_t^{\prob}\bra{\pare{X_T^{0,S}}^p} \expec^{\prob}_t\bra{\pare{X^{0,S}_S / X^{0,S}_T}^p}}{\expec^{\prob}\bra{\pare{X_T^{0,S}}^p} \expec^{\prob}\bra{\pare{X^{0,S}_S / X^{0,S}_T}^p}}\\
  & = \frac{\expec_t^{\prob}\bra{\pare{X^{0,S}_T}^p}}{{\expec^{\prob}\bra{\pare{X^{0,S}_T}^p}}} = \frac{\expec_t^{\prob}\bra{\pare{X^{0,T}_T}^p}}{{\expec^{\prob}\bra{\pare{X^{0,T}_T}^p}}} = \left. \frac{d\prob^T}{d\prob} \right|_{\F_t}.
 \end{split}
 \]
Here, the third equality follows from the assumption that $X^{0,S}_T$ and $X^{0,S}_S/X^{0,S}_T$ are independent; the fourth equality holds since $X^{0,S}_S/X^{0,S}_T$ is independent of $\F_t$; and the fifth equality holds by the myopic optimality $X^{0,T}_T= X^{0,S}_T$.

Second, note that $d\prob^T/d\prob\big|_{\F_\cdot}$ is a strictly positive martingale; see the discussion after \eqref{def: p^T}, it then induces a probability measure $\tilde{\prob}$, which is equivalent to $\prob$ on $\F_t$. As a result, the rest statements follows from Theorem~\ref{thm: opt-port conv} and Remark~\ref{rem: finite time}  iii) directly.
\end{proof}

%----------------------------------------%
%----------------------------------------%
%----------------------------------------%

\section{Proof of the Turnpike for Diffusions}

This section contains the proofs of the statements in section
\ref{subsec: turnpike for diffusion}, and is broken into four subsections.
The first one details properties of the measures $(\hat{\prob}^{\xi})_{\xi\in\Real^{d}\times E}$. The second
subsection constructs the reduced value function $v^T$ in
\eqref{eq: HJB}. The third section establishes the precise
relations between the conditional densities $d\prob^{T,y}/d\prob^y
|_{\F_t}$, $d\hat{\prob}^{y}/d\prob^y|_{\F_t}$ and the wealth processes $X^{0,T}/\hat{X}$. The last subsection contains the proofs of the main results.
\begin{rem}\label{R:stoch_exp}
To ease notation, denote in the sequel:
\[
  \Exp\pare{\int H \,dW}_{t,s} := \exp\pare{\int_t^s H_u \, dW_u - \frac12
    \int_t^s \|H_u\|^2 \,du} \quad \text{ for any integrand } H \text{ and }
  t\leq s,
\]
and by $\Exp\pare{\int H \,dW}_t = \Exp\pare{\int H \,dW}_{0,t}$.
\end{rem}

%--------------------------------------___%
%-----------------------------------------%

\subsection{The long run measure
  $\hat{\prob}^{\xi}$}\label{SS: phat Y prop}

Recall the following terminology from ergodic theory for diffusions (see
\cite{Pinsky} for a more thorough treatment). Let $L$ be as in \eqref{eq: gen op def}. Suppose the martingale problem
for $L$ is well posed on $D$, and denote its solution
by $(\prob^x)_{x\in D}$, with coordinate process $\Xi$. Denote by $L^*$ the
formal adjoint to $L$. Note that under the given regularity conditions $L^*$
is a second order differential operator as well.

$\Xi$ is \emph{recurrent} under $(\prob^x)_{x\in D}$ if $\prob^x(\tau(\epsilon,
y)<\infty) =1$ for any $(x,y)\in D^2$ and $\epsilon>0$, where
$\tau(\epsilon, y) =\inf\set{t\geq 0 \such |\Xi_t - y| \leq \epsilon}$. If
$\Xi$ is recurrent then it is \emph{positive recurrent}, or \emph{ergodic}
if there exists a strictly positive $\varphi^*\in
C^{2,\gamma}(D,\Real^k)$ such that $L^*\varphi^* = 0$ and $\int_{D}
\varphi^*(y)\, dy<\infty$. If $\Xi$ is recurrent, but not positive recurrent,
it is \emph{null recurrent}.

%-------------------------------------------%

\begin{lem}\label{lemma: hat-P posrec}
Let Assumptions \ref{ass: regular coeffs} and \ref{ass: ptphat} hold. Let
$\cL$ be as in \eqref{eq: cL B def} and define $\cL^{\hat{v},0}$ by
\begin{equation}\label{eq: cl hatv def}
\cL^{\hat{v},0} = \cL + A\frac{\hat{v}_y}{\hat{v}}\partial_y.
\end{equation}
Then, there exists a unique solution $(\hat{\prob}^y_Y)_{y\in E}$ to the martingale problem for
$\cL^{\hat{v},0}$  on $E$. Furthermore, the coordinate mapping process $Y$ is
positive recurrent under $(\hat{\prob}^y_Y)_{y\in E}$ with invariant density
$\mu(y) = \hat{v}^2(y)\hat{m}(y)$, where $\hat{m}$ is defined in \eqref{eq: mnu def}.  Therefore, for all functions $f\in L^1(E,\mu)$ and all $t > 0$:
\begin{equation}\label{eq: l1 ergodic thm}
\lim_{T\uparrow\infty} \expec^{\hat{\prob}^y_Y}\bra{f(Y_{T-t})} = \int_E
f(y)\hat{v}^2(y)\hat{m}(y)dy.
\end{equation}
\end{lem}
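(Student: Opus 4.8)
The plan is to recognize $\cL^{\hat{v},0}$ as the one–dimensional diffusion operator obtained from $\cL+c-\lambda$ by the Doob $h$--transform with $h=\hat v$, and then to read off global well-posedness, recurrence, and positive recurrence from the classical scale/speed–density criteria applied to the data in Assumption~\ref{ass: ptphat}. First I would record the algebraic identity $\hat v^{-1}(\cL+c-\lambda)(\hat v\,u)=\cL u+A(\hat v_y/\hat v)u_y=\cL^{\hat{v},0}u$, valid for every $u\in C^2(E)$, which uses only $(\cL+c-\lambda)\hat v=0$ from \eqref{eq: eign eqn}. A bootstrap on \eqref{eq: eign eqn} (write $\hat v_{yy}=(2/A)((\lambda-c)\hat v-B\hat v_y)$ and invoke Assumption~\ref{ass: regular coeffs}) upgrades $\hat v\in C^2(E)$ to $\hat v\in C^{2,\gamma}(E)$; hence the drift $B+A\hat v_y/\hat v$ of $\cL^{\hat{v},0}$ is locally $C^{1,\gamma}$, in particular locally Lipschitz, and, with $A>0$ locally bounded away from zero, the associated SDE has a unique solution up to its explosion time, which by Remark~\ref{rem: mart prob} and \cite[Ch.~V]{MR1780932} amounts to a locally well-posed martingale problem for $\cL^{\hat{v},0}$ on $E$.

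Next I would compute the scale and speed densities of $\cL^{\hat{v},0}$. Since $2(B+A\hat v_y/\hat v)/A=2B/A+2\hat v_y/\hat v$ and $\exp(\int_{y_0}^{y}2B/A)=A\hat m$ by \eqref{eq: mnu def}, the scale density is $s^{\hat v}(y)=\hat v(y_0)^2\big(\hat v^2A\hat m\big)^{-1}(y)$ and the speed density is $m^{\hat v}(y)=\hat v(y_0)^{-2}\,\hat v(y)^2\hat m(y)$. Thus the requirement $\int_{\alpha}^{y_0}s^{\hat v}=\int_{y_0}^{\beta}s^{\hat v}=\infty$ is \emph{exactly} condition \eqref{ass: recurrent}, so both endpoints of $E$ are inaccessible; in particular Feller's test is negative, there is no explosion, the martingale problem is globally well posed, and $Y$ is recurrent under $(\hat\prob^y_Y)_{y\in E}$. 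A direct computation---$(Am^{\hat v})'=2(B+A\hat v_y/\hat v)m^{\hat v}$, followed by one integration of the formal adjoint equation---gives $(\cL^{\hat{v},0})^*m^{\hat v}=0$. Since $m^{\hat v}\in C^{2,\gamma}(E)$ is strictly positive (the $C^{2,\gamma}$ regularity of $\hat m$ follows from Assumption~\ref{ass: regular coeffs}, that of $\hat v$ from the bootstrap above) and $\int_E m^{\hat v}=\hat v(y_0)^{-2}<\infty$ by the normalization in \eqref{ass: posrec l1}, the definition of positive recurrence is met, and the invariant probability density is the normalized speed density, i.e. exactly $\mu=\hat v^2\hat m$ (already normalized, by \eqref{ass: posrec l1}).

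Finally, \eqref{eq: l1 ergodic thm} is the ergodic theorem for the positive recurrent diffusion $Y$: since $T-t\to\infty$ for fixed $t$, it suffices to prove $\expec^{\hat\prob^y_Y}[f(Y_s)]\to\int_E f\mu\,dy$ as $s\to\infty$ for every $f\in L^1(E,\mu)$. For bounded $f$ this is the standard convergence (in total variation) of the transition probabilities $p(s,y,\cdot)$ to the invariant law $\mu$ for positive recurrent diffusions (see \cite{Pinsky}). To pass to general $f\in L^1(\mu)$ I would split $f=f^+-f^-$ and, for $f\ge0$, truncate at level $N$: the bounded case gives $\liminf_s\expec^{\hat\prob^y_Y}[f(Y_s)]\ge\int_E(f\wedge N)\,\mu\,dy\uparrow\int_E f\mu\,dy$, while the matching upper bound is the $L^1$--ergodic statement for the $m^{\hat v}$--symmetric Markov semigroup generated by $\cL^{\hat{v},0}$ (again \cite{Pinsky}). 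I expect this promotion from bounded to $L^1(\mu)$ integrands---equivalently, controlling the mass that $p(s,y,\cdot)$ places near the boundary of $E$, uniformly in $s$---to be the only point requiring genuine care; everything else is the one-dimensional diffusion dictionary applied verbatim to the data of Assumption~\ref{ass: ptphat}.
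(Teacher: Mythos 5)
Your proposal is correct and follows essentially the same route as the paper, which simply invokes \citet*[Theorem 5.1.10, Corollary 5.1.11]{Pinsky} for existence, uniqueness, and positive recurrence with invariant density $\hat v^2\hat m$, and \citet*[Theorem 1.2(iii)]{MR1152459} for the $L^1$ ergodic theorem. What you add is the content behind those citations: the Doob $h$-transform identity $\hat v^{-1}(\cL+c-\lambda)(\hat v u)=\cL^{\hat v,0}u$, the elliptic bootstrap upgrading $\hat v$ to $C^{2,\gamma}$, the explicit scale and speed densities, the observation that the scale-density divergence in \eqref{ass: recurrent} is exactly the non-explosion/recurrence criterion, and the direct verification $(\cL^{\hat v,0})^*(\hat v^2\hat m)=0$ together with the normalization in \eqref{ass: posrec l1}. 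These are precisely the facts that Pinsky's one-dimensional scale/speed theory packages, so the unpacking is faithful. The one place you leave a black box is the passage from bounded $f$ to general $f\in L^1(\mu)$; your truncation gives the lower bound cleanly but the upper bound genuinely requires an $L^1$ ergodic theorem for the transition densities (this is the content of the cited result in \cite{MR1152459}, and is not simply "again Pinsky"). You correctly flag this as the only delicate step; it is acceptable as a citation, as the paper itself treats it that way.
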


\begin{proof}
Since \eqref{ass: recurrent} and \eqref{ass: posrec l1} hols,
\citet*[Theorem 5.1.10, Corollary 5.1.11]{Pinsky} implies that: a) $(\hat{\prob}^y_Y)_{y\in E}$ exists and is unique, b) $Y$ is positive
recurrent under $(\hat{\prob}^y_Y)_{y\in E}$, and c) $Y$ has invariant density
$\hat{v}^2(y)\hat{m}(y)$. That \eqref{eq: l1 ergodic thm} holds for $f\in L^1(E,\mu)$ follows from \citet*[Theorem 1.2 (iii), Eqns (3.29) and (3.30)]{MR1152459}.

\end{proof}

%-----------------------------------------%

\begin{lem}\label{lemma: WP hat-P}
Let Assumptions \ref{ass: regular coeffs} and \ref{ass: ptphat} hold. Then:
\begin{enumerate}[i)]
\item There exists a unique solution $(\hat{\prob}^{\xi})_{\xi\in\Real^d\times
    E}$ to the martingale problem for $\hat{\cL}$ on $\Real^d\times E$ where
  $\hat{\cL}$ is given in \eqref{eq: phat L def}.
\item $\hat{\prob}^\xi \sim \prob^\xi$ for any $\xi\in \Real^d \times E$.
\end{enumerate}
\end{lem}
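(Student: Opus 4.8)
The plan is to construct $(\hat{\prob}^{\xi})_{\xi\in\Real^d\times E}$ from the well-posed family $(\prob^{\xi})_{\xi\in\Real^d\times E}$ of Proposition~\ref{lem: mart prob} by a locally absolutely continuous change of measure, so that part~(ii) comes out along the way. The starting point is that $\hat{\cL}$ in \eqref{eq: phat L def} and $L$ in \eqref{eq: L def} have the same (block-diagonal) diffusion matrix $\widetilde{A}$, so $\hat{\cL}=L+(\hat{b}-\widetilde{b})'\nabla_{\xi}$; setting $\theta:=\widetilde{A}^{-1}(\hat{b}-\widetilde{b})$, Assumption~\ref{ass: regular coeffs} together with $\hat{v}\in C^2(E)$, $\hat{v}>0$, makes $\theta$ a continuous (hence locally bounded) $\Real^{d+1}$-valued function on $\Real^d\times E$ that depends on $\xi=(z,y)$ only through $y$. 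Working on the canonical space under a fixed $\prob^{\xi}$, let $M$ be the local martingale part of the coordinate process $\Xi$ (so $d\langle M\rangle_t=\widetilde{A}(\Xi_t)\,dt$) and $\Gamma:=\Exp\pare{\int\theta(\Xi)'\,dM}$, a strictly positive continuous $\prob^{\xi}$-local martingale; the candidate measure is $\hat{\prob}^{\xi}$ with $d\hat{\prob}^{\xi}/d\prob^{\xi}\big|_{\F_t}=\Gamma_t$.

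The whole statement then reduces to a single fact: \emph{the martingale problem for $\hat{\cL}$ on $\Real^d\times E$ is conservative.} Indeed, by Assumption~\ref{ass: regular coeffs}, $\hat{v}\in C^2(E)$ with $\hat{v}>0$, and nondegeneracy of $\widetilde{A}$, the martingale problem for $\hat{\cL}$ is well posed on every bounded open $D$ with $\overline{D}\subset\Real^d\times E$ (classical Stroock--Varadhan theory), i.e.\ well posed up to explosion. For such a locally well-posed problem it is standard that it is globally well posed if and only if it is conservative, and that, in that case, $\Gamma$ is a genuine $\prob^{\xi}$-martingale and $\hat{\prob}^{\xi}$ above is exactly its (unique) solution started at $\xi$ --- see e.g.\ \citet{Pinsky}, or carry this out directly by localizing along the exit times $\tau_n$ of an exhausting sequence $D_n\uparrow\Real^d\times E$, applying Girsanov up to each $\tau_n$, and patching the resulting consistent measures.

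To prove conservativeness, note from \eqref{eq: phat L def} that under $\hat{\cL}$ the $Y$-coordinate is autonomous: its drift (the last component of $\hat{b}$) equals $B+A\hat{v}_y/\hat{v}$ and its diffusion coefficient is $A$, both functions of $y$ alone, and $\widetilde{A}$ is block-diagonal, so, up to the first exit of $Y$ from $E$, the $Y$-marginal of any solution of the martingale problem for $\hat{\cL}$ is a solution of the martingale problem for $\cL^{\hat{v},0}$ of \eqref{eq: cl hatv def}. By Lemma~\ref{lemma: hat-P posrec} the latter has a positive recurrent --- hence conservative --- solution, so that exit time is $+\infty$ almost surely. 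Given that $Y$ stays in $E$ for all finite times, on any finite interval $Y$ lies in a compact subinterval of $E$, where the first $d$ coordinates of $\hat{b}$, namely $-q\bar{\rho}'\sigma'\Sigma^{-1}(\mu+\delta\Upsilon\hat{v}_y/\hat{v})(Y_\cdot)$, are bounded; hence the $Z$-coordinate --- a Brownian motion with this locally bounded drift --- cannot explode either, and the martingale problem for $\hat{\cL}$ is conservative. Existence and uniqueness in part~(i) follow.

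Finally, part~(ii): we have $d\hat{\prob}^{\xi}/d\prob^{\xi}\big|_{\F_t}=\Gamma_t>0$ $\prob^{\xi}$-a.s., and symmetrically $d\prob^{\xi}/d\hat{\prob}^{\xi}\big|_{\F_t}=\Gamma_t^{-1}>0$ $\hat{\prob}^{\xi}$-a.s., so $\hat{\prob}^{\xi}$ and $\prob^{\xi}$ are mutually absolutely continuous on $\F_t$ for every finite $t$ --- the (local) equivalence $\hat{\prob}^{\xi}\sim\prob^{\xi}$ asserted in (ii) (note the two cannot be equivalent on $\F$, since $Y$ is positive recurrent under $\hat{\prob}^{\xi}$ but need not be under $\prob^{\xi}$). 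The one genuinely delicate step is the conservativeness of the martingale problem for $\hat{\cL}$: this is exactly where Assumption~\ref{ass: ptphat} --- specifically the recurrence conditions \eqref{ass: recurrent} --- enters, through Lemma~\ref{lemma: hat-P posrec}; everything else is routine diffusion and Girsanov bookkeeping.
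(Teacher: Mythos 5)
Your proof is correct, and it takes a genuinely different route from the paper's. The paper proves part (i) \emph{constructively}: it builds $\hat{\prob}^{\xi}$ directly as the pushforward of the product measure $W^d\times\hat{\prob}^y_Y$ under $(X,Y)$, using the block-diagonal structure of $\widetilde{A}$ and the autonomy of the $Y$-coordinate, with Lemma~\ref{lemma: hat-P posrec} supplying the $Y$-factor; it then proves part (ii) as a separate step by citing a locally absolutely continuous change of measure result of Cheridito--Filipovi\'c--Yor together with Proposition~\ref{lem: mart prob}. You instead define $\hat{\prob}^{\xi}$ from $\prob^{\xi}$ via a localized Girsanov transform with the exponential $\Gamma$ (which, once one unpacks $\theta=\widetilde{A}^{-1}(\hat{b}-\widetilde{b})$, is precisely the paper's $D^{\hat{v}}$ from \eqref{eq: D hatv temp 1}), reduce everything to conservativeness of the $\hat{\cL}$-martingale problem, and prove conservativeness by the same use of Lemma~\ref{lemma: hat-P posrec} for the autonomous $Y$-coordinate followed by a path-wise boundedness argument for the $Z$-drift. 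Your approach has the advantage of delivering (i) and (ii) in one pass and making completely explicit what the equivalence in (ii) means --- local equivalence on each $\F_t$, not on $\F$, a distinction the paper uses downstream (e.g.\ in the proof of Theorem~\ref{thm: power 1-factor}) but leaves implicit here. The paper's product construction is somewhat cleaner in that it avoids re-deriving the localize-Girsanov-and-patch machinery, simply outsourcing that to a citation; what you called the ``one genuinely delicate step'' (conservativeness via Lemma~\ref{lemma: hat-P posrec}) is in both arguments exactly where Assumption~\ref{ass: ptphat}, through \eqref{ass: recurrent}, does the work.
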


\begin{proof}
For any integer $n$ denote by $\Omega^n$ be the space of continuous maps
$\omega: \Real_+ \to \Real^{n}$ and $\B^n$ be the sigma algebra generated by
the coordinate process $\Xi$ defined by $\Xi_t(\omega) = \omega_t$ for
$\omega\in\Omega^n$. By Lemma \ref{lemma: hat-P posrec}, there is a unique
solution $(\hat{\prob}^y_Y)_{y\in E}\in M_1(\Omega^1,\B^1)$  to the martingale
problem on $E$ for the operator $\cL^{\hat{v},0}$ given in \eqref{eq: cl hatv def}. Set $\Omega = \Omega^{d+1},\F = \B^{d+1}$ and $\F_t = \B^{d+1}_{t+},
t\geq 0$.  Let $W^{d}$ denote d-dimensional Wiener measure on the first $d$ coordinates
(along with the associated sigma algebra) and set $B = (\Xi^1,...,\Xi^d)$, $Y
= \Xi^{d+1}$.  For any $z\in\Real^d$ define the processes $X,W$ by
\begin{equation*}
X_t := z - q\int_0^t\bar{\rho}'\sigma'\Sigma^{-1}\left(\mu +
  \delta\Upsilon\frac{\hat{v}_y}{\hat{v}}\right)(Y_s)ds + B_t;\qquad W_t =
  \int_0^ta^{-1}(Y_s)\left(dY_s - b(Y_s)ds\right).
\end{equation*}
Clearly for $\xi = (z,y)$ it follows that,
$\pare{(X,Y),(B,W),(\Omega,\F,(\F_t)_{t\geq 0},W^d\times\hat{\prob}^y_Y)}$
is a weak solution to the SDE associated to the operator $\hat{\cL}$.

Since weak solutions induce solutions to the
martingale problem via Ito's formula, it follows that if $\hat{\prob}^{\xi}\in
M_1(\Omega,\F)$ is defined by $\hat{\prob}^{\xi}(A) = W^d\times\hat{\prob}^y((X,Y)\in A)$ with $A\in \F$, then
$(\hat{\prob}^{\xi})_{\xi\in\Real^d\times E}$ solves the
martingale problem for $\hat{\cL}$.

Part (ii) follows from \cite[Remark
2.6]{Cheridito-Filipovic-Yor} and Lemma \ref{lem: mart prob}. Note that the assumption in
\cite{Cheridito-Filipovic-Yor} is satisfied in view of Assumption
\ref{ass: regular coeffs}, $\hat{v} >0$ and $\hat{v}\in C^2(E)$ in Assumption
\ref{ass: ptphat}.
\end{proof}

\begin{rem}\label{rem: hat-P comp} As in the proof of Lemma \ref{lemma: WP hat-P}, for all $\xi  = (z,y)$ with $z\in\Real^d$ and $y\in E$, if $Y$ denotes the $(d+1)^{st}$ coordinate, then
\begin{equation*}
\hat{\prob}^{\xi}(Y\in A) = \hat{\prob}^{y}_Y(Y\in A);\qquad A\in\B^1.
\end{equation*}
Thus, since $Y$ is positive recurrent under $(\hat{\prob}^y_Y)_{y\in E}$ by Lemma
\ref{lemma: hat-P posrec}, $Y$ is positive recurrent under
$(\hat{\prob}^{\xi})_{\xi\in\Real^d\times E}$ with the same invariant density
$\mu$ as in Lemma \ref{lemma: hat-P posrec}. Therefore, the ergodic result in \eqref{eq: l1 ergodic thm} applies to $\hat{\prob}^{\xi}$ for any $\xi\in\Real^{d}\times E$.
\end{rem}

%-----------------------------------------%
%-----------------------------------------%

\subsection{Construction of $v^T$}\label{subsec: v^T}

The solution $v^T(t,y)$ to \eqref{eq: HJB} is constructed from the long-run
  solution $\hat{v}(y)$ of Assumption \ref{ass: ptphat}. Recall that
 $\hat{\prob}^\xi$ is denoted by $\hat{\prob}^y$ for $\xi=(0,y)$.
Now, consider the function $h^T$ defined as:
\begin{equation}\label{def: h^T}
 h^T(t,y) := \expec^{\hat{\prob}^y}\bra{\frac{1}{\hat{v}(Y_{T-t})}}, \qquad (t,y)\in [0,T]\times E.
\end{equation}
The candidate reduced value function is
\begin{equation}\label{def : v^T}
v^T(t,y) := e^{\lambda_c (T-t)} \hat{v}(y) h^T(t,y).
\end{equation}
Thus, $h^T$ is the ratio between $v^T$ and its long-run analogue
$e^{\lambda_c(T- \cdot)}\hat{v}$. The verification result
Proposition~\ref{lemma: HJB verification} below confirms that $v^T$ is a
strictly positive classical solution to \eqref{eq: HJB} and the relation
$u^T(t, x, y) = (x^p/p) (v^T(t,y))^\delta$ holds for $(t,x,y)\in
[0,T]\times \Real_+ \times E$.

As a first step to proving Proposition \ref{lemma: HJB verification}, the
next result characterizes the function $h^T$.  Clearly,
$h^T(t,y)>0$ for $(t,y) \in [0,T]\times E$.

%----------------------------------------------%

\begin{prop}\label{prop: v^T classical}
Let Assumptions \ref{ass: regular coeffs}, \ref{ass: WP P}, \ref{ass: correl} and \ref{ass: ptphat} hold.
Then $h^T(t,y)<\infty$ for all $(t,y)\in [0,T]\times E$, $h^T(t,y)\in
C^{1,2}((0,T)\times E)$, and $h^T$ satisfies
\begin{equation}\label{eq: pde-h}
\begin{split}
& \partial_t h^T + \cL^{\hat{v},0}h^T =0, \quad (t,y)\in (0,T)\times E,\\
& h^T(T,y) = \frac{1}{\hat{v}(y)}, \quad y\in E,
\end{split}
\end{equation}
where $\cL^{\hat{v},0}$ is defined in \eqref{eq: cl hatv def}. Furthermore, the
process
\begin{equation}\label{eq: h^T mart}
\frac{h^T(t,Y_t)}{h^T(0,y)};\qquad 0\leq t\leq T,
\end{equation}
is a $\hat{\prob}^y$ martingale on $[0,T]$ with constant expectation $1$. Lastly, for all $t > 0$ and
$y\in E$ it follows $\hat{\prob}^y$ almost surely that
\begin{equation}\label{eq: h^T limit val}
\lim_{T\rightarrow\infty}\frac{h^T(t,Y_t)}{h^T(0,y)} = 1.
\end{equation}
\end{prop}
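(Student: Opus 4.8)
The plan is to prove the four assertions in order, with the finiteness bound doing most of the work.

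\emph{Finiteness.} The key identity is $\cL^{\hat v,0}(\hat v^{-1}) = (c-\lambda_c)\hat v^{-1}$ on $E$. It follows from the eigenvalue equation \eqref{eq: eign eqn}: expanding $\cL^{\hat v,0}$ from \eqref{eq: cl hatv def} and using $\cL\hat v = (\lambda_c - c)\hat v$ gives, for any $f\in C^2(E)$, the relation $\hat v\,\cL^{\hat v,0}f = \cL(\hat v f) + (c-\lambda_c)\hat v f$, and the choice $f = \hat v^{-1}$ kills $\cL(\hat v f) = \cL 1 = 0$. Under $\hat\prob^y$ the coordinate process $Y$ has generator $\cL^{\hat v,0}$ and is conservative (Lemma~\ref{lemma: hat-P posrec}), so Itô's formula, localized along an exhaustion of $E$ by relatively compact open intervals (whose exit times increase to $+\infty$ by continuity of paths), shows $\exp\pare{-\int_0^t(c(Y_u)-\lambda_c)\,du}\,\hat v^{-1}(Y_t)$ is a nonnegative local martingale, hence a supermartingale; Fatou's lemma then yields $\expec^{\hat\prob^y}\bra{\exp\pare{-\int_0^s(c(Y_u)-\lambda_c)\,du}\,\hat v^{-1}(Y_s)}\le \hat v^{-1}(y)$ for all $s\ge 0$. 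Since $\bar c:=\sup_E c<\infty$ by Assumption~\ref{ass: correl}, the bounded-below factor $\exp\pare{-\int_0^s(c-\lambda_c)\,du}\ge e^{-(\bar c-\lambda_c)s}$ may be discarded, giving
\[
h^T(t,y)=\expec^{\hat\prob^y}\bra{\hat v^{-1}(Y_{T-t})}\le e^{(\bar c-\lambda_c)(T-t)}\,\hat v^{-1}(y)<\infty,
\]
which also shows $h^T$ is bounded on $[0,T]\times K$ for every compact $K\subset E$.

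\emph{Regularity and the PDE.} The operator $\cL^{\hat v,0}$ is nondegenerate with locally H\"older coefficients on $E$ (Assumption~\ref{ass: regular coeffs} and $\hat v\in C^2(E)$), so its diffusion admits a transition density $\hat p(s;y,z)$ that is smooth in $(s,y)$ for $s>0$ and satisfies the backward equation, and $h^T(t,y)=\int_E \hat p(T-t;y,z)\,\hat v^{-1}(z)\,dz$ for $t<T$. Differentiating under the integral sign---justified on compact subsets of $(0,T)\times E$ by the local boundedness of $h^T$ obtained above together with interior Schauder estimates for $\hat p$---shows $h^T\in C^{1,2}((0,T)\times E)$ and $\partial_t h^T+\cL^{\hat v,0}h^T=0$; alternatively one solves the Cauchy--Dirichlet problem for $\partial_t+\cL^{\hat v,0}$ on $(\epsilon,T)\times D$ with $\bar D\subset E$ and identifies the classical solution with $h^T$ via a stopping-time verification using the strong Markov property. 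The terminal condition needs no limit: $h^T(T,y)=\expec^{\hat\prob^y}\bra{\hat v^{-1}(Y_0)}=\hat v^{-1}(y)$ since $Y_0=y$ under $\hat\prob^y$.

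\emph{Martingale property and the limit.} By Remark~\ref{rem: hat-P comp} the $Y$-marginal of $\hat\prob^\xi$ depends only on the starting point and equals that of the Markov diffusion $\hat\prob^y_Y$, so the Markov property gives $\expec^{\hat\prob^y}\bra{\hat v^{-1}(Y_T)\mid\F_t}=h^T(t,Y_t)$ for $t\le T$; since $\hat v^{-1}(Y_T)\in L^1(\hat\prob^y)$ by the finiteness step, $\pare{h^T(t,Y_t)}_{0\le t\le T}$ is a uniformly integrable $\hat\prob^y$-martingale closed by $\hat v^{-1}(Y_T)$, and division by the constant $h^T(0,y)>0$ yields \eqref{eq: h^T mart} (its value at $t=0$ is $1$). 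For \eqref{eq: h^T limit val}, observe that $\int_E\hat v^{-1}\,d\mu=\int_E\hat v\,\hat m\,dy<\infty$ by \eqref{ass: posrec l1}, where $\mu=\hat v^2\hat m$, so the ergodic theorem \eqref{eq: l1 ergodic thm} applies with $f=\hat v^{-1}$ and gives, \emph{for every} $z\in E$, $\lim_{T\to\infty}h^T(t,z)=\int_E\hat v\,\hat m\,dy=:c_\infty\in(0,\infty)$. Taking $z=y$ gives $h^T(0,y)\to c_\infty$, and since $Y_t\in E$ $\hat\prob^y$-a.s., freezing $\omega$ and applying the same limit at the point $z=Y_t(\omega)$ gives $h^T(t,Y_t)\to c_\infty$ $\hat\prob^y$-a.s.; hence $h^T(t,Y_t)/h^T(0,y)\to 1$.

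The main obstacle is the regularity step---upgrading the Feynman--Kac-type representation of $h^T$ to a genuine $C^{1,2}$ solution of \eqref{eq: pde-h}, via smoothness and interior estimates for the transition density or via a Cauchy--Dirichlet verification argument. The finiteness estimate---resting entirely on the identity $\cL^{\hat v,0}(\hat v^{-1})=(c-\lambda_c)\hat v^{-1}$ and on $\sup_E c<\infty$---is short, and given it the martingale property and the limit follow at once, the latter because the $L^1$ ergodic theorem holds from every starting point and can therefore be read off along the path of $Y$.
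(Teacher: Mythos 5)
Your proposal is correct and reproduces the paper's key ideas, but two of the four steps take genuinely different routes, so a comparison is worthwhile.

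The \emph{finiteness} step is essentially identical: you isolate the identity $\cL^{\hat{v},0}(\hat{v}^{-1}) = (c-\lambda_c)\hat{v}^{-1}$ (the paper's \eqref{eq: eqn-1/v}), use $\sup_E c<\infty$ from Assumption~\ref{ass: correl}, and conclude by localization and Fatou's lemma that $h^T(t,y)\le e^{(\bar c-\lambda_c)(T-t)}\hat v^{-1}(y)$; the paper's $K$ is your $\bar c - \lambda_c$. The \emph{limit} step is also the same: both invoke the $L^1$ ergodic theorem \eqref{eq: l1 ergodic thm} with $f=\hat v^{-1}\in L^1(E,\hat v^2\hat m)$ (justified by \eqref{ass: posrec l1}) to get pointwise convergence $h^T(t,z)\to \int_E\hat v\,\hat m$ for every fixed $z$, and then evaluate along the path $z=Y_t(\omega)$.

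Where you genuinely diverge: for the \emph{martingale property}, you use the Markov property of $(\hat\prob^\xi)$ directly to write $h^T(t,Y_t)=\expec^{\hat\prob^y}[\hat v^{-1}(Y_T)\mid\F_t]$, which is a closed (hence UI) martingale on $[0,T]$. This is cleaner than the paper's argument and, notably, logically \emph{independent} of the PDE/regularity step --- the paper instead first proves the PDE, deduces that $h^T(t,Y_t)$ is a nonnegative local martingale by the defining property of the martingale problem, and then upgrades to a true martingale by checking the expectation at $T$ equals $1$. Your route buys a decoupling of the martingale claim from the hardest technical step.

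For the \emph{regularity/PDE} step you propose either differentiating the transition-density representation under the integral sign, or a Cauchy--Dirichlet verification on truncated domains. The paper instead invokes Theorem~1 of Heath--Schweizer and spends most of the proof verifying its hypothesis (A3e'), which amounts to joint continuity of $h^T$; it does so by a mollifier/cutoff construction $f_m=\psi_m/\hat v$, the Feller property, and interior Schauder estimates for the approximating $h^{T,m}$. Your first alternative is the thinnest part of the proposal --- ``differentiating under the integral sign \ldots\ justified \ldots\ by interior Schauder estimates for $\hat p$'' is precisely the nontrivial claim: one needs both smoothness of the transition density of a non-uniformly-elliptic operator on a (possibly unbounded) interval and a domination argument to pass derivatives inside the expectation against an integrand $\hat v^{-1}$ that may blow up at the boundary. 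Your second alternative (Cauchy--Dirichlet on $(\epsilon,T)\times D$ with $\bar D\subset E$ plus a stopping-time verification) is in fact close in spirit to what Heath--Schweizer's theorem encapsulates, so it is the more robust of the two; but as written it is still a sketch, and the localized-domain/mollifier bookkeeping the paper carries out is exactly the work needed to make it rigorous. You correctly identify this as the main obstacle.
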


\begin{proof}
The proof consists of several steps.

\vspace{2mm}
\noindent{\underline{Step 1: $h^T(t, y)<\infty$ for all $(t, y) \in [0,T]\times E$}.}
Note that
\eqref{eq: eign eqn} implies
\begin{equation}\label{eq: eqn-1/v}
 \cL^{\hat{v}, 0} \pare{\frac{1}{\hat{v}}} = -\frac{\cL\hat{v}}{\hat{v}^2} = \frac{c-\lambda_c}{\hat{v}}.
\end{equation}
Combining \eqref{eq: eqn-1/v} with $\sup_{y \in E} c(y)<\infty$ in
Assumption \ref{ass: correl}, there exists some $K>0$ such that
\begin{equation*}
\left(\partial_t+\cL^{\hat{v},0}\right)\left(\frac{e^{-K t}}{\hat{v}(y)}\right) \leq 0.
\end{equation*}
Thus, using the strict positivity of $\hat{v}$, Assumption \ref{ass: regular coeffs} and Fatou's lemma, it follows that:
\begin{equation}\label{eq: h^T ev finite}
h^T(t,y) = \expec^{\hat{\prob}^y}\bra{\pare{\hat{v}(Y_{T-t})}^{-1}} \leq
\frac{e^{K(T-t)}}{\hat{v}(y)} < \infty.
\end{equation}

\vspace{2mm}
\noindent \underline{Step 2: $h^T\in C^{1,2}((0,T)\times E)$ satisfies \eqref{eq: pde-h}.}
To this end, the classical version of the Feynman-Kac formula (see Theorem~5.3 in \cite{friedman-stochastic} pp. 148) does not apply directly because a) the operator $\cL^{\hat{v}, 0}$ is not assumed to be uniformly elliptic on $E$, and b)
$(\hat{v})^{-1}$ may grow faster than polynomial near the boundary of $E$. Rather, the statement follows from Theorem~1 in \cite{Heath-Schweizer}, which yields that $h^T$ is a classical
solution of \eqref{eq: pde-h}.

To check that the assumptions of Theorem~1 in \cite{Heath-Schweizer} are satisfied, note that, since
$A$ is locally Lipschitz on $E$ due to Assumption~\ref{ass: regular coeffs},
Lemma 1.1 in \cite{friedman-stochastic} pp. 128 implies that $a$ is also
locally Lipschitz on $E$. On the other hand, the local Lipschitz continuity of
$B+A\hat{v}_y/\hat{v}$ is ensured by Assumption~\ref{ass: regular coeffs} and $\hat{v}\in C^2(E)$. Hence (A1) in \cite{Heath-Schweizer} is
satisfied. Second, (A2) in \cite{Heath-Schweizer} holds thanks to the
well-posedness of the martingale problem for $\hat{\cL}$ on $\Real^d\times E$, as proved in Lemma
\ref{lemma: WP hat-P}. Third, (A3') in
\cite{Heath-Schweizer}, (A3a')-(A3d') are clearly satisfied under our
assumptions.

In order to check (A3e'), it suffices to show that $h^T$ is
continuous in any compact sub-domain of $(0,T)\times E$. To this end, recall
that the domain is $E = (\alpha,\beta)$ for $-\infty \leq\alpha <
\beta\leq\infty$.  Let $\{\alpha_m\}$ and $\{\beta_m\}$ two sequences such that
$\alpha_m < \beta_m$ for all $m$,  $\alpha_m$ strictly decreases to $\alpha$,
and $\beta_m$ strictly increases to $\beta$. Set $E_m = (\alpha_m,\beta_m)$. For each $m$ there exists a function $\psi_m(y)\in
C^{\infty}(E)$ such that a) $\psi_m(y)\leq 1$, b) $\psi_m(y) = 1$ on $E_m$, and c)
$\psi_m(y) = 0$ on $E\cap E^c_{m+1}$.  To construct such $\psi_m$ let $\eps_m =
\frac{1}{3}\min\left\{\beta_{m+1}-\beta_m, \alpha_m - \alpha_{m+1}\right\}$
and then take
\begin{equation*}
\psi_m(y) = \eta_{\eps_m} * \indic_{\{\alpha_m - \eps_m, \beta_m + \eps_m\}}(y),
\end{equation*}
where $\eta_{\eps_m}$ is the standard mollifier and $*$ is the convolution operator.  Define the functions $f_m$
and $h^{T,m}$ by
\begin{equation*}
f_m(y) = \frac{\psi_m(y)}{\hat{v}(y)}\quad \text{ and } \quad h^{T,m}(t,y) = \expec^{\hat{P}^y}\bra{f_m(Y_{T-t})}.
\end{equation*}
By construction, for all $y\in E$, $\uparrow\lim_{m\uparrow\infty} f_m(y) =
(\hat{v}(y))^{-1}$.  It then follows from the monotone convergence theorem and \eqref{eq: h^T ev finite} that $\lim_{m\uparrow\infty} h^{T,m}(t,y) = h^{T}(t,y)$.

Since $\hat{v}\in C^2(E)$ and $\hat{v}>0$, each $f_m(y)\in C^{2}(E)$ is bounded. It then follows from the Feller property
for $\hat{\prob}^y$ (see Theorem 1.13.1 in \cite{Pinsky}) that $h^{T,m}$ is
continuous in $y$.  On the other hand, by construction of $f_m$ and \eqref{eq: eqn-1/v}, there exists a constant $K_m>0$ such that
\begin{equation}\label{eq: der est}
a|\dot{f}_m| \leq K_m,\qquad \left|\cL^{\hat{v},0} f_m\right| \leq K_m, \quad \text{ on } E.
\end{equation}
Moreover, Ito's formula gives that, for any $0\leq s\leq t\leq T$,
\[
 f_m(Y_t) = f_m(Y_s) + \int_s^t \cL^{\hat{v}, 0} f_m(Y_u) \, du + \int_s^t a \dot f_n(Y_u) \,d\hat{W}_u.
\]
Combining the previous equation with estimates in \eqref{eq: der est}, it follows that:
\[
 \sup_{y\in E} \left|\expec^{\hat{\prob}^y} \bra{f_m(Y_t) - f_m(Y_s)} \right| \leq K_m(t-s).
\]
Therefore, $h^{T,m}$ is uniformly continuous in $t$. Combining with the continuity of $h^{T,m}$ in $y$, we conclude that $h^{T,m}$ is jointly continuous in $(t,y)$ on $[0,T]\times E$.

Note that the operator $\cL^{\hat{v},0}$ is uniformly elliptic in the
parabolic domain $(0,T)\times E_m $. It then follows from a straightforward calculation that $h^{T,m}$ satisfies the differential equation:
\begin{equation*}
\begin{split}
\partial_t h^{T,m} + \cL^{\hat{v}, 0} h^{T,m} &=0\quad (t,y)\in (0,T) \times
E_m.\\
\end{split}
\end{equation*}
Note that $(h^{T,m})_{m \geq 0}$ is uniformly bounded from above by $h^T$, which is finite on $[0,T]\times E_m$.
Appealing to the
 \emph{interior Schauder estimate} (see e.g. Theorem~15 in
 \cite{friedman-parabolic} pp. 80), there exists a subsequence
 $(h^{T,m'})_{m'}$ which converges to $h^T$ uniformly in $(0,T)\times D$ for
 any compact sub-domain $D$ of $E_m$. Since each $h^{T, m'}$ is continuous and the
 convergence is uniform, we confirm that $h^T$ is continuous in  $(0,T)\times
 D$. Since the choice of $D$ is arbitrary in $E_m$,  (A3e') in
 \cite{Heath-Schweizer} is satisfied.  This proves that $h^T\in
 C^{1,2}((0,T)\times E)$ and satisfies \eqref{eq: pde-h}.

\vspace{2mm}
\noindent\underline{Step 3: Remaining statements.}
By definition of the martingale problem, the process in \eqref{eq: h^T mart}
is a non-negative local martingale, and hence a super-martingale. Furthermore, for $y\in E$, by
construction of $h^T$
\begin{equation*}
\expec^{\hat{\prob}^y}\bra{\frac{h^T(T,Y_T)}{h^T(0,y)}} =
\frac{1}{\expec^{\hat{\prob}^y}\bra{\hat{v}(Y_T)^{-1}}}\expec^{\hat{\prob}^y}\bra{\hat{v}(Y_T)^{-1}}
  = 1,
\end{equation*}
proving the martingale property on $[0,T]$.  Lastly, \eqref{eq: h^T limit val}
follows from \eqref{eq: l1 ergodic thm} in Lemma \ref{lemma: hat-P posrec} and
Remark \ref{rem: hat-P comp}, since \eqref{ass: posrec l1} in Assumption
\ref{ass: ptphat} implies that $(\hat{v})^{-1}\in L^1(E,\hat{v}^2\hat{m})$. Thus, for
all $t > 0, y\in E$,
\begin{equation*}
\lim_{T\rightarrow\infty} h^T(t,y) = \int_E\hat{v}(z)\hat{m}(z)dz.
\end{equation*}
which gives the result by taking $y = Y_t$ for a fixed $t$.
\end{proof}

%-----------------------------------------------%

The next step towards the verification result in Proposition
\ref{lemma: HJB verification} is to connect solutions $v^T$ to the PDE
in \eqref{eq: HJB} to the value function $u^T$ of \eqref{def: u}. This task requires additional notation.

Let Assumption \ref{ass: regular coeffs} and \ref{ass: WP P} hold. For any strictly positive $w(t,y)\in C^{1,2}((0,T)\times E)$, define the process
\begin{equation}\label{eq: D_hatpw_p}
D^w_t := \Exp\left(\int\left(-q\Upsilon'\Sigma^{-1}\mu + A\frac{w_y}{w}\right)'\frac{1}{a}dW -
  q\int\left(\Sigma^{-1}\mu + \Sigma^{-1}\Upsilon\delta\frac{w_y}{w}\right)'\sigma\bar{\rho}dB\right)_t,
\end{equation}
Here, $(B,W)$ is a $(d+1)$-dimensional $\prob^y$ Brownian Motion.  $B$ is formed
by the first $d$ coordinates of $\Xi$ and $W$ is from Remark \ref{rem: mart prob}.  Note that $B$ is $\prob^y$ independent of $Y$.  For $t\leq s\leq T$ define
$D^w_{t,s}$ as in Remark \ref{R:stoch_exp}. Denote the portfolio
\begin{equation}\label{eq: pi_w}
\pi^w=\frac{1}{1-p}\Sigma^{-1}\left(\mu + \delta\Upsilon
\frac{w_y}{w}\right),
\end{equation}
evaluated at $(t,Y_t)$, and by $X^{\pi,w}$ the corresponding wealth process. Set
$\eta^w = \delta w_y/w$ (also evaluated at $(t,Y_t)$), and define the process $M^{\eta,w}$ via
\begin{equation}\label{eq: eta_w}
M^{\eta,w}_t = e^{-\int_0^trd\tau}\Exp\left(\int\left(-\Upsilon'\Sigma^{-1}\mu +
    (A-\Upsilon'\Sigma^{-1}\Upsilon)\eta^w\right)'\frac{1}{a}dW -
  \int\left(\Sigma^{-1}\mu + \Sigma^{-1}\Upsilon\eta^w\right)'\sigma\bar{\rho}dB\right)_t.
\end{equation}
The following proposition is crucial to check the optimality of both
finite horizon and long-run optimal portfolios, and to compare their terminal wealths.  A similar statement for the long-run limit is in \cite[Theorem 7]{Guasoni-Robertson}.

%-----------------------------------------__%

\begin{prop}\label{prop: Ito-Girsanov result}
Let Assumptions \ref{ass: regular coeffs} and \ref{ass: WP P} hold. Assume
there exists a function $w: [0,T]\times E \to \Real$ and a constant
$\lambda\in\Real$, such that $w\in C^{1,2}((0,T)\times E,\Real)$ is strictly positive and
satisfies the differential expression
\begin{equation*} %\label{eq: f_HJB}
\partial_t w + \cL w + (c-\lambda)w = 0, \quad (t,y) \in (0,T) \times E.
\end{equation*}
Then the following conclusions hold:
\begin{enumerate}[(i)]
\item For all admissible portfolios $\pi$ and all $y\in E$, the process
  $X^{\pi}M^{\eta,w}$ is a non-negative supermartingale under $\prob^y$ where $M^{\eta,w}$ is given in \eqref{eq: eta_w}.
\item For all $t\leq s\leq T$ and $y\in E$ the processes $X^{\pi,w}$ and
  $M^{\eta,w}$ satisfy the $\prob^y$ almost sure identities
\begin{equation}\label{eq: primal dual as eq}
\begin{split}
\left(X^{\pi,w}_s\right)^p &= \left(X^{\pi,w}_t\right)^{p}\left(w(t,Y_t)e^{\lambda(s-t)}\right)^{\delta}D^w_{t,s}w(s,Y_s)^{-\delta},\\
\left(M^{\eta,w}_s\right)^q &= \left(M^{\eta,w}_t\right)^q\left(w(t,Y_t)e^{\lambda(s-t)}\right)^{\frac{\delta}{1-p}}D^w_{t,s}w(s,Y_s)^{-\frac{\delta}{1-p}},\\
\end{split}
\end{equation}
where $\pi^w$, $M^{\eta,w}$ and $D^{w}$ are as in \eqref{eq: pi_w}, \eqref{eq: eta_w} and \eqref{eq: D_hatpw_p} respectively.
\end{enumerate}
\end{prop}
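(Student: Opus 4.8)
The plan is to reduce both parts to It\^o's formula applied to suitable multiplicative functionals of the pairs $(X^{\pi,w},Y)$ and $(M^{\eta,w},Y)$, the drift being annihilated either by the PDE satisfied by $w$ (for part (ii)) or purely algebraically (for part (i)). Throughout one uses that, under $\prob^y$, the state variable obeys $dY_t = b(Y_t)\,dt + a(Y_t)\,dW_t$ (Remark~\ref{rem: mart prob}), so that its generator is $\cL_0 := \tfrac12 A\,\partial^2_{yy} + b\,\partial_y = \cL + q\Upsilon'\Sigma^{-1}\mu\,\partial_y$, and that the risky assets are driven by $Z = \rho W + \bar\rho B$ with $B$ a $\prob^y$-Brownian motion independent of $Y$. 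The algebraic facts that do the work are $\Upsilon = \sigma\rho a$ and $\bar\rho\bar\rho' = 1_d - \rho\rho'$, whence $\sigma\bar\rho\bar\rho'\sigma' = \Sigma - \Upsilon\Upsilon'/A$ and $\Upsilon'\Sigma^{-1}\Upsilon = A\,\rho'\rho$, together with $q = p/(p-1)$ and $\delta = (1 - q\rho'\rho)^{-1}$.

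For part (ii), write $g := (w_y/w)(t,Y_t)$; the wealth of $\pi^w$ obeys $dX^{\pi,w}_t/X^{\pi,w}_t = (r + (\pi^w)'\mu)\,dt + (\pi^w)'\sigma\,dZ_t$. I would set $\Phi_t := (X^{\pi,w}_t)^p\, w(t,Y_t)^\delta\, e^{-\lambda\delta t}$ and expand $d\Phi_t/\Phi_t$ by It\^o: collecting the drift of $(X^{\pi,w})^p$, the $(\partial_t+\cL_0)$-drift of $w(t,Y_t)^\delta$ with the attendant second-order terms, and the $\langle X^{\pi,w},Y\rangle$ cross term (which produces the factor $\Upsilon$), the $dt$-coefficient organizes into $\delta\,w^{-1}\bigl(\partial_t w + \cL w + (c-\lambda)w\bigr)$ --- zero by the PDE --- plus a remainder that vanishes upon substituting the definitions of $c$ (cf.~\eqref{eq: potential def}), $q$, $\delta$ and using $\Upsilon'\Sigma^{-1}\Upsilon = A\rho'\rho$. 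A short further check identifies the $dW$- and $dB$-loadings of $d\Phi_t/\Phi_t$ with the integrands of $D^w$ in \eqref{eq: D_hatpw_p}, so $\Phi$ is the stochastic exponential $\Phi_s = \Phi_t\,D^w_{t,s}$, which rearranges to the first line of \eqref{eq: primal dual as eq}. The second line is obtained the same way from $\Psi_t := (M^{\eta,w}_t)^q\, w(t,Y_t)^{\delta/(1-p)}\, e^{-\lambda\delta t/(1-p)}$: one reads the $dW$/$dB$ loadings of $M^{\eta,w}$ off \eqref{eq: eta_w}, computes its quadratic variation using $\sigma\bar\rho\bar\rho'\sigma' = \Sigma - \Upsilon\Upsilon'/A$, and finds the identical cancellations, now with the exponents $q-1 = -1/(1-p)$ and $\delta/(1-p)$; again $d\Psi_t/\Psi_t$ carries exactly the loadings of $D^w$, giving $\Psi_s = \Psi_t\,D^w_{t,s}$.

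For part (i), $X^\pi\ge 0$ by admissibility and $M^{\eta,w}>0$ as a stochastic exponential, so $X^\pi M^{\eta,w}\ge 0$. Writing $dX^\pi_t/X^\pi_t = (r+\pi'\mu)\,dt + \pi'\sigma\rho\,dW_t + \pi'\sigma\bar\rho\,dB_t$ and $dM^{\eta,w}_t/M^{\eta,w}_t = -r\,dt + \theta_W\,dW_t - \theta_B\,dB_t$, with $\theta_W$, $\theta_B$ the $dW$- and $dB$-integrands read off \eqref{eq: eta_w}, the product $X^\pi M^{\eta,w}$ has $dt$-coefficient $\pi'\mu + (\pi'\sigma\rho)\theta_W - (\pi'\sigma\bar\rho)\theta_B'$; inserting the definitions and using $\sigma\rho a = \Upsilon$ and $\sigma\bar\rho\bar\rho'\sigma' = \Sigma - \Upsilon\Upsilon'/A$ this collapses to $0$ for every admissible $\pi$ (the PDE is not even needed here). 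Hence $X^\pi M^{\eta,w}$ is a nonnegative $\prob^y$-local martingale, thus a supermartingale.

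I expect the main obstacle to be purely computational: in part (ii) the cancellations close only after repeated, careful use of the relations among $q$, $\delta$, $c$, $B$, $\rho$, $\Upsilon$, $\Sigma$, and the terms must be grouped so that the PDE residual $w^{-1}(\partial_t w + \cL w + (c-\lambda)w)$ is cleanly isolated; this is most delicate for the dual functional $\Psi$, where the quadratic-variation terms of $M^{\eta,w}$ have to be simplified through $\sigma\bar\rho\bar\rho'\sigma' = \Sigma - \Upsilon\Upsilon'/A$. A secondary, technical point is that $w$ is only assumed $C^{1,2}$ on the \emph{open} cylinder $(0,T)\times E$, so It\^o's formula is applied on $[t,s]$ with $0<t\le s<T$ and the identities (as well as the supermartingale property up to the horizon) are extended to the endpoints by continuity, using that $Y$ never exits $E$ under $\prob^y$ (Assumption~\ref{ass: WP P}).
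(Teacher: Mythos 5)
Your proposal is correct and follows essentially the same route as the paper's proof: It\^o's formula applied jointly to the wealth (or dual) process and $w(t,Y_t)$, with the drift annihilated by the PDE and the residual diffusion coefficients identified with those of $D^w$. The only difference is cosmetic: you work with the multiplicative functionals $\Phi$ and $\Psi$ directly and show $\Phi_s/\Phi_t = D^w_{t,s}$, whereas the paper takes logarithms (working with $\omega = \delta\log w$ and matching the $d\tau$, $dW$, $dB$ integrands of $\log D^w_{t,s}$ against those produced by Ito); these are mechanically equivalent. For part (i), the paper shows $M^{\eta,w}S^i$ is a nonnegative local martingale for each $i$ and then uses integration by parts to carry this over to every admissible $X^\pi$, while you compute the drift of $X^\pi M^{\eta,w}$ directly and observe it vanishes identically in $\pi$ — same algebra, organized by linearity. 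Your caveat about applying It\^o on compact subintervals of the open cylinder $(0,T)\times E$ and passing to the endpoints by continuity is a sound and welcome precision that the paper leaves implicit.
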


\begin{proof}
Given $w$, it is clear, using stochastic
integration by parts, that for $i=1,\dots, d$ the process
$M^{\eta,w} S^{i}$ is a non-negative supermartingale under $\prob^y$ for any
$y\in E$. Thus, part $(i)$ follows.

It remains the show the almost-sure identities. To this end fix $t\leq s\leq
T$.  By \eqref{eq: wealth dynamics} it follows that
\begin{equation}\label{eq: primal dual rel}
\begin{split}
\frac{\left(X^{\pi,w}_s\right)^p}{
\left(X^{\pi,w}_t\right)^p}&=\exp\left(\int_t^s\left(p\mu'\pi^w + pr
    -\frac{p}{2}(\pi^w)'\Sigma\pi^w\right)d\tau +
  p\int_t^s(\pi^w)'\sigma dZ_\tau\right),\\
\frac{\left(M^{\eta,w}_s\right)^q}{\left(M^{\eta,w}_t\right)^q}&=e^{-q\int_t^srd\tau}\Exp\left(\int\left(-\Upsilon'\Sigma^{-1}\mu +
    (A-\Upsilon'\Sigma^{-1}\Upsilon)\eta^w\right)'\frac{1}{a}dW -
  \int\left(\Sigma^{-1}\mu +
    \Sigma^{-1}\Upsilon\eta^w\right)'\sigma\bar{\rho}dB\right)^q_{t,s}.
\end{split}
\end{equation}
Taking logarithms and expanding $Z = \rho W + \bar{\rho} B$, the first equality is equivalent to
\begin{multline}\label{eq: trading_equal}
\int_t^s\left(p\mu'\pi^w + pr
    -\frac{p}{2}(\pi^w)'\Sigma\pi^w\right)d\tau +
  p\int_t^s(\pi^w)'\sigma\rho dW_\tau +
  p\int_t^s(\pi^w)'\sigma\bar{\rho} dB_\tau\\
= \delta\log w(t,Y_t) + \delta\lambda(s-t)
  + \log D^w_{t,s} - \delta\log w(s,Y_s).
\end{multline}
Multidimensional notation makes calculations more transparent. Set $\omega:= \delta\log w$.  Then $\omega$ solves the quasi-linear differential expression
\begin{equation*}
\partial_t \omega + \cL \omega + \frac{1}{2}\nabla
\omega'\left(A-q\Upsilon'\Sigma^{-1}\Upsilon\right)\nabla \omega + \delta(c-\lambda) = 0, \quad (t,y) \in (0,T) \times E,
\end{equation*}
because $\delta^{-1} A =
A-q\Upsilon'\Sigma^{-1}\Upsilon$. Since $\cL = (1/2)A\partial_{yy} +
(b-q\Upsilon'\Sigma^{-1}\mu)\partial_y$, Ito's formula implies that
\emph{under $\prob^y$}, after expanding $c=(1/\delta)\left(pr - (1/2)q\mu'\Sigma^{-1}\mu\right)$,
\begin{multline}\label{E:temp_p1}
\delta\log w(s,Y_s) - \delta\log w(t,Y_t) =\\
\int_t^s\left(q\mu'\Sigma^{-1}\Upsilon\nabla \omega -\frac{1}{2}\nabla
\omega'(A-q\Upsilon'\Sigma^{-1}\Upsilon)\nabla \omega + \delta\lambda - \left(pr-\frac{1}{2}q\mu'\Sigma^{-1}\mu\right)\right)d\tau + \int_t^s\nabla \omega'a dW_\tau.
\end{multline}
Again, using $\delta^{-1} A =
A-q\Upsilon'\Sigma^{-1}\Upsilon$ and noting that $\delta w_y/w = \omega_y$, after some simplifications it follows that:
\begin{equation}\label{E:temp_p2}
\begin{split}
\log D^w_{t,s} &=  \int_t^s\left(-q\mu'\Sigma^{-1}\Upsilon +
    \nabla\omega'(A-q\Upsilon'\Sigma^{-1}\Upsilon)\right)\frac{1}{a}dW_\tau\\
 &- q\int_t^s\left(\mu'\Sigma^{-1} + \nabla\omega\Upsilon'\Sigma^{-1}\right)\sigma\bar{\rho}dB_\tau\\
&+\int_t^s\left(-\frac{1}{2}q^2\mu'\Sigma^{-1}\mu +
  q(1-q)\mu'\Sigma^{-1}\Upsilon\nabla\omega-\frac{1}{2}\nabla\omega'\left(A-(2q-q^2)\Upsilon'\Sigma^{-1}\Upsilon\right)\nabla\omega\right)d\tau.
\end{split}
\end{equation}
Lastly, plugging in for $\pi^w$ yields
\begin{equation}\label{E:temp_p3}
\begin{split}
&\int_t^s\left(p\mu'\pi^w + pr
    -\frac{p}{2}(\pi^w)'\Sigma\pi^w\right)d\tau +
  p\int_t^s(\pi^w)'\sigma\rho dW_\tau +
  p\int_t^s(\pi^w)'\sigma\bar{\rho} dB_\tau\\
&=\int_t^s\left(pr -\frac{1}{2}q(1+q)\mu'\Sigma^{-1}\mu -
    q^2\mu'\Sigma^{-1}\Upsilon\nabla \omega - \frac{1}{2}q(q-1)\nabla
    \omega'\Upsilon'\Sigma^{-1}\Upsilon\nabla \omega\right)d\tau\\
&\qquad - q\int_t^s\left(\mu' +
      \nabla\omega'\Upsilon'\right)\Sigma^{-1}\sigma\rho dW_{\tau} -q\int_t^s\left(\mu' +
      \nabla \omega'\Upsilon'\right)\Sigma^{-1}\sigma\bar{\rho}dB_{\tau}.
\end{split}
\end{equation}
Now, using \eqref{E:temp_p1}, \eqref{E:temp_p2} and \eqref{E:temp_p3}, the
equality in \eqref{eq: trading_equal} follows by matching the respective
$d\tau$, $dW$ and $dB$ terms.

The proof for the second identity in \eqref{eq: primal dual as eq} is similar.  Given \eqref{eq: primal dual rel}, it suffices to show that, by
taking logarithms
\begin{multline}\label{eq: dual equal}
-q\int_t^s rd\tau + q\log \Exp\left(\int\left(-\Upsilon'\Sigma^{-1}\mu +
    (A-\Upsilon'\Sigma^{-1}\Upsilon)\eta^w\right)\frac{1}{a}dW -
  \int\left(\Sigma^{-1}\mu +
    \Sigma^{-1}\Upsilon\eta^w\right)\sigma\bar{\rho}dB\right)_{t,s} \\
= \frac{\delta}{1-p}\log w(t,Y_t) + \frac{\delta}{1-p}\lambda(s-t)
  + \log D^w_{t,s} - \frac{\delta}{1-p}\log w(s,Y_s).
\end{multline}
The equality in \eqref{E:temp_p1} (multiplied by $1/(1-p)$), combined
with that in \eqref{E:temp_p2} yield an expression for the right hand side of the
above equation in terms of integrals from $s$ to $t$ of $d\tau,dW$ and $dB$.  As
for the left hand side, a lengthy calculation shows that
\begin{equation}\label{E:temp_p4}
\begin{split}
&-q\int_t^s rd\tau+q\log \Exp\left(\int\left(-\Upsilon'\Sigma^{-1}\mu +
    (A-\Upsilon'\Sigma^{-1}\Upsilon)\eta^w\right)'\frac{1}{a}dW -
  \int\left(\Sigma^{-1}\mu +
    \Sigma^{-1}\Upsilon\eta^w\right)'\sigma\bar{\rho}dB\right)_{t,s} \\
=& \int_t^s\left(-qr - \frac{1}{2}q\mu'\Sigma^{-1}\mu -
  \frac{1}{2}q\nabla\omega'(A-\Upsilon'\Sigma^{-1}\Upsilon)\nabla\omega\right)d\tau\\
&+q\int_t^s\left(-\mu'\Sigma^{-1}\Upsilon +
  \nabla\omega'(A-\Upsilon'\Sigma^{-1}\Upsilon)\right)\frac{1}{a}dW_\tau\\
&-q\int_t^2\left(\mu'\Sigma^{-1} +
  \nabla\omega'\Upsilon'\Sigma^{-1}\right)\sigma\bar{\rho}dB_{\tau}.
\end{split}
\end{equation}
Thus, using \eqref{E:temp_p1}, \eqref{E:temp_p2} and \eqref{E:temp_p4}, the
equality in \eqref{eq: dual equal} follows by matching $d\tau,dW$ and $dB$ terms.
\end{proof}

%--------------------------------------%
The verification result for the finite horizon problem now follows.

\begin{prop}\label{lemma: HJB verification}
Let Assumptions \ref{ass: regular coeffs}, \ref{ass: WP P}, \ref{ass: correl},
and \ref{ass: ptphat} hold. Define $v^T$ by \eqref{def : v^T}.
Then:
\begin{enumerate}[(i)]
 \item  $v^T>0$, $v^T \in C^{1,2}((0,T)\times E)$, and it solves \eqref{eq: HJB}.
 \item  $u^T(t,x,y) = \frac{x^p}{p} \pare{v^T(t,y)}^\delta$ on $[0,T]\times
   \Real_+ \times E$ and $\pi^T$ in \eqref{eq: finite opt} is the optimal portfolio.
\end{enumerate}
\end{prop}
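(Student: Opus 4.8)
Since $\hat{v}\in C^2(E)$ is strictly positive (Assumption~\ref{ass: ptphat}) and, by Proposition~\ref{prop: v^T classical}, $h^T\in C^{1,2}((0,T)\times E)$ is finite and strictly positive, the function $v^T=e^{\lambda_c(T-t)}\hat{v}\,h^T$ is strictly positive and of class $C^{1,2}((0,T)\times E)$, and $v^T(T,\cdot)=\hat{v}\,h^T(T,\cdot)=1$ by \eqref{eq: pde-h}. To verify the PDE in \eqref{eq: HJB} I would simply differentiate the product: with $\cL=\tfrac12 A\partial_{yy}+B\partial_y$ the product rule gives $\cL v^T=e^{\lambda_c(T-t)}\big(h^T\cL\hat{v}+\hat{v}\,\cL^{\hat{v},0}h^T\big)$, the mixed term $A\hat{v}_y h^T_y$ being exactly what turns $\hat{v}\,\cL h^T$ into $\hat{v}\,\cL^{\hat{v},0}h^T$ because $\cL^{\hat{v},0}=\cL+A(\hat{v}_y/\hat{v})\partial_y$ by \eqref{eq: cl hatv def}. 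Adding $\partial_t v^T=-\lambda_c v^T+e^{\lambda_c(T-t)}\hat{v}\,\partial_t h^T$ and $c\,v^T$ and regrouping,
\[
\partial_t v^T+\cL v^T+c\,v^T = e^{\lambda_c(T-t)}\Big(h^T\big(\cL\hat{v}+c\hat{v}-\lambda_c\hat{v}\big)+\hat{v}\big(\partial_t h^T+\cL^{\hat{v},0}h^T\big)\Big)=0,
\]
the first bracket vanishing by the eigenvalue equation \eqref{eq: eign eqn} and the second by \eqref{eq: pde-h}.

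\textbf{Part (ii).} By part (i), $v^T$ satisfies the hypotheses of Proposition~\ref{prop: Ito-Girsanov result} with $w=v^T$ and $\lambda=0$, and the portfolio $\pi^{v^T}$ of \eqref{eq: pi_w} is precisely $\pi^T$ of \eqref{eq: finite opt}. Evaluating \eqref{eq: primal dual as eq} at $s=T$ and using $v^T(T,\cdot)\equiv 1$ yields, for every $t\le T$, the $\prob^y$-a.s.\ identities $(X^{\pi^T}_T)^p=(X^{\pi^T}_t)^p\,v^T(t,Y_t)^\delta\,D^{v^T}_{t,T}$ and $(M^{\eta,v^T}_T/M^{\eta,v^T}_t)^q=v^T(t,Y_t)^{\delta/(1-p)}\,D^{v^T}_{t,T}$. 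For the upper bound I would take an arbitrary admissible $\pi$ with $X^\pi_t=x$, apply the elementary inequality $x^p/p\le xz-z^q/q$ (valid for all $x,z>0$ and $0\ne p<1$) with $z=\nu\,M^{\eta,v^T}_T/M^{\eta,v^T}_t$ for a constant $\nu>0$, take $\expec^{\prob^y}[\,\cdot\mid\F_t]$, bound the linear term by $\nu x$ using the $\prob^y$-supermartingale property of $X^\pi M^{\eta,v^T}$ from Proposition~\ref{prop: Ito-Girsanov result}(i), rewrite the dual term through the second identity, and optimize over $\nu$; this gives $\expec^{\prob^y}[(X^\pi_T)^p/p\mid\F_t]\le (x^p/p)\,v^T(t,Y_t)^\delta$, i.e.\ $u^T(t,x,y)\le (x^p/p)\,v^T(t,y)^\delta$. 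For attainment, the first identity gives $\expec^{\prob^y}[(X^{\pi^T}_T)^p/p\mid\F_t]=(x^p/p)\,v^T(t,Y_t)^\delta\,\expec^{\prob^y}[D^{v^T}_{t,T}\mid\F_t]$. Thus both halves of the argument reduce to the single fact that $D^{v^T}_{t,\cdot}$ is a genuine $\prob^y$-martingale on $[t,T]$, not merely a nonnegative local martingale — and this is the main obstacle (for $p<0$ it is even needed to compute the dual term in the upper bound, where the supermartingale inequality alone points the wrong way).

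I would establish the martingale property of $D^{v^T}$ by a change-of-measure / non-explosion argument. By construction $D^{v^T}_{0,\cdot}$ is the candidate density process of the measure under which $(R,Y)$ follows the system \eqref{eq: phat sde} with $\hat{v}$ replaced by $v^T$, so it suffices that this system have a unique, non-explosive solution on $[0,T]$. I would split the change of measure: first pass to $\hat{\prob}^y$, whose density with respect to $\prob^y$ is a true martingale because $\hat{\prob}^y\sim\prob^y$ (Lemma~\ref{lemma: WP hat-P}(ii)); then absorb the residual drift correction, which involves only the state function $h^T_y/h^T$ and whose component along the Brownian motion driving $Y$ is exactly the stochastic exponential $h^T(t,Y_t)/h^T(0,y)$ — a true $\hat{\prob}^y$-martingale by Proposition~\ref{prop: v^T classical} — while its component along the Brownian motion driving $R$ (independent of $Y$ under $\hat{\prob}^y$) is, conditionally on the path of $Y$, a Gaussian exponential with expectation one. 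Hence $\expec^{\prob^y}[D^{v^T}_{0,T}]=1$, and by the Markov property $\expec^{\prob^y}[D^{v^T}_{t,T}\mid\F_t]=1$. Feeding this back, the upper bound is attained by $\pi^T$, so $u^T(t,x,y)=(x^p/p)\,v^T(t,y)^\delta$ and $\pi^T$ is optimal; interior parabolic regularity (smooth coefficients) upgrades $v^T$ to the smoothness in $y$ that makes $\pi^T$ a well-defined $R$-integrable portfolio. The only genuinely computational step is matching the Girsanov kernels in the splitting above.
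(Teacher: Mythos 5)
Your proposal is correct and follows the same overall architecture as the paper's proof: build $v^T=e^{\lambda_c(T-\cdot)}\hat v\,h^T$, verify it is a positive $C^{1,2}$ solution of \eqref{eq: HJB} via the product rule together with \eqref{eq: eign eqn} and \eqref{eq: pde-h}, plug $w=v^T$, $\lambda=0$ into Proposition~\ref{prop: Ito-Girsanov result}, and observe that both the upper bound for $u^T$ and its attainment by $\pi^T$ reduce to the single identity $\expec^{\prob^y}\big[D^{v^T}_T\big]=1$. The differences are in presentation, not substance. Where the paper cites external results, you make the argument self-contained: for the upper bound you spell out the Fenchel--Young duality rather than invoking \citet*[Lemma 5]{Guasoni-Robertson} (and you correctly note that for $p<0$ the dual term needs the martingale, not merely supermartingale, property of $D^{v^T}$, while for $p\in(0,1)$ the supermartingale inequality happens to point the right way for the bound alone); for the martingale identity you pass first to $\hat\prob^y$ via $D^{\hat v}$ and then condition on the $Y$-path to reduce the residual $d\hat B$-factor to a Gaussian exponential of mean one, whereas the paper performs the equivalent manipulation directly under $\prob^y$ using the $\prob^y$-independence of $B$ and $Y$ (via \citet*[Lemma 4.8]{Karatzas-Kardaras}) and only then changes measure. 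Both routes rest on the same two facts --- $D^{\hat v}$ is the true $\prob^y$-density of $\hat\prob^y$ (Lemma~\ref{lemma: WP hat-P}(ii) plus Cheridito--Filipovi\'c--Yor), and $h^T(t,Y_t)/h^T(0,y)$ is a genuine $\hat\prob^y$-martingale (Proposition~\ref{prop: v^T classical}) --- so the argument is essentially identical, just unpacked. One small caution, which applies equally to the paper's version: the conditional-Gaussian step implicitly requires $\int_0^T a^2\,(h^T_y/h^T)^2\,dt<\infty$ a.s.\ for the stochastic exponentials to be well defined; this is guaranteed by the $C^{1,2}$ regularity of $h^T$ and the non-explosion of $Y$, but it is worth stating.
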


\begin{proof}
Clearly, the positivity of $h^T$ and $\hat{v}$ yield that of
$v^T$. Furthermore, given that $h^T$ solves \eqref{eq: pde-h}, long but straightforward
calculations using \eqref{eq: eign eqn} show that $v^T$ solves \eqref{eq: HJB}. Moreover, $v^T \in C^{1,2}((0,T)\times E)$ because $\hat{v}\in C^2(E)$
and $h^T\in C^{1,2}((0,T)\times E)$. This proves $(i)$.

As for part $ii)$, by applying
Proposition \ref{prop: Ito-Girsanov result} to $w=v^T,\lambda = 0$ it follows by evaluating
\eqref{eq: primal dual as eq} at $t=t,s=T$ that for the
portfolio in \eqref{eq: finite opt} and the process $M^{\eta, {v^T}}$
  from \eqref{eq: eta_w} (recall the definition of $D^w$ given in \eqref{eq: D_hatpw_p})
\begin{equation}\label{eq: opt wealth integ value}
\expec^{\prob^{\xi,t}}\bra{\frac{1}{p}\left(X^{\pi^T}_T\right)^p} =
\frac{x^p}{p}\left(v^T(t,y)\right)^{\delta}\expec^{\prob^{\xi,t}}\bra{D^{v^T}_{t,T}},
\end{equation}
since $v^T(T,y) = 1$.  In a similar manner
\begin{equation*}
\frac{x^p}{p}\expec^{\prob^{\xi,t}}\bra{\left(M^{\eta,{v^T}}_T\right)^q}^{1-p} =
\frac{x^p}{p}\left(v^T(t,y)\right)^{\delta}\expec^{\prob^{\xi,t}}\bra{D^{v^T}_{t,T}}^{1-p}.
\end{equation*}
Here $\xi= (x,y)$ and $(\prob^{\xi, t})_{\xi}$ is the solution to the martingale problem for $L$ in the canonical state space whose coordinate process starts from time $t$.
Therefore, thanks to duality results for power utility between payoffs and stochastic discount factors \cite[Lemma 5]{Guasoni-Robertson}, the claims will follow if $D^{v^T}$ is a $\prob^y$
martingale for all $y\in E$. It suffices to show $1 =
\expec^{\prob^y}\bra{D^{v^T}_T}$. It follows from \eqref{def : v^T} that
\begin{equation}\label{eq: log der rel}
\frac{v^T_y}{v^T} = \frac{\hat{v}_y}{\hat{v}} + \frac{h^T_y}{h^T}.
\end{equation}
Using this, the $\prob^y$ independence of $Y$ and $B$ implies \cite[Lemma 4.8]{Karatzas-Kardaras}
\begin{equation}\label{eq: D v^T temp 1}
\begin{split}
&\expec^{\prob^y}\bra{D^{v^T}_T} \\
&=\expec^{\prob^y}\bra{\Exp\left(\int\left(-q\Upsilon'\Sigma^{-1}\mu + A\frac{v^T_y}{v^T}\right)'\frac{1}{a}dW -
  q\int\left(\Sigma^{-1}\mu +
    \Sigma^{-1}\Upsilon\delta\frac{v^T_y}{v^T}\right)'\sigma\bar{\rho}dB\right)_T},\\
&=\expec^{\prob^y}\bra{\Exp\left(\int\left(-q\Upsilon'\Sigma^{-1}\mu + A\left(\frac{\hat{v}_y}{\hat{v}} +
  \frac{h^T_y}{h^T}\right)\right)'\frac{1}{a}dW -
  q\int\left(\Sigma^{-1}\mu +
    \Sigma^{-1}\Upsilon\delta\left(\frac{\hat{v}_y}{\hat{v}} +
  \frac{h^T_y}{h^T}\right)\right)'\sigma\bar{\rho}dB\right)_T},\\
&= \expec^{\prob^y}\bra{\Exp\left(\int\left(-q\Upsilon'\Sigma^{-1}\mu + A\left(\frac{\hat{v}_y}{\hat{v}} +
  \frac{h^T_y}{h^T}\right)\right)'\frac{1}{a}dW\right)_T},\\
&=\expec^{\prob^y}\bra{\Exp\left(\int\left(-q\Upsilon'\Sigma^{-1}\mu + A\left(\frac{\hat{v}_y}{\hat{v}} +
  \frac{h^T_y}{h^T}\right)\right)'\frac{1}{a}dW -
  q\int\left(\Sigma^{-1}\mu +
    \Sigma^{-1}\Upsilon\delta\frac{\hat{v}_y}{\hat{v}}\right)'\sigma\bar{\rho}dB\right)_T}.\\
\end{split}
\end{equation}
Note that for $w=\hat{v}$ the process of \eqref{eq: D_hatpw_p} specifies to
\begin{equation}\label{eq: D hatv temp 1}
D^{\hat{v}}_t = \Exp\left(\int\left(-q\Upsilon'\Sigma^{-1}\mu + A\frac{\hat{v}_y}{\hat{v}}\right)'\frac{1}{a}dW -
  q\int\left(\Sigma^{-1}\mu +
    \Sigma^{-1}\Upsilon\delta\frac{\hat{v}_y}{\hat{v}}\right)'\sigma\bar{\rho}dB\right)_t.
\end{equation}
This is precisely the stochastic exponential that changes the dynamics from $\prob^y$ to those for $\hat{\prob}^y$. It follows from part (ii) of Lemma~\ref{lemma: WP hat-P} and the backward martingale theorem \cite[Remark
2.3.2]{Cheridito-Filipovic-Yor} that $D^{\hat{v}}$ is a $(\prob^y, (\F_t)_{t\geq 0})$ martingale, whence
\begin{equation}\label{eq: p phat rn}
\frac{d\hat{\prob}^y}{d\prob^y}\bigg|_{\F_t} = D^{\hat{v}}_t.
\end{equation}
Furthermore, the Brownian motion $\hat{W}$ from \eqref{eq: phat sde} is
related to $W$ by $d\hat{W}_t = dW_t + (q\rho\sigma'\Sigma^{-1}\mu -
a\hat{v}_y/\hat{v}) dt$. Using this, for all $t\leq T$
\begin{equation}\label{eq: D v^T temp}
\begin{split}
&\Exp\left(\int\left(-q\Upsilon'\Sigma^{-1}\mu + A\left(\frac{\hat{v}_y}{\hat{v}} +
  \frac{h^T_y}{h^T}\right)\right)'\frac{1}{a}dW -
  q\int\left(\Sigma^{-1}\mu +
    \Sigma^{-1}\Upsilon\delta\frac{\hat{v}_y}{\hat{v}}\right)'\sigma\bar{\rho}dB\right)_t\\
&\qquad = D^{\hat{v}}_t \Exp\pare{\int a\frac{h^T_y}{h^T}d\hat{W}}_t
 = D^{\hat{v}}_t\frac{h^T(t,Y_t)}{h^T(0,y)}.
\end{split}
\end{equation}
The last equality follows from the fact that $h^T$ solves the differential
expression in \eqref{eq: pde-h} combined with Ito's formula. The second to last equality follows from the identity for any adapted, integrable processes $a,b$ and Wiener process $W$ that
\begin{equation*}
\Exp\left(\int (a_s+b_s) dW_s\right) = \Exp\left(\int a_s
  dW_s\right)\Exp\left(\int b_sd W_s-\int b_sa_sds\right).
\end{equation*}
Using \eqref{eq: D v^T temp} and \eqref{eq: p phat rn} in \eqref{eq: D v^T temp 1} and applying Proposition \ref{prop: v^T classical} it holds that
%\begin{equation}\label{eq: D v^T temp 2}
\[
\expec^{\prob^y}\bra{D^{v^T}_T} =
\expec^{\hat{\prob}^y}\bra{\frac{h^T(T,Y_T)}{h^T(0,y)}} = 1,
\]
%\end{equation}
which is the desired result.
\end{proof}

\subsection{Conditional densities and wealth processes}\label{subsec: densities conv}

The last prerequisite for the main result is to relate the terminal wealths resulting from using the finite horizon optimal strategies $\pi^T$ of \eqref{eq: finite opt} and the long-run optimal strategy
$\hat{\pi}$ of \eqref{eq: long-run opt}. Recall the definition of $D^{w}$ from
\eqref{eq: D_hatpw_p}, and consider $w= v^T$ and $w= \hat{v}$. A similar calculation to \eqref{eq: D v^T temp} using \eqref{eq: log der rel}
and \eqref{eq: D hatv temp 1} gives
\begin{equation*}
D^{v^T}_t = D^{\hat{v}}_t\frac{h^T(t,Y_t)}{h^T(0,y)}\Exp\pare{- \int a
  \frac{h^T_y}{h^T}\Delta' d\hat{B}}_t,
\end{equation*}
where
\begin{equation}\label{eq: Delta def}
\Delta = q\delta\rho'\bar{\rho},
\end{equation}
and the Brownian Motion $\hat{B}$ is
from \eqref{eq: phat sde} and related to $B$ by $\hat{B} = B +
q\bar{\rho}\sigma'\Sigma^{-1}\mu + \Delta'a\hat{v}_y / \hat{v}$. Dividing by
$D^{\hat{v}}_t$ gives
\begin{equation}\label{eq: Z ratio t}
\frac{D^{v^T}_t}{D^{\hat{v}}_t} = \frac{h^T(t, Y_t)}{h^T(0,y)} \, \Exp\pare{- \int a
  \frac{h^T_y}{h^T}\Delta' d\hat{B}}_t.
\end{equation}
For $w= v^T$ and $\lambda = 0$, \eqref{eq: primal dual as eq} gives (since all assumptions hold) almost surely $\prob^y$ (and hence $\hat{\prob}^y$) for
any $0\leq t\leq T$
\begin{equation}\label{eq: wealth-finite}
\pare{X^{0,T}_t}^p = \pare{X^{\pi, v^T}_t}^p =
x^pD^{v^T}_{t} \pare{\frac{v^T(0,y)}{v^T(t,Y_t)}}^{\delta} =
x^pD^{v^T}_{t}e^{\delta\lambda_c t} \pare{\frac{\hat{v}(y)h^T(0,y)}{\hat{v}(Y_t)h^T(t,Y_t)}}^{\delta},
\end{equation}
where the last equality uses \eqref{def : v^T}. Similarly, for
$w=\hat{v}$ and $\lambda = \lambda_c$, it
follows that for the long-run optimal strategy $\hat{\pi}$ defined in
\eqref{eq: long-run opt},\eqref{eq: primal dual as eq} gives the $\prob^y$ ($\hat{\prob}^y$) almost sure
equality, for each $0\leq t\leq T$
\begin{equation}\label{eq: wealth-long-run}
   \pare{\hat{X}_t}^p = \pare{X^{\hat{\pi}}_t}^p
   = x^pD^{\hat{v}}_{t}e^{\delta\lambda_c t} \pare{\frac{\hat{v}(y)}{\hat{v}(Y_t)}}^{\delta}.
\end{equation}
Therefore, \eqref{eq: wealth-finite}, \eqref{eq: wealth-long-run} and
\eqref{eq: Z ratio t} imply
\begin{equation}\label{eq: wealth ratio t}
\frac{X^{0,T}_t}{\hat{X}_t} =
\left(\frac{D^{v^T}_{t}}{D^{\hat{v}}_{t}}\right)^{1/p}\left(\frac{h^T(t,Y_t)}{h^T(0,y)}\right)^{-\delta/p}
= \pare{\frac{h^T(t, Y_t)}{h^T(0,y)}}^{\frac{1-\delta}{p}}\Exp\pare{-\int a \Delta \frac{h^T_y}{h^T} \, d\hat{B}}_t^{\frac1p}.
\end{equation}
where the last equality uses \eqref{eq: Z ratio t}.  Equations \eqref{eq: Z ratio t} and \eqref{eq: wealth ratio t} will be used in the next section.

\begin{rem}
The proof of Proposition \ref{lemma: HJB verification} showed $D^{v^T}$ is a
$(\prob^y,(\F_t)_{0\leq t\leq T})$ martingale for each $y\in E$. Thus, \eqref{eq: opt wealth integ value} and
  \eqref{eq: wealth-finite} in conjunction with \eqref{def: p^T} implies that
\begin{equation}\label{eq: p p^t rn}
D^{v^T}_t = \frac{d\prob^{T,y}}{d\prob^y}\bigg|_{\F_t}.
\end{equation}
\end{rem}

\subsection{Proof of main results in Section~\ref{subsec: turnpike for diffusion}}\label{subsec: proof diffusion}

%------------------------------_%
%--------------------------------%

\begin{proof}[Proof of Lemma \ref{lem: mart prob}]

By \cite[Theorem 5.1.5]{Pinsky}, Assumptions \ref{ass: regular coeffs} and
\ref{ass: WP P} ensure a solution, $(\prob^y_Y)_{y\in E}$ to the martingale problem for $\widetilde{\cL}$ on $E$, where
\begin{equation*}
\widetilde{\cL} = \frac{1}{2}A\partial^2_{yy} + b\partial_y.
\end{equation*}
Let $\xi = (z,y)\in\Real^{d}\times E$. The result now follows by considering
the family of measures $(\prob^{\xi} := W^{z,d}\times \prob^y_Y)_{\xi\in\Real^{d}\times E}$ on $(\Omega,\F)$
where $W^{z,d}$ is $d$-dimensional Wiener measure corresponding to a Brownian
Motion starting at $z$.

\end{proof}

%-------------------------------%
%-------------------------------%

\begin{proof}[Proof of Proposition \ref{prop: turnpike holds}]
By Theorem 18 in \cite*{Guasoni-Robertson}, under Assumptions \ref{ass: regular coeffs}, \ref{ass: WP P}, and \ref{ass: correl}, \eqref{eq: c decay requirement} yields the existence of a function $\hat{v}$ which satisfies
\eqref{eq: eign eqn}, \eqref{ass: recurrent}, along with the first inequality
in \eqref{ass: posrec l1}. By Holder's inequality, \eqref{eq: m prob requirement} ensures that the second inequality in \eqref{ass: posrec l1}
holds as well, proving the assertion.
\end{proof}

%-------------------------%
%-------------------------%

\begin{proof}[Proof of Lemma~\ref{thm: conv cond den}]
Recall the notation of Section \ref{subsec: densities conv}. From \eqref{eq: p phat rn},
  \eqref{eq: p p^t rn} and \eqref{eq: Z ratio t}, the limit in
 \eqref{eq: conv cond den} holds provided that\footnote{The notation $\prob^{y}\text{- lim}$ is short for the limit in probability}:
 \begin{equation}\label{eq: conv ratio Z}
  \hat{\prob}^y \text{-} \lim_{T\rightarrow \infty} \frac{h^T(t,
    Y_t)}{h^T(0,y)} \, \Exp\pare{- \int a \Delta \frac{h^T_y}{h^T} \,
    d\hat{B}}_t=1 .
 \end{equation}
where $\Delta$ is from \eqref{eq: Delta def}. Set $L^T_t = h^T(t,Y_t)/h^T(0,y)$.  Proposition \ref{prop: v^T classical}
implies that a) for each $T$, $L^T$ is a positive $\hat{\prob}^y$ martingale on
$[0,T]$ with expectation $1$ and b) for each $t\geq 0$, $\lim_{T\rightarrow\infty} L^T_t = 1$
almost surely $\hat{\prob}^y$.  Therefore, Fatou's lemma gives $1\geq \lim_{T\to \infty}
 \expec^{\hat{\prob}^y}[L_t^T] \geq \expec^{\hat{\prob}^y} [\liminf_{T\to \infty}
 L^T_t] =1$, which implies $\lim_{T\to \infty}
 \expec^{\hat{\prob}^y}\bra{|L_t^T-1|} =0$ by Scheff\'{e}'s lemma. As shown in \eqref{eq: D v^T temp}, $L^T_t = \Exp\pare{\int a h^T_y / h^T
  d\hat{W}}_t$. Lemma \ref{lemma: stoch-log conv} thus yields
\begin{equation*}
\hat{\prob}^y\ -\ \lim_{T\rightarrow
   \infty} \bra{\int a \frac{h^T_y}{h^T} \, d\hat{W}, \int a \frac{h^T_y}{h^T}
   \, d\hat{W}}_t =0.
\end{equation*}
Observing that $\|\Delta\|^2$ is a constant
 (by Assumption~\ref{ass: correl}), the previous identity implies that
 $\hat{\prob}^y$-$\lim_{T\rightarrow \infty} \bra{\int a \Delta h^T_y/ h^T
   \, d\hat{B}, \int a \Delta h^T_y/ h^T \, d\hat{B}}_t =0$, whence $\hat{\prob}^y$-$\lim_{T\rightarrow \infty} \int_0^t a \Delta h^T_y/h^T
 \, d\hat{B}=0$, which implies $\hat{\prob}^y$-$\lim_{T\rightarrow \infty}\Exp\pare{\int a
   \Delta h^T_y/ h^T \, d\hat{B}}_t =1$, i.e., the second term on the
 left-hand-side of \eqref{eq: conv ratio Z} also converges to $1$. This
 concludes the proof of \eqref{eq: conv ratio Z}.
\end{proof}

%----------------------------__%
%-------------------------------%

\begin{proof}[Proof of Theorem~\ref{thm: power 1-factor}]
 Let $\mathcal{S}_T$ be either $\{\sup_{u\in [0,t]} \left|r^T_u - 1\right| \geq \epsilon\}$ or $\{\bra{\Pi^T, \Pi^T}_t \geq \epsilon\}$, which are both $\F_t$-measurable. It follows from Theorem~\ref{thm: opt-port conv} and Remark~\ref{rem: finite time} part iii) that
 \begin{equation}\label{eq: conv prob p^T}
\lim_{T\to \infty}
\expec^{\hat{\prob}^y}\bra{\left.\frac{d\prob^{T,y}}{d\hat{\prob}^y}
  \right|_{\F_t} \, \indic_{\mathcal{S}_T}} =0.
\end{equation}
 On the other hand, \eqref{eq: conv cond den} and Scheff\'{e}'s lemma combined imply that
 \begin{equation*}
 \lim_{T\rightarrow \infty}
 \expec^{\hat{\prob}^y} \bra{\left|\left.\frac{d\prob^{T,y}}{d\hat{\prob}^y}\right|_{\F_t} - 1\right|} =0.
\end{equation*}
Hence, combining the previous identity with \eqref{eq: conv prob p^T}, it follows that $\lim_{T\to \infty} \hat{\prob}^y(\mathcal{S}_T) =  0$. Since  $\hat{\prob}^y \sim \prob$ on $\F_t$ from Proposition
 \ref{prop: Ito-Girsanov result}, it follows that
 \[
  \lim_{T\to \infty} \prob^y(\mathcal{S}_T) = \lim_{T\to \infty}\expec^{\hat{\prob}^y} \bra{\left.\frac{d\prob^y}{d\hat{\prob}^y}\right|_{\F_t}\, \indic_{\mathcal{S}_T}} =0,
 \]
 where the last equality follows from the dominated convergence theorem.
\end{proof}

%------------------------------%
%------------------------------%

\begin{proof}[Proof of Theorem~\ref{cor: power 1-factor myopic}]
A similar argument to the one in the proof of Lemma \ref{thm: conv cond den}, combined with \eqref{eq: wealth ratio t}, yields that $\hat{\prob}^y$-$\lim_{T\to \infty} X^{0,T}_t / \hat{X}_t =1$. On the
 other hand, Theorem~\ref{thm: power 1-factor} part a), combined with the
 equivalence between $\prob$ and $\hat{\prob}^y$, implies that
 $\hat{\prob}^y$-$\lim_{T\to \infty} X^{1,T}_t / X^{0,T}_t =1$. Hence the last two identities combined give $\hat{\prob}^y$-$\lim_{T\to \infty} \hat{r}^T_t
 =1$. Now recall that $\hat{\pi}$ is the optimal portfolio for the logarithmic
 investor under $\hat{\prob}^y$, it then follows from the num\'{e}raire property
 of $\hat{\pi}$ that $\hat{r}^T_\cdot$ is a
 $\hat{\prob}^y$-supermartingale, which implies that $\lim_{T\to \infty}
 \expec^{\hat{\prob}^y} \bra{|\hat{r}^T_t-1|} =0$, by Fatou's lemma and Scheff\'{e}'s lemma. As a result, the statements follow applying Lemma~\ref{lemma: stoch-log conv} under the probability $\hat{\prob}^y$, and remain valid under the equivalent probability $\prob^y$.
\end{proof}

\bibliographystyle{plainnat}
\bibliography{biblio}
\end{document}